\title[ ]{Emergent behaviors of discrete Lohe aggregation flows}
\author[Choi]{Hyungjun Choi}
\address[Hyungjun Choi]{\newline Department of Mathematical Sciences\newline Seoul National University, Seoul 08826, Republic of Korea}
\email{hana5673@snu.ac.kr}
\author[Ha]{Seung-Yeal Ha}
\address[Seung-Yeal Ha]{\newline Department of Mathematical Sciences and Research Institute of Mathematics \newline Seoul National University, Seoul 08826, Republic of Korea}
\email{syha@snu.ac.kr}
\author[Park]{Hansol Park}
\address[Hansol Park]{\newline Department of Mathematical Sciences\newline Seoul National University, Seoul 08826, Republic of Korea}
\email{hansol960612@snu.ac.kr}
\begin{document}
\newtheorem{theorem}{Theorem}[section]
\newtheorem{lemma}[theorem]{Lemma}
\newtheorem{corollary}[theorem]{Corollary}
\newtheorem{proposition}[theorem]{Proposition}
\newtheorem{remark}[theorem]{Remark}
\newtheorem{definition}[theorem]{Definition}

\renewcommand{\theequation}{\thesection.\arabic{equation}}
\renewcommand{\thetheorem}{\thesection.\arabic{theorem}}
\renewcommand{\thelemma}{\thesection.\arabic{lemma}}

\newcommand{\bx}{\mbox{\boldmath $x$}}
\newcommand{\by}{\mbox{\boldmath $y$}}
\newcommand{\bu}{\mbox{\boldmath $u$}}
\newcommand{\bv}{\mbox{\boldmath $v$}}
\newcommand{\bz}{\mbox{\boldmath $z$}}

\newcommand{\bp}{\bold p}
\newcommand{\bq}{\bold q}
\newcommand{\br}{\bold r}
\newcommand{\bbs}{\mathbb{S}}
\newcommand{\bb}{\mathbb}
\newcommand{\mbf}{\mathbf}
\newcommand{\mcal}{\mathcal}
\newcommand{\ts}{\textsuperscript}
\newcommand{\N}{\bb{N}}
\newcommand{\Z}{\bb{Z}}
\newcommand{\Q}{\bb{Q}}
\newcommand{\R}{\bb{R}}
\newcommand{\C}{\bb{C}}
\newcommand{\del}{\partial}
\newcommand{\T}{^\intercal}
\newcommand{\abs}[1]{\left\lvert #1 \right\rvert}
\newcommand{\norm}[1]{\left\lVert #1 \right\rVert}
\newcommand{\snorm}[1]{\lVert #1 \rVert}
\newcommand{\inprod}[2]{\left\langle #1, #2 \right\rangle}
\newcommand{\tr}[1]{\mathrm{tr}\left( #1 \right)}
\renewcommand{\tilde}{\widetilde}
\renewcommand{\bar}{\overline}
\newcommand{\undtxt}[2][]{\underset{\textrm{#1}}{#2}}
\newcommand{\upindex}[1]{^{(#1)}}
\newcommand{\F}{\mathrm{F}}
\newcommand{\op}{\mathrm{op}}
\newcommand{\vDelta}{\Delta^{\mcal{V}}}
\newcommand{\tvDelta}{\tilde{\Delta}^{\mcal{V}}}

\subjclass[2020]{82C10 82C22} 
\keywords{Aggregation, discretization, emergence, Lohe matrix model, synchronization}

{\thanks{\textbf{Acknowledgment.} The work of S.-Y. Ha was supported by National Research Foundation of Korea (NRF-2020R1A2C3A01003881).}}

\begin{abstract}
The Lohe sphere model and the Lohe matrix model are prototype continuous aggregation models on the unit sphere and the unitary group, respectively. These models have been extensively investigated in recent literature. In this paper, we propose several discrete counterparts for the continuous Lohe type aggregation models and study their emergent behaviors using the Lyapunov function method. For  suitable discretization of the Lohe sphere model, we employ a scheme consisting of two steps. In the first step, we solve the first-order forward Euler scheme, and in the second step, we project the intermediate state onto the unit sphere. For this discrete model, we present a sufficient framework leading to the complete state aggregation in terms of system parameters and initial data. For the discretization of the Lohe matrix model, we use the Lie group integrator method, Lie-Trotter splitting method and Strang splitting method to propose three discrete models. For these models, we also provide several analytical frameworks leading to complete state aggregation and asymptotic state-locking. 
\end{abstract}

\maketitle 
\centerline{\date}

\section{Introduction} \label{sec:1}
\setcounter{equation}{0}
Emergent behaviors of many-body systems can be often observed in nature. e.g. aggregation of bacteria \cite{T-B-L, T-B}, flashing of fireflies \cite{B-B, Wi2},  schooling of fish \cite{B-T1}, synchronization of pacemaker cells \cite{Pe}, etc. For survey articles and books, we refer to \cite{A-B, A-B-F, D-B1,H-K-P-Z, P-R, St, VZ, Wi1}. Despite its ubiquitous presence, systematic studies based on mathematical models were done only a half century ago by Arthur Winfree and Yoshiki Kuramoto in their seminal papers \cite{Ku2, Wi2}. Among others, we are interested in the discretization of high-dimensional Kuramoto models such as the Lohe sphere model and the Lohe matrix model. For a smooth takeoff, we begin with the Kuramoto model. Consider an ensemble of phase oscillators, and let $\theta_i = \theta_i(t)$ be the phase of the $i$-th Kuramoto oscillator. Then, the dynamics of $\theta_i$ is governed by the following phase model \cite{Ku1, Ku2}:
\begin{equation} \label{KM}
\dot{\theta}_i = \nu_i + \frac{\kappa}{N}\sum_{j=1}^N \sin(\theta_j - \theta_i),\quad i \in [N] := \{1, \cdots, N \}.
\end{equation}

The emergent dynamics of the Kuramoto model \eqref{KM} on the unit circle has been extensively studied in literature, to name a few,  \cite{B-C-M, C-H-J-K, C-S, D-X, D-B, H-K-R, H-L-X}, and its first-order discretized model for \eqref{KM} based on the forward first-order Euler method was also addressed in \cite{Ch-H, H-K-K-Z, Sh, Z-Z} from the viewpoint of emergent dynamics. As high-dimensional generalizations of the Kuramoto model, several first-order models have been proposed on specific manifolds, to name a few, the Lohe sphere model on $d$-sphere ${\mathbb S}^d$ \cite{C-C-H, C-H5, H-K-P-R, Lo-1, J-C, Lo-2, M-T-G, Ol, T-M, Zhu}, the Lohe matrix model \cite{B-C-S, D-F-M-T, D-F-M, De, H-K-R, H-R,  Lo-0} on the unitary group and the Lohe tensor model on the space of tensors with the same rank and size \cite{H-P1, H-P2}.

To set the stage, we begin with brief two aggregation models, namely the Lohe sphere model and the Lohe matrix model. First, we consider a finite ensemble of particles lying in the unit sphere $\bbs^{d-1}$. Let ${\bx}_i = {\bx}_i(t) \in \bbs^{d-1}$ be the state of the $i$-th particle on the unit sphere. Then, the Lohe sphere model reads as follows: 
\begin{equation} \label{A-1}
    \dot{\bx}_i = \Omega_i \bx_i + \frac{\kappa}{N}\sum_{j=1}^N (\bx_j - \inprod{\bx_j}{\bx_i} \bx_i),\quad  i \in [N],
\end{equation}
where $\Omega_i$ is a $d \times d$ skew-symmetric matrix.
Second, we consider the ensemble of unitary matrices, and let $U_i = U_i(t) \in {\mathbf U}(d)$ be the state of the $i$-th matrix particle. Then, the Lohe matrix model reads as follows: 
\begin{equation}  \label{A-2}
\displaystyle \dot{U}_i U_i^\dag = -\mathrm{i}H_i + \frac{\kappa}{2N}\sum_{j=1}^N (U_j U_i^\dag - U_i U_j^\dag),\quad i \in [N],
\end{equation}
where $H_i$ is a $d\times d$ Hermitian matrix. \newline

The goal of this paper is to provide first-order discrete counterparts for the continuous models \eqref{A-1} and \eqref{A-2} with emergent dynamics, and main results can be summarized as follows. Our first results deal with the suitable first-order discretization of \eqref{A-1} and sufficient frameworks leading to the complete state aggregation. \newline

First, we use the first-order forward Euler scheme to get the intermediate state and then, we project it to the unit sphere to get the state value at next step: Let $\bx_i(n)$ be a given state of the $i$-th particle at discrete time $t = n h$, where $h = \Delta t > 0$ is the time-step. Then, the intermediate state $\tilde{\bx}_i(n+1)$ and the projected state $\bx_i(n+1)$ are determined by the following recursive relations: 
\begin{equation} \label{A-3}
\begin{cases}
\displaystyle  \tilde{\bx}_i(n+1) = \bx_i(n) +h \Omega_i \bx_i(n) + \frac{\kappa h}{N} \sum_{j=1}^N \Big(\bx_j(n) - \inprod{\bx_j(n)}{\bx_i(n)} \bx_i(n) \Big), \quad n \geq 0, \\
\displaystyle \bx_i(n+1) = \frac{\tilde{\bx}_i(n+1)}{\norm{\tilde{\bx}_i(n+1)}}, \quad i \in [N].
\end{cases}
\end{equation}
Note that $\bx_i(n)$ lies in the unit sphere by construction for all $n\geq 0$, $i\in[N]$. From now on, we call system \eqref{A-3} as the discrete Lohe sphere model. \newline

For \eqref{A-3} with homogeneous free flow ($\Omega_i = \Omega$), we provide a sufficient framework leading to complete state aggregation in which all states aggregate to the same state asymptotically. If system parameters and initial data satisfy
\[ 0<\kappa h \leq 1, \quad  \min\limits_{1\leq i,j\leq N} \inprod{\bx_i^0}{\bx_j^0} > 0, \]
and let ${\mathcal X} := \{ \bx_i \}$ be a solution to \eqref{C-6} with the initial data ${\mathcal X}^0$. Then, one has complete state aggregation (Theorem \ref{T3.1}):
\[\lim_{n\to\infty} \max_{1 \leq i, j \leq N} \norm{\bx_i(n) - \bx_j(n)} = 0. \]
Second, we provide three discrete Lohe matrix models based on the Lie group integrator method, the Lie-Trotter splitting formula and the Strang splitting formula. More precisely, the first discrete Lohe matrix model is based on the Lie group integrator method, and it reads as follows. 
\begin{equation} \label{A-4}
U_i (n+1) =  \exp\left(-\mathrm{i} H_i h + \frac{\kappa h}{2}(U_c(n) U_i^\dag (n) - U_i(n) U_c^\dag (n))\right) U_i(n),\quad\text{for all } i \in [N],   
\end{equation}
where $U_c: = \frac{1}{N} \sum_{j=1}^N U_j$. Throughout the paper, we call system \eqref{A-4} as the DLM-A model for simplicity. 

For the homogeneous zero free flow with $H_i = O$, we assume that system parameters and initial data satisfy
\[ 0<\beta := \kappa h < \beta_0 \approx 0.437864, \quad  \max_{1\leq i, j \leq N} \norm{U_i^0 - U_j^0}_\F < \sqrt{ 4 - e^{2\beta} - \frac{e^{2\beta}-1}{2\beta}},\]
here, $\beta_0$ is chosen to satisfy
\[ 4 - e^{2\beta} - \frac{e^{2\beta}-1}{2\beta}>0. \]
Then, for a solution ${\mathcal U} = \{ U_i \}$ to \eqref{A-4}, complete state aggregation emerges asymptotically (see Theorem \ref{T5.1}):
\[\lim_{n\to\infty} \max_{1\leq i, j \leq N} \norm{U_i(n) - U_j(n)}_\F = 0.\]
As a second and third discrete Lohe matrix models, we propose the following models using the Lie-Trotter splitting formula and the Strang splitting formula as follows:
\begin{equation} \label{A-5}
U_i (n+1) =  \exp\big(-\mathrm{i} H_i h \big)  \exp \left( \frac{\kappa h}{2}(U_c(n) U_i^\dag (n) - U_i(n) U_c^\dag (n))\right) U_i(n),
\end{equation}
for $i \in {[N]}$, and 
\begin{equation} \label{A-6}
\displaystyle U_i (n+1)=  \exp \left(-\frac{\mathrm{i} H_i h}{2} \right)  \exp \left(\frac{\kappa h}{2}(U_c(n) U_i^\dag (n) - U_i(n) U_c^\dag (n))\right) \exp \left(-\frac{\mathrm{i} H_i h}{2} \right) U_i(n),
\end{equation}
for $ i \in [N]$ and $n \geq 0$. \newline

Note that for zero free flows with $H_i = O$, all three models \eqref{A-4}, \eqref{A-5} and \eqref{A-6} coincide so that the last two models exhibit the complete state aggregation as well. In contrast, unlike the discrete model \eqref{A-4}, we can show existence of a positively invariant set, orbital stability and asymptotic state-locking for discrete models \eqref{A-5} and \eqref{A-6} (see Theorem \ref{T6.1}, Theorem \ref{T6.2} and Theorem \ref{T6.3}).  \newline

The rest of this paper is organized as follows. In Section \ref{sec:2}, we briefly introduce continuous Lohe sphere model and continuous Lohe matrix model, and their basic properties such as gradient flow formulation. In Section \ref{sec:3}, we present a discrete analogue of the Lohe sphere model and its emergent behaviors. In Section \ref{sec:4}, we provide three discrete models using Lie group integrator method, Lie-Trotter splitting, and Strang splitting. In Section \ref{sec:5}, we provide emergent estimates for discrete Lohe matrix models for a homogeneous ensemble with $H_i = O$. In Section \ref{sec:6}, we study emergent estimates for discrete Lohe matrix models \eqref{A-5} and \eqref{A-6} for a heterogeneous ensemble.  Finally, Section \ref{sec:7} is devoted to a brief summary of our main results. 

\vspace{0.5cm}

\noindent\textbf{Gallery of Notation}: For a vector $\bx = (x_1, \cdots, x_d)$ and a square matrix $A \in {\mathbb C}^{d\times d}$, we set
\begin{align*}
& \norm{\bx} = \norm{\bx}_2 := \sqrt{|x_1|^2 + \cdots + |x_d|^2 },\quad \| A \|_{\F} := \sqrt{ \mbox{tr}( A^{\dag} A)} =\sqrt{ \mbox{tr}( A A^{\dag})}, \quad \norm{A}_{\op} := \sup_{ x\not=0} \frac{\norm{Ax}_2}{\norm{x}_2},
\end{align*}
where $\norm{\cdot}_2$ is the $2$-norm, $\norm{\cdot}_\F$ is the Frobenius norm, and $\norm{\cdot}_{\op}$ is the matrix operator norm. For an ensemble or vector of matrices ${\mathcal X} = \{\bx_i \}$ and $(X_1, \cdots, X_N)$, we use the same notation ${\mathcal X}$ to denote ensemble or vector interchangeably, and we introduce an ensemble diameter and induced ball:
\[ {\mathcal D}({\mathcal X}) := \max_{1 \leq i, j \leq N} \| \bx_i - \bx_j \|_\F, \quad {\mathcal B}(\alpha) :=  \{ {\mathcal X}  \in (\C^{d \times d})^N:~ {\mathcal D}({\mathcal X}) <\alpha\}. \]
Moreover, we set 
\[  {\mathcal U} := \{U_1, \cdots, U_N \}, \quad  {\mathcal H} := \{H_1, \cdots, H_N \}. \]

\section{Preliminaries} \label{sec:2}
\setcounter{equation}{0}
In this section, we briefly discuss two continuous systems ``{\it the Lohe sphere model }" and ``{\it the Lohe matrix model }'' for aggregation, and review  results in relation to a gradient flow formulation and emergent dynamics.
\subsection{The Lohe matrix model} \label{sec:2.1} 
Let $U_i = U_i(t)$ be a $d\times d$ unitary matrix and $H_i$ a $d\times d$ hermitian matrix whose eigenvalues represent the natural frequencies of the $i$-th Lohe oscillator. We set $U_i^{\dag}$ to be the hermitian conjugate of $U_i$. Then the temporal evolution of $U_i$ is governed by the Cauchy problem to the Lohe matrix model: 
\begin{equation}  \label{B-1}
\begin{cases}
\displaystyle \dot{U}_i U_i^\dag = -\mathrm{i}H_i + \frac{\kappa}{2N}\sum_{j=1}^N (U_j U_i^\dag - U_i U_j^\dag),\quad t > 0, \\
\displaystyle U_i \Big|_{t = 0+} = U_i^0 \in \mbf{U}(d), \quad i \in [N],
\end{cases}
\end{equation}
where $\kappa$ is a nonnegative coupling strength. \newline

In the following proposition, we list several basic properties of \eqref{B-1} without proofs. 
\begin{proposition} \label{P2.1}
\emph{\cite{Lo-1,Lo-2}} The following assertions hold:
\begin{enumerate}[label=(\roman*)]
\item
Let ${\mathcal U}$ be a solution to the Cauchy problem \eqref{B-1}. Then, $U_i U_i^{\dagger}$ is conserved:
\[  U_i(t) U_i^{\dagger}(t) = U_i^0 U_i^{0\dagger}, \quad t > 0,~i \in [N]. \]
\item
The Cauchy problem $\eqref{B-1}$ is invariant under the right-translation by a unitary matrix in the sense that if $L\in \mbf{U}(d)$ and $V_i=U_i L$, then $V_i$ satisfies the same system with translated initial data:
\begin{equation*} \label{Lohe-t}
\begin{cases}
\displaystyle {\mathrm i} \dot{V}_i V_i^{\dagger} = H_i -\frac{{\mathrm i} \kappa}{2N}\sum_{j = 1}^{N} \left( V_i V_j^{\dagger} - V_j V_i^{\dagger} \right), \quad t > 0, \\
\displaystyle V_i \Big|_{t = 0+}  =U^0_i L, \quad i \in [N].
\end{cases}
\end{equation*}
\item
(Solution splitting property): Consider the free flow with the same hamiltonian and Lohe flow without the free flow part:
\[ \dot{F}_i F_i^\dag = -\mathrm{i}H, \quad  \dot{L}_i L_i^\dag =  \frac{\kappa}{2N}\sum_{j=1}^N (L_j L_i^\dag - L_i L_j^\dag). \]
Then, the solution operator for \eqref{B-1} can be rewritten as a composition of the solution operator for the free flow and pure Lohe flow:
\[ U_i(t) = \Big( F(t) L(t) {\mathcal U}^0 \Big)_i, \quad i \in [N]. \]
\end{enumerate}
\end{proposition}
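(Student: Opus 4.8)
The three parts to prove:
(i) $U_i U_i^\dagger$ is conserved
(ii) right-translation invariance
(iii) solution splitting property

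Let me sketch proofs for each.The plan is to verify each of the three assertions by direct differentiation, exploiting the skew-Hermitian structure of the right-hand side of \eqref{B-1}. I will treat the three parts in order, since the later parts reuse computations from the earlier ones.

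For part (i), I would compute $\frac{d}{dt}(U_i U_i^\dagger)$ using the product rule to get $\dot{U}_i U_i^\dagger + U_i \dot{U}_i^\dagger$. The key observation is that the matrix $M_i := \dot{U}_i U_i^\dagger = -\mathrm{i}H_i + \frac{\kappa}{2N}\sum_j (U_j U_i^\dagger - U_i U_j^\dagger)$ is skew-Hermitian: since $H_i$ is Hermitian we have $(-\mathrm{i}H_i)^\dagger = \mathrm{i}H_i$, and the coupling sum is manifestly of the form $A - A^\dagger$ term by term, hence skew-Hermitian. Because $\dot{U}_i^\dagger = (M_i U_i)^\dagger = U_i^\dagger M_i^\dagger = -U_i^\dagger M_i$, the two terms in $\frac{d}{dt}(U_i U_i^\dagger)$ become $M_i U_i U_i^\dagger - U_i U_i^\dagger M_i$. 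Setting $P_i := U_i U_i^\dagger$, this is the linear homogeneous ODE $\dot{P}_i = M_i P_i - P_i M_i$, a commutator flow; since $P_i(0) = U_i^0 U_i^{0\dagger}$ and the equation is solved by conjugation, $P_i(t)$ stays constant when the initial data is the identity, but more directly one checks that $P_i \equiv U_i^0 U_i^{0\dagger}$ is the solution because substituting a constant makes the right-hand side $M_i P_i - P_i M_i$; the clean way is to note that for unitary initial data $P_i(0)=I$ and $I$ is a fixed point, so $P_i(t)=I$ by uniqueness, giving the stated conservation.

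For part (ii), I would substitute $V_i = U_i L$ with $L$ a fixed unitary matrix and simply differentiate. Since $L$ is constant, $\dot{V}_i = \dot{U}_i L$ and $V_i^\dagger = L^\dagger U_i^\dagger$, so $\dot{V}_i V_i^\dagger = \dot{U}_i L L^\dagger U_i^\dagger = \dot{U}_i U_i^\dagger$ using $L L^\dagger = I$. The crucial cancellation is that every coupling term $U_j U_i^\dagger$ transforms as $V_j V_i^\dagger = U_j L L^\dagger U_i^\dagger = U_j U_i^\dagger$, so the entire right-hand side is invariant, and one only needs to rearrange the resulting identity into the displayed form with the $\mathrm{i}$ moved to the left. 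The translated initial condition $V_i(0) = U_i^0 L$ is immediate.

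For part (iii), the solution-splitting property, I would define $F$ as the solution operator of the free flow and $L$ as that of the pure Lohe flow, and verify that $U_i(t) = (F(t) L(t) \mathcal{U}^0)_i$ solves \eqref{B-1} by differentiating the product and using that $F$ and the Lohe flow each satisfy their respective equations. \textbf{The main obstacle} I anticipate is precisely here: the free flow $F$ and the pure Lohe flow $L$ do not commute in general, so a naive product ansatz need not solve the full system, and one must check that the particular ordering and the common Hamiltonian $H_i = H$ are exactly what make the cross terms collapse. The argument relies on the homogeneity assumption (same $H$ for all $i$), which allows the free-flow factor to be pulled through the coupling sum without disturbing the differences $U_j U_i^\dagger$; I would isolate this cross-term cancellation as the technical heart of the proof and verify it by computing $\frac{d}{dt}(FL\mathcal{U}^0)_i (FL\mathcal{U}^0)_i^\dagger$ and matching it term by term against the right-hand side of \eqref{B-1}.
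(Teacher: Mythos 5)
The paper states Proposition \ref{P2.1} without proof, citing Lohe's original papers, so there is no in-paper argument to compare against; your direct-verification sketches follow the standard route and all three are essentially correct. For (ii) the mechanism is exactly right: $\dot{V}_i V_i^\dagger = \dot{U}_i L L^\dagger U_i^\dagger = \dot{U}_i U_i^\dagger$ and $V_j V_i^\dagger = U_j U_i^\dagger$, so the system is literally unchanged. For (iii), your identification of the homogeneity $H_i \equiv H$ as the technical heart is accurate: writing $U_i(t) = e^{-\mathrm{i}Ht} L_i(t)$ with $L_i$ the pure Lohe flow from ${\mathcal U}^0$, one gets $U_j U_i^\dagger = e^{-\mathrm{i}Ht} \bigl( L_j L_i^\dagger \bigr) e^{\mathrm{i}Ht}$, so the coupling terms conjugate coherently, the free part contributes $-\mathrm{i}H U_i$, and $e^{-\mathrm{i}Ht}(L_j L_i^\dagger - L_i L_j^\dagger) L_i = (U_j U_i^\dagger - U_i U_j^\dagger) U_i$; uniqueness then closes the argument, and with heterogeneous $H_i$ the left and right exponentials would differ and the cancellation fails, as you anticipated.

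Two small caveats on (i). First, your computation uses $\dot{U}_i = M_i U_i$, which is equivalent to the literal equation $\dot{U}_i U_i^\dagger = M_i$ only after unitarity is known, so as written it is mildly circular; it is legitimate if you adopt the first-order form $\dot{U}_i = M_i({\mathcal U}) U_i$ (which the paper itself records) as the governing ODE and then run your fixed-point/uniqueness argument at $P_i(0) = I$. Second, your parenthetical claim that ``substituting a constant'' verifies the solution is not right: for a general constant $P$ the commutator $M_i P - P M_i$ does not vanish, so a generic constant is not a solution of $\dot{P}_i = M_i P_i - P_i M_i$; only $P = I$ (or anything commuting with all $M_i(t)$) is a fixed point, which is why the unitary initial data matter. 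Note also that reading \eqref{B-1} literally gives a one-line proof with no commutator at all:
\[
\frac{d}{dt}\bigl( U_i U_i^\dagger \bigr) = \dot{U}_i U_i^\dagger + \bigl( \dot{U}_i U_i^\dagger \bigr)^\dagger = M_i + M_i^\dagger = 0
\]
by skew-Hermitianity of $M_i$, which yields the stated conservation $U_i(t) U_i^\dagger(t) = U_i^0 U_i^{0\dagger}$ directly and for arbitrary initial data.
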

Before we discuss the emergent dynamics of \eqref{B-1}, we recall two concepts for system \eqref{B-1} as follows. 
\begin{definition} \label{D2.1}
\emph{\cite{H-R}}
\begin{enumerate}[label=(\roman*)]
\item 
The ensemble ${\mathcal U}$  is a \textit{locked state}, if $U_i U_j^{\dag}$ is time-invariant:
 \[  U_i(t)U_j^\dag(t) =  U_i^0 (U_j^0)^\dag,  \quad t > 0,~~i, j \in [N]. \]
\item  
The ensemble ${\mathcal U}$ exhibits asymptotic state-locking, if $U_i U_j^{\dag}$ has a limit as $t \to \infty$:
\[ \exists~~\lim\limits_{t\to\infty} U_i(t) U_j^\dag(t), \quad i, j  \in [N]. \]
\end{enumerate}
\end{definition}
In the sequel, we again list the emergent dynamics of \eqref{B-1} without proofs. 
\begin{proposition}  \label{P2.2}
\emph{\cite{H-R}}
\begin{enumerate}[label=(\roman*)]
\item
(Homogeneous ensemble):~Suppose system parameters and initial data satisfy
\begin{equation}  \label{B-2}
{\mathcal  D}({\mathcal H}) = 0, \quad  \kappa > 0, \quad  {\mathcal U}^0 \in \mathcal{B}(\sqrt{2}),
\end{equation}
and let ${\mathcal U}$ be a solution to \eqref{B-1}. Then, diameter ${\mathcal D}({\mathcal U})$ converges to zero exponentially as $t\to\infty$.
\vspace{0.2cm}
\item
(Heterogeneous ensemble):~Suppose system parameters and initial data satisfy
\begin{equation} \label{B-3}
{\mathcal D}({\mathcal H}) > 0, \quad  \kappa > \kappa_e > \frac{54}{17}{\mathcal D}({\mathcal H}), \quad  {\mathcal U}^0\in\mcal{B}(\alpha), 
\end{equation}
where $\alpha$ is a positive root of the following cubic polynomial equation: 
\[ x - \frac{1}{2}x^3 = \frac{{\mathcal D}({\mathcal H})}{\kappa_e} \quad \mbox{in}~\left(\sqrt{\frac{2}{3}},\sqrt{2}\right), \]
and let ${\mathcal U}$ be a solution to \eqref{B-1}. Then,  asymptotic state-locking emerges exponentially fast.
\end{enumerate}
\end{proposition}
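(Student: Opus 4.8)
The plan is to control the ensemble diameter $\mathcal{D}(\mathcal{U})$ through a single scalar differential inequality and then run a Gronwall/continuity argument, treating the two parts separately. The starting observation is that unitarity turns distances into correlations: since $U_i^\dag U_i = I$, one has $\norm{U_i - U_j}_\F^2 = 2d - 2\,\mathrm{Re}\,\tr{U_i^\dag U_j}$, so monitoring $\mathcal{D}(\mathcal{U})$ is the same as monitoring the minimal pairwise correlation. Because $\mathcal{D}(\mathcal{U})$ is only a maximum of smooth functions, I would work with its upper Dini derivative and, by a Danskin-type lemma, differentiate along an extremal pair $(U_M, U_m)$ realizing the diameter at the given time.

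For part (i), I would first invoke the solution splitting property (Proposition~\ref{P2.1}(iii)): since $\mathcal{D}(\mathcal{H}) = 0$ all Hamiltonians equal a common $H$, so $U_i = F L_i$ with $F = F(t)$ the shared free flow and $\{L_i\}$ the pure aggregation flow. As $F$ is unitary and the Frobenius norm is unitarily invariant, $\norm{U_i - U_j}_\F = \norm{L_i - L_j}_\F$, and it suffices to treat the pure Lohe flow. Differentiating $\norm{L_M - L_m}_\F^2$ and expanding the skew-Hermitian coupling, the coupling cross terms reorganize so that, as long as the diameter stays below the critical value $\sqrt2$ (the matrix analogue of the half-circle condition for the Kuramoto model), they contribute with a definite negative sign. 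This yields $\frac{d}{dt}\mathcal{D}(\mathcal{U})^2 \leq -c\,\kappa\,\mathcal{D}(\mathcal{U})^2$ with $c>0$ uniform on $\mathcal{B}(\sqrt2)$; the strictly negative right-hand side makes $\mathcal{B}(\sqrt2)$ positively invariant and forces exponential decay of $\mathcal{D}(\mathcal{U})$.

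For part (ii) the free flow can no longer be factored out, so the same extremal-pair computation now carries a forcing term measuring the mismatch of the $H_i$. The goal is a differential inequality of the schematic form $\frac{d}{dt}\mathcal{D}(\mathcal{U}) \leq 2\,\mathcal{D}(\mathcal{H}) - \kappa\,g(\mathcal{D}(\mathcal{U}))$ with $g(x) = x - \tfrac12 x^3$, up to the precise constants of \cite{H-R}; the cubic is exactly the leading behaviour of the effective coupling strength as a function of the diameter, which is why the equilibrium is governed by $g(x) = \mathcal{D}(\mathcal{H})/\kappa_e$. Since $g$ is unimodal with maximum at $x=\sqrt{2/3}$ and decreases to $0$ at $x=\sqrt2$, the condition $\kappa > \kappa_e > \frac{54}{17}\mathcal{D}(\mathcal{H})$ keeps $\mathcal{D}(\mathcal{H})/\kappa_e$ small enough to admit a root $\alpha$ in $(\sqrt{2/3},\sqrt2)$ at which the vector field points inward; a boundary sign-check then gives positive invariance of $\mathcal{B}(\alpha)$ and relaxation of $\mathcal{D}(\mathcal{U})$ into this ball. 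To upgrade confinement to asymptotic state-locking, I would set $V_{ij}=U_iU_j^\dag$, compute $\dot V_{ij}=A_iV_{ij}+V_{ij}A_j^\dag$ with $A_i=-\mathrm{i}H_i+\frac{\kappa}{2N}\sum_{k=1}^N(U_kU_i^\dag-U_iU_k^\dag)$, and show that the exponential relaxation of the diameter to the stable root makes the coupling mismatch decay exponentially, so that $\int^{\infty}\norm{\dot V_{ij}}_\F\,dt<\infty$; hence $V_{ij}(t)$ is Cauchy and converges, exponentially fast.

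The main obstacle is part (ii). Extracting the clean cubic lower bound $g(\mathcal{D})=\mathcal{D}-\tfrac12\mathcal{D}^3$ requires a sign-definite estimate of the coupling cross terms $\tr{(U_M-U_m)^\dag(U_kU_M^\dag-U_MU_k^\dag)U_M}$ that is uniform over the ensemble and sharp enough to produce the stated threshold; and converting mere boundedness of the diameter into convergence of the relative states $U_iU_j^\dag$ needs control of the trajectories themselves, not just of their spread. This last step — proving the relative velocities are time-integrable so that the coupling exactly compensates the heterogeneity in the limit — is where the real work lies.
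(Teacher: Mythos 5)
The paper itself states Proposition \ref{P2.2} without proof, citing \cite{H-R}, so the comparison is against the known argument there (which the discrete Sections \ref{sec:5}--\ref{sec:6} of this paper deliberately mirror). Your part (i) is essentially that argument: reduce to $H_i = O$ by the splitting property, differentiate $\norm{U_i - U_j}_\F^2$ along an extremal pair via the Dini derivative, and close a Gronwall inequality of the form $\frac{d}{dt}\mathcal{D}(\mathcal{U})^2 \leq -\kappa\big(2 - \mathcal{D}(\mathcal{U})^2\big)\mathcal{D}(\mathcal{U})^2$ (the continuous analogue of the $\mcal{A}_1$ estimate in Lemma \ref{L5.5}). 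One small inaccuracy: the decay rate is \emph{not} uniform on $\mathcal{B}(\sqrt{2})$ — it degenerates as the diameter approaches $\sqrt{2}$ — but monotonicity lets you take $c = 2 - \mathcal{D}(\mathcal{U}^0)^2 > 0$, so this is cosmetic. The invariant-set half of your part (ii), via $\frac{d}{dt}\mathcal{D}(\mathcal{U}) \leq \mathcal{D}(\mathcal{H}) - \frac{\kappa}{2}\big(2\mathcal{D}(\mathcal{U}) - \mathcal{D}(\mathcal{U})^3\big)$ and a sign check at the larger root $\alpha$, is also the correct mechanism (compare Proposition \ref{P6.1}).

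The genuine gap is your state-locking step. You claim that relaxation of the diameter into the ball makes $\snorm{\dot V_{ij}}_\F$ exponentially small and hence time-integrable; this does not follow and is false as stated. In the heterogeneous regime the diameter stays bounded \emph{away from zero}, the generators $A_i = -\mathrm{i}H_i + \kappa \Delta_i$ remain of order one, and confinement of $\mathcal{D}(\mathcal{U})$ to a ball gives no control whatsoever on the time variation of the relative states $V_{ij} = U_i U_j^\dag$ — a priori the configuration could rotate forever inside $\mathcal{B}(\alpha)$ without $V_{ij}$ converging. The actual route in \cite{H-R}, reproduced at the discrete level in Theorems \ref{T6.1}--\ref{T6.3} of this paper, is \emph{orbital stability plus a time-shift trick}: one first proves a contraction estimate $\frac{d}{dt}\, d(\mathcal{U},\tilde{\mathcal{U}})^2 \leq -\kappa\big(2 - 6\alpha\big)\, d(\mathcal{U},\tilde{\mathcal{U}})^2$ for \emph{two} solutions with the same Hamiltonians, valid once both stay in $\mathcal{B}(\alpha)$ with $\alpha < \tfrac{1}{3}$, and then applies it to $\tilde{\mathcal{U}}(t) := \mathcal{U}(t+\tau)$, which makes $t \mapsto U_i(t)U_j^\dag(t)$ Cauchy with exponential rate — exactly the device in the proofs of Theorem \ref{T6.1}(ii) and Theorem \ref{T6.2}(i). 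This also explains the constant $\tfrac{54}{17}$, which you misattribute: existence of a root of $x - \tfrac{1}{2}x^3 = \mathcal{D}(\mathcal{H})/\kappa_e$ in $\big(\sqrt{2/3},\sqrt{2}\big)$ only needs $\mathcal{D}(\mathcal{H})/\kappa_e < \tfrac{2\sqrt{6}}{9} \approx 0.544$, whereas $g(\tfrac13) = \tfrac13 - \tfrac12 \cdot \tfrac{1}{27} = \tfrac{17}{54}$, so the condition $\kappa_e > \tfrac{54}{17}\mathcal{D}(\mathcal{H})$ is precisely what forces the diameter to relax below the contraction threshold $\tfrac13$ at which the two-solution estimate becomes effective. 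Without this orbital-stability ingredient your sketch proves confinement but not convergence of $U_iU_j^\dag$, which is, as you yourself suspected, where the real work lies.
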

\begin{remark}
Note that conditions \eqref{B-2} and \eqref{B-3} impose rather restrictive conditions on initial data. In contrast, for the Kuramoto model \eqref{KM}, such restriction on initial data was replaced by a generic condition using the gradient flow formulation of \eqref{KM} (see \cite{H-K-R}). So far, whether asymptotic state-locking holds for a generic initial data in a large coupling regime or not is an open problem. 
\end{remark}

Note that $\eqref{B-1}_1$ can be rewritten as 
\begin{equation*} \label{B-3-0}
\dot{U}_i  = \Big( -\mathrm{i}H_i + \frac{\kappa}{2N}\sum_{j=1}^N (U_j U_i^\dag - U_i U_j^\dag) \Big) U_i, \quad t > 0,~~ i \in [N].
\end{equation*}
For a homogeneous ensemble with the same hermitian matrices:
\begin{equation} \label{B-3-1}
 H_i = H, \quad i \in [N], 
 \end{equation}
system \eqref{B-1} can cast as a gradient flow with an analytical potential. Thanks to the solution splitting property (iii) in Proposition \ref{P2.1}, without loss of generality, we may assume 
\begin{equation*} \label{B-3-2}
 H_i = O, \quad i  \in [N].
 \end{equation*}
In this case, system \eqref{B-1} becomes
\begin{equation} \label{B-4}
 \dot{U}_i =   \frac{\kappa}{2N}\sum_{j=1}^N \left(U_j U_i^\dagger U_i -U_i U_j^\dagger U_i \right), ~~i \in [N].
\end{equation} 
Now, we introduce an order parameter $R_m$ and a potential ${\mathcal V}_m$ for \eqref{B-4} with $H_i = O$:
\begin{equation*}
  R_m^2 :=\frac{1}{N^2}\sum_{i,j=1}^N \mbox{tr}\left(U^\dag_i U_j \right), \quad \mbox{and} \quad  \mathcal{V}_m({\mathcal U}) :=-\frac{\kappa}{2} NR_m^2.
\end{equation*}
Then, it is easy to see that the potential ${\mathcal V}_m$ is analytic, and system \eqref{B-4} can cast as a gradient system with the potential ${\mathcal V}_m$.
\begin{proposition} \label{P2.3}
The following assertions hold.
\begin{enumerate}[label=(\roman*)]
\item
System \eqref{B-4} is a gradient flow with the analytical potential ${\mathcal V}_m$:
\[ {\dot U}_i = -\left.\frac{\partial \mathcal{V}_m}{\partial U_i}\right|_{T_{U_i} \mbf{U}(d)}, \quad  t > 0, \quad i \in [N],   \]
where $\frac{\partial}{\partial U_i}|_{T\mbf{U}(d)}$ is the tangential gradient. 

\vspace{0.1cm}
\item
Let ${\mathcal U}$ be a solution to \eqref{B-1} with \eqref{B-3-1}. Then, time-dependent state $e^{{\mathrm i}Ht} U_i(t)$ converges as $t\to\infty$ for any initial data ${\mathcal U}^0$.
\end{enumerate}
\end{proposition}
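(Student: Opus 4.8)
The plan is to treat the two assertions separately. For (i), I would first compute the ambient (Euclidean) gradient of $\mcal{V}_m$ on $(\C^{d\times d})^N$ and then orthogonally project it onto the tangent space $T_{U_i}\mbf{U}(d)$, verifying that the negative of this tangential gradient reproduces the right-hand side of \eqref{B-4}. Working with the real inner product $\langle A, B\rangle_{\R} := \mathrm{Re}\,\tr{A^\dag B}$, a direct first variation of $\mcal{V}_m = -\frac{\kappa}{2N}\sum_{i,j}\tr{U_i^\dag U_j}$ in the $U_i$-slot gives the ambient gradient $-\frac{\kappa}{N}\sum_{j=1}^N U_j$: the two occurrences of $U_i$ in the double sum contribute symmetrically, and the imaginary parts cancel since $\mcal{V}_m$ is real-valued.

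Next I would use that, for unitary $U_i$, the orthogonal projection of an ambient matrix $A$ onto $T_{U_i}\mbf{U}(d) = \{U_i\Omega : \Omega^\dag = -\Omega\}$ is $\frac{1}{2}(A - U_i A^\dag U_i)$, obtained by splitting $U_i^\dag A$ into its Hermitian and skew-Hermitian parts and keeping the latter. Applying this to $A = -\frac{\kappa}{N}\sum_j U_j$ yields the tangential gradient $-\frac{\kappa}{2N}\sum_j(U_j - U_iU_j^\dag U_i)$; negating it and using $U_i^\dag U_i = I$ recovers exactly $\dot U_i$ in \eqref{B-4}. This establishes the gradient-flow identity, and the analyticity of $\mcal{V}_m$ is immediate because it is a polynomial in the real and imaginary parts of the matrix entries.

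For (ii), I would reduce the statement to the pure Lohe flow via the solution-splitting property (iii) of Proposition \ref{P2.1}. With $H_i = H$ for all $i$, the free flow is solved by left multiplication by $e^{-\mathrm{i}Ht}$, so $U_i(t) = e^{-\mathrm{i}Ht}L_i(t)$, where $\{L_i\}$ solves the pure Lohe system \eqref{B-4}. Hence $e^{\mathrm{i}Ht}U_i(t) = L_i(t)$, and it suffices to prove that each $L_i(t)$ converges as $t\to\infty$. By part (i), $\{L_i\}$ is a gradient flow of the real-analytic potential $\mcal{V}_m$ on the compact real-analytic manifold $\mbf{U}(d)^N$.

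The convergence then follows from the {\L}ojasiewicz gradient inequality. Along the flow $\mcal{V}_m$ is nonincreasing and bounded below by compactness, so $\mcal{V}_m(\mcal{U}(t))$ has a limit; the $\omega$-limit set is nonempty, connected, and consists of critical points all sharing that limiting value; and the {\L}ojasiewicz inequality near any such critical point forces the trajectory to have finite arclength, whence it converges to a single equilibrium. I expect the main obstacle to be precisely this last ({\L}ojasiewicz--Simon) step: one must confirm that the standard analytic gradient-flow convergence theorem applies in the present Riemannian setting on $\mbf{U}(d)^N$ — that the metric and the potential are genuinely real-analytic and that the trajectory is precompact — after which finite-length convergence is automatic. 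The gradient computation of (i) and the analyticity of $\mcal{V}_m$ are the routine ingredients that feed this argument.
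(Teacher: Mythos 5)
Your proposal is correct, and it follows exactly the route the paper relies on: the paper states Proposition \ref{P2.3} without proof (remarking only that ``it is easy to see''), the intended argument being the one in the cited literature \cite{H-R, H-K-R}, namely the gradient computation with the real inner product $\mathrm{Re}\,\tr{A^\dag B}$, the skew-Hermitian projection $\frac{1}{2}\bigl(A - U_i A^\dag U_i\bigr)$ onto $T_{U_i}\mbf{U}(d)$, reduction to the pure Lohe flow by the splitting property of Proposition \ref{P2.1}(iii), and {\L}ojasiewicz-type convergence for analytic gradient flows on the compact manifold $\mbf{U}(d)^N$. Your computations check out (the ambient gradient $-\kappa U_c$, the recovery of \eqref{B-4} via $U_i^\dag U_i = I$, and $e^{\mathrm{i}Ht}U_i(t) = L_i(t)$), and precompactness of the trajectory is indeed automatic here, so the {\L}ojasiewicz--Simon step you flag as the main obstacle goes through without difficulty.
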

\vspace{0.2cm}

Next, we briefly discuss the relation between the Lohe matrix model \eqref{B-1} and the Kuramoto model \eqref{KM}. Note that for $d = 1$, an unitary matrix $U_i$ of size $1 \times 1$ corresponds to a complex number with unit modulus. So we set 
\begin{equation} \label{New-1}
U_i = e^{-{\mathrm i} \theta_i}, \quad H_i = \nu_i \in {\mathbb R},  \quad i \in [N]. 
\end{equation}
Now we substitute the above ansatz \eqref{New-1} into \eqref{B-1} to get the Kuramoto model. 
\subsection{The Lohe sphere model} \label{sec:2.2}
Let $\bx_i = \bx_i(t)$ be the position of the $i$-th swarming particle on the unit sphere $\mathbb{S}^{d-1}$, and $\Omega_i$ is a $d \times d$ skew-symmetric matrix. Then the Lohe sphere model reads as follows:
\begin{equation} \label{B-5}
\begin{cases}
\displaystyle \dot{\bx}_i = \Omega_i \bx_i + \frac{\kappa}{N}\sum_{j=1}^N \Big(\bx_j - \inprod{\bx_j}{\bx_i} \bx_i \Big ),\quad t > 0, \\
\displaystyle \bx_i \Big|_{t = 0+} = \bx_i^0, \quad i \in [N].
\end{cases}    
\end{equation}
Note that for $d = 2$, a special case for \eqref{B-5} can be derived from \eqref{B-1}. For this, we use the parametrization of the unitary matrix $U_i$ in terms of Pauli's matrices $\{ \sigma_k \}_{k=1}^{3}$:
\[ U_i := e^{-{\mathrm i} \theta_i} \Big( {\mathrm i} \sum_{k=1}^{3} x_i^k \sigma_k + x_i^4 I_2  \Big) = e^{-{\mathrm i} \theta_i}
 \left( \begin{array}{cc}
x_i^4 + {\mathrm i} x_i^1 & x_i^2 + {\mathrm i} x_i^3  \\
-x_i^2 + {\mathrm i} x_i^3 & x_i^4 - {\mathrm i} x_i^1
 \\
\end{array} \right), \]
where $I_2$ and $\sigma_i$ are the identity matrix and Pauli matrices, respectively, defined by
\[
I_2 := \left( \begin{array}{cc}
  1 & 0 \\
  0 & 1 \\
  \end{array} \right), \quad \sigma_1 := \left( \begin{array}{cc}
  1 & 0 \\
  0 & -1 \\
  \end{array} \right), \quad \sigma_2 :=  \left( \begin{array}{cc}
  0 & -{\mathrm i} \\
  {\mathrm i} & 0 \\
  \end{array} \right), \quad \sigma_3 := \left( \begin{array}{cc}
  0 & 1 \\
  1 & 0 \\
  \end{array} \right). \]
We also expand the hamiltonian matrix $H_i$:
\[ H_i = \sum_{k=1}^{3} \omega_i^k \sigma_k + \nu_i I_2, \]
where $\omega_i = (\omega_i^1, \omega_i^2, \omega_i^3)$ is a real three-vector, and the natural frequency $\nu_i$ is associated with the $\mbf{U}(1)$ component of $U_i$. After some algebraic manipulations, we obtain $5N$ equations for the angles $\theta_i$ and the four-vectors $\bx_i$:
\begin{align}
\begin{aligned} \label{B-5-1}
||\bx_i||^2 {\dot \theta}_i &= \nu_i + \frac{\kappa}{N} \sum_{k=1}^{N} \sin (\theta_k - \theta_i) \langle \bx_i,\bx_k \rangle,~~i\in[N], \\
||\bx_i||^2 {\dot \bx}_i &= \Omega_i \bx_i + \frac{\kappa}{N} \sum_{k=1}^{N} \cos(\theta_k - \theta_i) (||\bx_i||^2 \bx_k -\langle \bx_i,\bx_k \rangle \bx_i),
\end{aligned}
\end{align}
where $\Omega_i$ is a real $4 \times 4$ skew-symmetric matrix:
\[ \Omega_i := \left( \begin{array}{cccc}
0 & -\omega_i^3 & \omega_i^2 & -\omega_i^1 \\
\omega_i^3 & 0 & -\omega_i^1 & -\omega_i^2 \\
-\omega_i^2 & \omega_i^1 & 0 & -\omega_i^3 \\
\omega_i^1 & \omega_i^2 &\omega_i^3 & 0 \\
\end{array}
\right). \]
Note that the above $\Omega_i$  skew-symmetric matrix. By taking  $\theta_i = 0$ and $\nu_i = 0$ in \eqref{B-5-1}, we formally obtain the consensus model in \eqref{B-1} with $\bx_i \in \bbs^3$:
\begin{equation*} \label{B-5-2}
 {\dot \bx}_i = \Omega_i \bx_i + \frac{\kappa}{N} \sum_{k=1}^{N} (\bx_k - \langle \bx_i,\bx_k \rangle \bx_i), \quad i \in [N].
\end{equation*}
Now, we introduce order parameter $\rho$:~For a configuration ${\mathcal X} = \{\bx_i \}$,  we set
\[  \bx_c := \frac{1}{N}\sum_{j=1}^N \bx_j,\quad \rho := \| \bx_c \|. \]

In the following proposition, we list two emergent dynamics of the Lohe sphere model for a homogeneous ensemble. 
\begin{proposition} \label{P2.4} \cite{H-K-R}  
\emph{(Homogeneous ensemble)}
Suppose system parameters satisfy 
\[ \Omega_i = \Omega, \quad i \in [N], \quad \kappa > 0, \]
and let ${\mathcal X}$ be a solution to \eqref{B-5} with the initial data ${\mathcal X}^0$. Then the following assertions hold.
\begin{enumerate}[label=(\roman*)]
\item
If 
\[  \min\limits_{1 \leq i \leq N} \inprod{\bx^0_i}{\bx^0_c} > 0, \]
then $\rho(t)$ exponentially converges to $1$.
\vspace{0.1cm}
\item
If 
\[ \min\limits_{1 \leq i,j \leq N} \langle \bx^0_i, \bx^0_j \rangle > 0, \]
then $\min\limits_{1 \leq i,j \leq N} \langle \bx_i(t), \bx_j(t) \rangle$ converges to $1$ exponentially fast.
\end{enumerate}
\end{proposition}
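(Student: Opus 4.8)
The plan is to exploit the homogeneity $\Omega_i = \Omega$ to eliminate the free flow by passing to a rotating frame, and then to reduce both assertions to scalar logistic differential inequalities for suitable minima of inner products. First I would set $\by_i := e^{-\Omega t}\bx_i$. Since $\Omega$ is real and skew-symmetric, $e^{\Omega t}$ is orthogonal, so $\norm{\by_i} = \norm{\bx_i} = 1$ and all inner products with other states and with the centroid are preserved: $\inprod{\by_i}{\by_j} = \inprod{\bx_i}{\bx_j}$. A direct computation shows that the free-flow term cancels, so that $\by_i$ solves $\dot{\by}_i = \kappa(\by_c - \inprod{\by_i}{\by_c}\by_i)$. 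Hence, without loss of generality, I may assume $\Omega = O$ and work with the reduced system $\dot{\bx}_i = \kappa(\bx_c - r_i\bx_i)$, where $r_i := \inprod{\bx_i}{\bx_c}$ and $\bx_c = \frac{1}{N}\sum_j \bx_j$. Differentiating and using $\inprod{\dot{\bx}_i}{\bx_j} = \kappa(r_j - r_i\inprod{\bx_i}{\bx_j})$, the key elementary identity I would establish is
\[ \frac{d}{dt}\inprod{\bx_i}{\bx_j} = \kappa(r_i + r_j)(1 - \inprod{\bx_i}{\bx_j}),\qquad i,j\in[N], \]
together with $r_i = \frac{1}{N}\sum_k \inprod{\bx_i}{\bx_k}$ and $\rho^2 = \frac{1}{N}\sum_i r_i$.

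For assertion (ii), I would track $h_m(t) := \min_{1\le i,j\le N}\inprod{\bx_i(t)}{\bx_j(t)}$. Since $\inprod{\bx_i}{\bx_i} = 1$ and every $\inprod{\bx_i}{\bx_k} \ge h_m$, averaging gives $r_i \ge h_m$ for all $i$. Evaluating the identity above on a pair $(i_0,j_0)$ realizing the minimum at time $t$ and using the upper Dini derivative of a minimum of smooth functions, I obtain $D^+ h_m \ge 2\kappa\, h_m(1-h_m)$. Starting from $h_m(0) = \min_{i,j}\inprod{\bx_i^0}{\bx_j^0} > 0$, a forward-invariance argument keeps $h_m(t) > 0$, and comparison against the logistic equation $\dot y = 2\kappa y(1-y)$ forces $h_m(t)\to 1$ exponentially, which is exactly the claim.

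For assertion (i), I would instead track $r_m(t) := \min_{1\le i\le N} r_i(t)$. Differentiating $r_i$ and using $\dot{\bx}_c = \kappa(\bx_c - \frac{1}{N}\sum_k r_k\bx_k)$ yields $\dot r_i = \kappa\big(\rho^2 - r_i^2 + r_i - \frac{1}{N}\sum_k r_k\inprod{\bx_i}{\bx_k}\big)$. The crucial estimate is that, as long as all $r_k \ge 0$, the cross term satisfies $\frac{1}{N}\sum_k r_k\inprod{\bx_i}{\bx_k} \le \frac{1}{N}\sum_k r_k = \rho^2$, since $\inprod{\bx_i}{\bx_k}\le 1$ and the negative contributions only help. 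Thus at a minimizing index $i_0$ one gets $D^+ r_m \ge \kappa\, r_m(1-r_m)$, and invariance plus logistic comparison give $r_m(t)\to 1$ exponentially from $r_m(0) = \min_i\inprod{\bx_i^0}{\bx_c^0} > 0$. Finally, since $r_m \le \frac{1}{N}\sum_i r_i = \rho^2 \le 1$, squeezing forces $\rho(t)\to 1$ exponentially.

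I expect the main obstacle to be the rigorous justification of the differential inequalities for the nonsmooth extremal quantities $h_m$ and $r_m$: the minimizing index or pair may switch in time, so the argument must be phrased through Dini derivatives (or a relabeling/approximation argument) rather than ordinary differentiation. Closely related, the positivity $r_k \ge 0$ used in the cross-term estimate of (i) is not \emph{a priori} given but must be secured as a forward-invariant property via a continuity/bootstrap argument, so that the logistic inequality is self-consistent. Once these invariance facts are in place, the exponential rates follow immediately from the comparison ODE.
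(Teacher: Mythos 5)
The paper states Proposition \ref{P2.4} without proof, citing \cite{H-K-R}, so there is no in-paper argument to compare against; judged on its own terms, your proof is correct and complete. The rotating-frame reduction $\by_i = e^{-\Omega t}\bx_i$ is legitimate (it is the same solution-splitting device the paper itself invokes at the start of Section \ref{sec:3.3}), and your key identity checks out by direct computation: with $r_i = \inprod{\bx_i}{\bx_c}$ one has $\inprod{\dot\bx_i}{\bx_j} = \kappa(r_j - r_i\inprod{\bx_i}{\bx_j})$, and symmetrizing gives exactly $\frac{d}{dt}\inprod{\bx_i}{\bx_j} = \kappa(r_i+r_j)\big(1-\inprod{\bx_i}{\bx_j}\big)$. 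Both logistic comparisons then close as you say: for (ii), $r_i \geq h_m$ (averaging, with the diagonal terms $\inprod{\bx_i}{\bx_i}=1 \geq h_m$) yields $D^+ h_m \geq 2\kappa h_m(1-h_m)$; for (i), the cross-term bound $\frac{1}{N}\sum_k r_k\inprod{\bx_i}{\bx_k} \leq \frac{1}{N}\sum_k r_k = \rho^2$, valid while all $r_k \geq 0$, yields $\dot r_i \geq \kappa r_i(1-r_i)$, and the positivity hypothesis is self-sustaining precisely by the bootstrap you flag: on the maximal interval where all $r_k > 0$, $r_m$ is nondecreasing and bounded below by $r_m(0) > 0$, so that interval is $[0,\infty)$. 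Your caution about the nonsmooth minima is handled correctly — a minimum of finitely many $C^1$ functions is locally Lipschitz with lower right derivative equal to the minimum of the derivatives over the active set, which is even slightly stronger than the $D^+$ inequality you state and amply suffices for the comparison ODE; the final squeeze $r_m \leq \rho^2 \leq 1$ and $1-\rho \leq 1-\rho^2$ converts the rate into exponential convergence of $\rho$. It is worth observing that your monotone-functional strategy is precisely the continuous-time analogue of the paper's own discrete proof in Section \ref{sec:3.3}: Lemma \ref{L3.1} establishes monotonicity of $B_{ij}(n)$, $\inprod{\bx_i}{\bx_c}$ and $\rho$, and Theorem \ref{T3.1} passes to the limit. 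Your exact identity for $\frac{d}{dt}\inprod{\bx_i}{\bx_j}$ plays the role of the paper's more laborious discrete update estimate \eqref{C-7}, and the logistic comparison additionally supplies an explicit exponential rate, which the limit argument of Theorem \ref{T3.1} does not quantify.
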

\begin{remark}
1. Both assertions imply that ${\mathcal X}(t)$ achieves complete state aggregation:
\[\lim_{t\to\infty} \max_{1 \leq i,j \leq N} \norm{\bx_i - \bx_j} = 0. \]
2. Note that 
\[   \inprod{\bx^0_i}{\bx^0_c} = \frac{1}{N} \sum_{j=1}^{N} \inprod{\bx^0_i}{\bx^0_j} \geq \min\limits_{1 \leq i,j \leq N} \langle \bx^0_i, \bx^0_j \rangle. \]
This yields
\[ \min\limits_{1 \leq i \leq N} \inprod{\bx^0_i}{\bx^0_c} \geq \min\limits_{1 \leq i, j \leq N} \langle \bx^0_i, \bx^0_j \rangle. \]
Therefore, the condition in the second statement is more relaxed compared to the condition of the first statement. \newline

\noindent 3. For a heterogeneous ensemble, we do not have a good theory of asymptotic state-locking up to now. 
\end{remark}
\vspace{0.2cm}

Thanks to solution splitting property for \eqref{B-5} similar to Proposition \ref{P2.1} (iii), we may assume $\Omega_i = O$ for a homogeneous ensemble. Like the Lohe matrix model, system \eqref{B-5} for a homogeneous ensemble can be rewritten as a gradient system:
\begin{equation} \label{B-6}
{\dot \bx}_i = \kappa {\mathbb P}_{\bx_i}^{\perp} \Big( \frac{1}{N} \sum_{k=1}^{N} \bx_k \Big), \quad i \in [N],
\end{equation}
where  ${\mathbb P}^{\perp}_{\bx_i}$ is the orthogonal projection onto the tangent plane perpendicular to $\bx_i$:
\[{\mathbb P}^{\perp}_{\bx_i} \by = \by - \langle \by, \bx_i \rangle \bx_i.\]
Now, we introduce a potential function ${\mathcal V}_s$:
\begin{equation*} \label{B-7}
{\mathcal V}_s({\mathcal X}) :=  - \frac{\kappa}{2N}\sum_{i,j=1}^N \langle \bx_i,\bx_j \rangle = -\frac{\kappa}{2} N \rho^2.
\end{equation*}
\begin{proposition} \label{P2.5}
Suppose system parameters satisfy
\[ \Omega_i  = \Omega, \quad i \in [N] \quad \mbox{and} \quad  \kappa > 0. \]
Then, the following assertions hold.
\begin{enumerate}[label=(\roman*)]
\item
System \eqref{B-6} with $\Omega = O$ is a gradient flow with the analytical potential ${\mathcal V}_s$:
\[ {\dot \bx}_i = -\left.\frac{\partial \mathcal{V}_s}{\partial \bx_i}\right|_{T_{\bx_i} {\mathbb S}^{d-1}}, \quad  t > 0, \quad i \in [N].   \]
\item
Let ${\mathcal X}$ be a solution to \eqref{B-6} with the initial configuration ${\mathcal X}^0$. Then, the quantities $e^{-\Omega t} \bx_i(t)$ converge as $t\to\infty$ for any initial data ${\mathcal X}^0$.
\end{enumerate}
\end{proposition}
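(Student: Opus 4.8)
The plan is to dispatch the two assertions by different means: part (i) is a direct differential-geometric computation, while part (ii) rests on a reduction to the pure-aggregation flow followed by an application of the {\L}ojasiewicz gradient inequality.

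For part (i), I would compute the ambient (Euclidean) gradient of the quadratic potential and then project onto the tangent space of the sphere. Writing ${\mathcal V}_s({\mathcal X}) = -\frac{\kappa}{2N}\sum_{k,l=1}^N \inprod{\bx_k}{\bx_l}$ and differentiating in $\bx_i$, the two symmetric contributions combine to give $\frac{\partial {\mathcal V}_s}{\partial \bx_i} = -\frac{\kappa}{N}\sum_{k=1}^N \bx_k = -\kappa \bx_c$. Since the tangent space $T_{\bx_i}{\mathbb S}^{d-1}$ is exactly the orthogonal complement of $\bx_i$, restricting to it amounts to applying ${\mathbb P}^{\perp}_{\bx_i}$, yielding $-\kappa {\mathbb P}^{\perp}_{\bx_i}(\bx_c)$. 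Comparing with the right-hand side of \eqref{B-6} shows $\dot{\bx}_i = -\left.\frac{\partial {\mathcal V}_s}{\partial \bx_i}\right|_{T_{\bx_i}{\mathbb S}^{d-1}}$, which is the claimed gradient-flow identity.

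For part (ii), I would first remove the free-flow term by the time-dependent orthogonal change of variables $\by_i(t) := e^{-\Omega t}\bx_i(t)$. Because $\Omega$ is skew-symmetric, $e^{\Omega t}$ is orthogonal, so $\inprod{e^{\Omega t}\by_j}{e^{\Omega t}\by_i} = \inprod{\by_j}{\by_i}$ and $\by_i$ stays on ${\mathbb S}^{d-1}$. Substituting $\bx_i = e^{\Omega t}\by_i$ into \eqref{B-5} with $\Omega_i = \Omega$ and cancelling the common factor $e^{\Omega t}$, the rotation term drops out and $\by_i$ is seen to satisfy \eqref{B-6} with $\Omega = O$, i.e. the gradient flow from part (i). Since the right-hand side is smooth and the invariant set $({\mathbb S}^{d-1})^N$ is compact, the flow is global, so it suffices to prove convergence of this gradient-flow trajectory $\by_i(t)$ as $t \to \infty$; the conclusion for $e^{-\Omega t}\bx_i(t) = \by_i(t)$ then follows immediately.

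The main step, and the principal obstacle, is to upgrade the monotone decay of ${\mathcal V}_s$ to genuine convergence of the orbit. Along the gradient flow one has $\frac{d}{dt}{\mathcal V}_s = -\sum_{i=1}^N \left\|\left.\frac{\partial {\mathcal V}_s}{\partial \bx_i}\right|_{T_{\bx_i}{\mathbb S}^{d-1}}\right\|^2 \le 0$, so ${\mathcal V}_s$ decreases and, being bounded below on the compact state space $({\mathbb S}^{d-1})^N$, converges; the $\omega$-limit set is then nonempty and consists entirely of equilibria. Since ${\mathcal V}_s$ is real-analytic and the phase space is a compact real-analytic manifold, I would invoke the {\L}ojasiewicz gradient inequality at a limit point: it bounds a power of $|{\mathcal V}_s - {\mathcal V}_s^{\ast}|$ by the gradient norm, and the standard {\L}ojasiewicz argument then shows the trajectory has finite arclength and therefore converges to a single equilibrium. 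The delicate part is precisely this finite-length estimate --- verifying the hypotheses of the {\L}ojasiewicz inequality on the product of spheres and carrying out the arclength bound; the reduction and the gradient computation are otherwise routine.
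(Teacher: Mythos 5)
Your proof is correct and follows exactly the route the paper intends: Proposition \ref{P2.5} is stated without proof in the preliminaries, and the standard argument --- tangential projection of the ambient gradient $-\kappa \bx_c$ of the quadratic potential for part (i), the orthogonal change of variables $\by_i(t) = e^{-\Omega t}\bx_i(t)$ (using skew-symmetry of $\Omega$ to preserve inner products and cancel the free flow) to reduce part (ii) to the $\Omega = O$ gradient flow, and then convergence of the analytic gradient flow on the compact real-analytic manifold $({\mathbb S}^{d-1})^N$ via the {\L}ojasiewicz gradient inequality --- is precisely what you carry out, matching the treatment in the cited Lohe literature (e.g.\ \cite{H-K-R}). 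Nothing essential is missing: your gradient computation $\frac{\partial \mathcal{V}_s}{\partial \bx_i} = -\kappa \bx_c$ is right, and the finite-arclength step you flag as delicate is indeed the only nontrivial ingredient, with its hypotheses satisfied since $\mathcal{V}_s$ is a polynomial restricted to a compact analytic submanifold of Euclidean space.
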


In the following sections, we study several discrete counterparts for the Lohe sphere model and the Lohe matrix model, respectively.

\section{Discrete Lohe flow on the unit sphere} \label{sec:3}
\setcounter{equation}{0}
In this section, we first present the scheme of a discrete Lohe sphere model and then, we study its emergent dynamics under a suitable framework in terms of system parameters and initial data. 
\subsection{Euler method with projection}\label{sec:3.1}
In this subsection, we  briefly discuss the Euler method for the first-order ODE model on a general manifold. It consists of the forward first-order Euler scheme and projection onto the underlying manifold. More precisely, let $\mcal{M}$ be a one-particle state space manifold embedded in Euclidean space, and  consider the following Cauchy problem:
\begin{equation} \label{C-0}
\begin{cases}
\displaystyle \dot{\bx}_i =  {\boldmath f}_i({\mathcal X}),\quad t > 0,~~i \in [N], \\
\displaystyle \bx_i(0) = \bx_i^0\in\mcal{M}.
\end{cases}
\end{equation}
For the well-posedness of \eqref{C-0}, we require the vector field $F_i({\mathcal X})  \in  T_{\bx_i}{\mathcal M}$ so that underlying manifold ${\mathcal M}$ is positively invariant by the flow generated by \eqref{C-0}:
\begin{equation*} \label{C-1}
\bx_i^0\in\mcal{M} \quad \implies \quad \bx_i(t)\in\mcal{M},\quad\forall~t\geq 0,~~i \in [N].
\end{equation*}
Let ${\mathcal X}(n)$ be given at the $n$-th time step. Then, as a naive discretization of \eqref{C-0}, we first apply the first-order forward Euler method for  \eqref{C-0} to get an intermediate value ${\tilde {\mathcal X}}(n+1)$:
\begin{equation} \label{C-1-1}
{\tilde \bx}_i(n+1) = {\bx}_i(n) + h {\boldmath f}_i({\mathcal X}(t)), \quad i \in [N].
\end{equation}
Even if 
\[ \bx_i(n) \in {\mathcal M}, \quad {\boldmath f}_i({\mathcal X}(n)) \in T_{\bx_i(n)}\mcal{M}, \]
$\bx_i(n+1)$ given by \eqref{C-1-1} may not lie in $\mcal{M}$. Thus, the naive scheme \eqref{C-1-1} is not a dynamical system on ${\mathcal M}$ in geneal. To recast system \eqref{C-1-1} to a dynamical system on ${\mathcal M}$, the projecting step is required. \newline

In summary, the proposed discrete model for \eqref{C-0} reads as follows.
\begin{equation} \label{C-2}
\begin{cases}
\displaystyle \tilde{\bx}_i(n+1) = \bx_i(n) + h {\boldmath f}_i({\mathcal X}(n)), \quad n \geq 0, ~~i \in [N],\\
\displaystyle  \bx_i(n+1) = {\mathbb P}_{\mcal{M}} \tilde{\bx}_i(n+1), \\
\displaystyle \bx_i(0) = \bx_i^0 \in\mcal{M},
\end{cases}
\end{equation}
where ${\mathbb P}_{\mcal{M}}$ is a projection operator from ${\mathbb R}^d$ to ${\mathcal M}$.  \newline

By the construction of \eqref{C-2}, as long as the discrete model \eqref{C-2} admits a solution, one has 
\[ \bx_i(n) \in {\mathcal M}, \quad n \geq 0,~~i \in [N]. \]

\subsection{Discrete Lohe sphere model} \label{sec:3.2}

\noindent Now, we present a discrete Lohe sphere model using the discrete algorithm discussed in Section \ref{sec:3.1} and study its emergent dynamics. \newline

Note that the Lohe sphere model is the first-order consensus model on the unit sphere $\mathbb{S}^{d-1}$. The projection onto the unit sphere is given by rescaling:
\[ \mathbb{P}_{\mathbb{S}^{d-1}}(\bx) = \frac{\bx}{\norm{\bx}}, \quad  \bx \in {\mathbb R}^{d} \setminus \{\mbf{0}\}. \]
Then, the discrete Lohe sphere model with Euler's method reads as follows:
\begin{equation} \label{C-3}
\begin{cases}
\displaystyle  \tilde{\bx}_i(n+1) = \bx_i(n) +h \Omega_i \bx_i(n) + \frac{\kappa h}{N} \sum_{j=1}^N \Big(\bx_j(n) - \inprod{\bx_j(n)}{\bx_i(n)} \bx_i(n) \Big), \quad n \geq 0, \\
\displaystyle \bx_i(n+1) = \frac{\tilde{\bx}_i(n+1)}{\norm{\tilde{\bx}_i(n+1)}}, \quad i \in [N].
\end{cases}
\end{equation}
Note that $\bx_i(n)$ lies in the unit sphere for all $n\geq 0$, $i \in [N]$. \newline

It is well known that the Lohe sphere model on ${\mathbb S}^1$ can be reduced to the Kuramoto model at the continuous level. In the sequel, we see that this reduction is also valid for the discrete level as well. For this, we set 
\begin{equation} \label{C-4}
\bx_i =   \begin{pmatrix} \cos\theta_i \\ \sin\theta_i \end{pmatrix},\quad \Omega_i = \begin{pmatrix} 0 & -\nu_i \\ \nu_i & 0 \end{pmatrix}.
\end{equation}
First, it follows from $\eqref{C-3}_2$ and \eqref{C-4} that  
\begin{equation} \label{C-4-1}
\tilde{\bx}_i(n+1) = \norm{\tilde{\bx}_i(n+1)} \bx_i(n+1) = \norm{\tilde{\bx}_i(n+1)} \begin{pmatrix} \cos\theta_i(n+1) \\ \sin\theta_i(n+1)  \end{pmatrix}.
\end{equation}
On the other hand, it follows from $\eqref{C-3}_1$ that 
\begin{equation} \label{C-4-2}
\tilde{\bx}_i(n+1) 
= \begin{pmatrix} \cos\theta_i(n) \\ \sin\theta_i(n) \end{pmatrix} + \left(\nu_i h + \frac{\kappa h}{N} \sum_{j=1}^N  \sin(\theta_j(n) - \theta_i(n)) \right) \begin{pmatrix} -\sin\theta_i(n) \\ \cos\theta_i(n) \end{pmatrix}.
\end{equation}
We introduce $\phi(n) \in (-\frac{\pi}{2}, \frac{\pi}{2})$ to satisfy
\[
\tan\phi(n) =\nu_i h + \frac{\kappa h}{N} \sum_{j=1}^N  \sin(\theta_j(n) - \theta_i(n)).
\]
By combining \eqref{C-4-1} and \eqref{C-4-2}, one has 
\begin{equation} \label{C-4-3}
\norm{\tilde{\bx}_i(n+1)} \begin{pmatrix} \cos\theta_i(n+1) \\ \sin\theta_i(n+1) \end{pmatrix} = \begin{pmatrix} \cos\theta_i(n) \\ \sin\theta_i(n) \end{pmatrix} + \tan\phi(n) \begin{pmatrix} -\sin\theta_i(n) \\ \cos\theta_i(n) \end{pmatrix}.
\end{equation}{
We take squaring the relations in \eqref{C-4-3} and use trigonometric identity to find 
\begin{equation}\label{C-5}
\begin{cases}
\displaystyle \|\tilde{\bx}_i(n+1)\|=\frac{1}{\cos\phi(n) }, \quad n \geq 0, \\
\displaystyle \theta_i(n+1)= \theta_i(n) + \arctan\Bigg(\nu_i h + \frac{\kappa h}{N} \sum_{j=1}^N  \sin(\theta_j(n) - \theta_i(n)) \Bigg), \quad i \in [N].
 \end{cases}
\end{equation}
Note that for $|x| \ll 1$, one has $\arctan(x) \approx x$. So $\eqref{C-5}_2$ becomes the discrete Kuramoto model \cite{Ch-H, Sh, Z-Z}:
\begin{equation} \label{C-5-1}
\theta_i(n+1)= \theta_i(n) + \nu_i h + \frac{\kappa h}{N} \sum_{j=1}^N  \sin(\theta_j(n) - \theta_i(n)).
\end{equation}
Emergent dynamics of \eqref{C-5-1} and uniform-in-time transition to the continuous dynamics have been discussed in recent literatures, e.g., exponential synchronization \cite{Ch-H} for some restricted initial configuration, complete synchronization \cite{Sh, Z-Z} for a generic initial configuration and uniform-in-time transition from discrete dynamics to continuous dynamics \cite{H-K-K-Z}.
\subsection{Emergent dynamics} \label{sec:3.3}
Next, we return to the Lohe sphere model \eqref{B-5} for a homogeneous ensemble:
\[ \Omega_i = \Omega, \quad i \in [N]. \]
We set 
\[ \by_i(t) = e^{-\Omega t} \bx_i(t), \quad i \in [N]. \]
Then, it satisfies same equation \eqref{B-5} with every $\Omega_i$ being the zero matrix.  To sum up,  without loss of generality, we may assume
\[ \Omega_i = O, \quad i \in [N].  \]
To sum up, we consider the discrete Lohe sphere model:
\begin{equation} \label{C-6}
\begin{cases}
\displaystyle  \tilde{\bx}_i(n+1) = \bx_i(n)  + \frac{\kappa h}{N} \sum_{j=1}^N \Big(\bx_j(n) - \inprod{\bx_j(n)}{\bx_i(n)} \bx_i(n) \Big), \quad n \geq 0, \\
\displaystyle \bx_i(n+1) = \frac{\tilde{\bx}_i(n+1)}{\norm{\tilde{\bx}_i(n+1)}}, \quad i \in [N].
\end{cases}
\end{equation}
Recursive relation $\eqref{C-6}_1$ can be rewritten in a handy form using the averaged state $\bx_c:= \frac{1}{N} \sum_{j=1}^{N} \bx_j$:
\begin{equation*} \label{C-6-1}
 \tilde{\bx}_i(n+1) = \bx_i(n)  + \kappa h \Big( \bx_c(n) - \langle \bx_i(n), \bx_c(n) \rangle \bx_i(n)       \Big).
\end{equation*}
For the emergent dynamics of \eqref{C-6}, we introduce several functionals:~for $n\geq 0$ and ${\mathcal X}(n)=(\bx_1(n),\cdots, \bx_N(n)) \in\big(\mathbb{S}^d\big)^N$,
\begin{align}
\begin{aligned}  \label{C-6-1-1}
&  \rho(n) = \norm{\bx_c(n)} , \quad  \bx_c(n) := \rho(n) {\hat \bx}_{c}(n),  \quad A_i(n) := \inprod{\bx_{i}(n)}{ {\hat \bx}_c(n)}, \\
&  B_{ij}(n) :=\inprod{\bx_i(n)}{\bx_j(n)},  \quad \mcal{A}(n) :=\min\limits_{1\leq i\leq N} A_i(n), \quad  \mcal{B}(n) :=\min\limits_{1\leq i,j\leq N} B_{ij}(n).
\end{aligned}
\end{align}
To describe the behavior of the functionals $\mcal{A}$ and $\mcal{B}$, we first note that 
\begin{align}
\begin{aligned} \label{C-6-2}
& \langle {\tilde \bx}_i(n+1), {\tilde \bx}_j(n+1) \rangle  \\
& \hspace{0.5cm} = \langle \bx_i, \bx_j \rangle + \kappa h \langle \bx_i + \bx_j, \bx_c \rangle (1 - \langle \bx_i, \bx_j \rangle )  \\
& \hspace{0.7cm} + (\kappa h)^2 \Big(  \|\bx_c \|^2 - \langle \bx_i, \bx_c \rangle^2 - \langle \bx_j, \bx_c \rangle^2 + \langle \bx_i, \bx_c \rangle \cdot \langle \bx_j, \bx_c \rangle \cdot \langle \bx_i, \bx_j \rangle         \Big) \\
& \hspace{0.5cm} = \langle \bx_i, \bx_j \rangle +(\kappa h \rho) \langle \bx_i + \bx_j, {\hat \bx}_c \rangle \Big(1 - \langle \bx_i, \bx_j \rangle  \Big)  \\
& \hspace{0.7cm} + (\kappa h \rho)^2 \Big(  1 - \langle \bx_i, {\hat \bx}_c \rangle^2 - \langle \bx_j, {\hat \bx}_c \rangle^2 + \langle \bx_i, {\hat \bx}_c \rangle \cdot \langle \bx_j,  {\hat \bx}_c \rangle \cdot \langle \bx_i, \bx_j \rangle         \Big).
\end{aligned}
\end{align}
Then, \eqref{C-6-2} and $\|\bx_i(n) \|^2 = 1$ imply
\begin{equation} \label{C-6-3}
\|  \tilde{\bx}_i(n+1) \|^2 =  1 + (\kappa h\rho(n))^2 \Big( 1 - \langle {\hat \bx}_c(n), \bx_i(n) \rangle^2 \Big). 
\end{equation}
Finally, $\eqref{C-6}_2$, \eqref{C-6-2} and \eqref{C-6-3} yield
\begin{align} \label{C-7}
\begin{split}
& \inprod{\bx_i(n+1)}{\bx_j(n+1)}\\
& \hspace{0.5cm}  = \frac{
\splitfrac{\inprod{\bx_i}{\bx_j} + (\kappa h \rho) \inprod{{\hat \bx}_{c}}{\bx_i + \bx_j} (1-\inprod{\bx_i}{\bx_j}) }
{+ (\kappa h \rho)^2 \left( 1-\inprod{ {\hat \bx}_{c}}{\bx_i}^2 - \inprod{{\hat \bx}_{c}}{\bx_j}^2 +\inprod{ {\hat \bx}_{c}}{\bx_i}\inprod{{\hat \bx}_{c}}{\bx_j}\inprod{\bx_i}{\bx_j} \right)}
}
{\sqrt{1+(\kappa h \rho)^2 \left(1-\inprod{{\hat \bx}_{c}}{\bx_i}^2 \right)}\sqrt{1+(\kappa h \rho)^2 \left(1-\inprod{{\hat \bx}_{c}}{\bx_j}^2 \right)}}
\end{split}
\end{align}
for $n\geq 0,~i,j \in [N]$. Here the right hand side of \eqref{C-7} is evaluated at the $n$-th time step.

\begin{lemma} \label{L3.1}
Suppose system parameters and initial data satisfy
\[ 0< \beta := \kappa h \leq 1, \quad  \mcal{B}(0) =\min\limits_{1\leq i,j\leq N} \langle \bx_i^0, \bx_j^0 \rangle > 0, \]
and let ${\mathcal X}$ be a solution to \eqref{C-6} with the initial data ${\mathcal X}^0$. Then, the following assertions hold. 
\begin{enumerate}[label=(\roman*)]
\item For each $i, j \in [N]$,  $B_{ij}(n)$ defined in \eqref{C-6-1-1} is monotonically increasing in $n$:
\[  B_{ij}(n+1) \geq  B_{ij}(n), \quad n \geq 0.\] 

\vspace{0.1cm}

\item The functionals $\langle \bx_i, \bx_c \rangle$ and $\rho$ are monotonically increasing in $n$:
\[ \langle \bx_i(n+1), \bx_c(n+1) \rangle \geq  \langle \bx_i(n), \bx_c(n) \rangle, \quad  \rho(n+1) \geq \rho(n), \quad n \geq 0. \]
 \end{enumerate}
\end{lemma}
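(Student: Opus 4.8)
The plan is to prove statement (i) first, and then derive statement (ii) from it. For (i), the core idea is to exploit the explicit formula \eqref{C-7} for $\inprod{\bx_i(n+1)}{\bx_j(n+1)}$ and show that it is bounded below by $\inprod{\bx_i(n)}{\bx_j(n)} = B_{ij}(n)$ whenever the hypotheses hold. Since \eqref{C-7} is a quotient, I would first clear the denominator: it suffices to show that the numerator $\mathcal{N}$ dominates $B_{ij}(n)$ times the product of square roots in the denominator $\mathcal{D}$. Writing $a := \inprod{{\hat \bx}_c}{\bx_i}$, $b := \inprod{{\hat \bx}_c}{\bx_j}$, $c := \inprod{\bx_i}{\bx_j} = B_{ij}(n)$, and $t := \kappa h \rho \in [0,1]$ (using $\beta = \kappa h \le 1$ and $\rho \le 1$), the claim $B_{ij}(n+1) \ge B_{ij}(n)$ reduces to the scalar inequality
\[
c + t(a+b)(1-c) + t^2\big(1 - a^2 - b^2 + abc\big) \;\ge\; c\,\sqrt{1 + t^2(1-a^2)}\,\sqrt{1 + t^2(1-b^2)}.
\]
Here $a, b, c > 0$ on account of $\mathcal{B}(0) > 0$ (which, by the monotonicity being proved inductively, propagates to all $n$, and which also forces $\langle \bx_i, {\hat\bx}_c\rangle > 0$, i.e. $a,b>0$).

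The key step is to verify this scalar inequality. I would square both sides to remove the remaining square roots, reducing everything to a polynomial inequality in $a, b, c, t$. The right-hand side squared is $c^2(1+t^2(1-a^2))(1+t^2(1-b^2))$, and the left-hand side squared is a polynomial of degree four in $t$. I expect that after expansion the difference (LHS$^2$ $-$ RHS$^2$) factors through the quantity $(1-c)$ and collects into a manifestly nonnegative sum of terms — typically the positive factor $(1-c) \ge 0$ (since states lie on the sphere, so $c = \inprod{\bx_i}{\bx_j} \le 1$) times something controlled by $a, b, c, t \ge 0$ with $t \le 1$. The hypotheses $t \le 1$ and $a,b,c > 0$ should be exactly what is needed to dominate the indefinite cross terms. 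This algebraic verification is the main obstacle: the degree-four polynomial in $t$ is unwieldy, and I anticipate needing to organize the expansion carefully — likely grouping by powers of $t$ and using $1 - c \ge 0$ together with Cauchy–Schwarz-type bounds such as $a^2, b^2 \le 1$ (since $\|{\hat\bx}_c\| = 1$, $\|\bx_i\| = 1$) — to certify nonnegativity coefficient by coefficient. Throughout I would carry an induction on $n$: assuming $\mathcal{B}(n) > 0$, the inequality gives $B_{ij}(n+1) \ge B_{ij}(n) > 0$, so positivity is preserved and the monotonicity holds for all $n$.

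For statement (ii), I would leverage part (i) rather than recompute. Since $\bx_c = \frac{1}{N}\sum_j \bx_j$, one has $\inprod{\bx_i}{\bx_c} = \frac{1}{N}\sum_j B_{ij}$ and $\rho^2 = \|\bx_c\|^2 = \frac{1}{N^2}\sum_{i,j} B_{ij}$; both are averages of the $B_{ij}$, so monotonicity of every $B_{ij}(n)$ in $n$ does \emph{not} immediately transfer, because the normalization \eqref{C-6} changes each $\bx_i$. Instead, I would argue directly from the dynamics: using \eqref{C-6-3} to control $\|\tilde\bx_i(n+1)\|$ and computing $\inprod{\bx_i(n+1)}{\bx_c(n+1)}$ via $\eqref{C-6}_2$, I would show the update increases this inner product, again reducing to a scalar inequality in $a := A_i(n) = \inprod{\bx_i}{{\hat\bx}_c}$, $\rho$, and $t = \kappa h\rho \le 1$, of the same flavor as above but simpler since only one particle's projection onto ${\hat\bx}_c$ enters. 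The monotonicity $\rho(n+1) \ge \rho(n)$ then follows, most cleanly, from the fact that $-\rho^2$ (up to the constant $\frac{\kappa}{2}N$) is the discrete Lyapunov functional $\mathcal{V}_s$ associated with the gradient-flow structure of Proposition \ref{P2.5}; alternatively, it can be read off by summing the per-particle estimate $\inprod{\bx_i(n+1)}{\bx_c(n+1)} \ge \inprod{\bx_i(n)}{\bx_c(n)}$ over $i$ and identifying $\rho^2 = \frac{1}{N}\sum_i \inprod{\bx_i}{\bx_c}$. The main subtlety in (ii) is handling the denominator normalization consistently, but it parallels part (i) and the same hypotheses $\beta \le 1$, $\mathcal{B}(0) > 0$ suffice.
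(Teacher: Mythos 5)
Your setup for part (i) is correct — reduce via \eqref{C-7} to a scalar inequality in $A_i, A_j, B_{ij}$ and $\gamma = \kappa h\rho \le 1$, with positivity of these quantities propagated by induction on $n$ — but the proof has a genuine gap exactly where you flag "the main obstacle": you never establish the scalar inequality, only conjecture that after squaring both sides the degree-four polynomial in $t$ "factors through $(1-c)$" into something manifestly nonnegative. That unverified expansion is the entire content of the lemma. (Squaring also requires knowing the numerator is nonnegative first, which is not obvious since $1 - a^2 - b^2 + abc$ can be negative for unconstrained $a,b,c$.) The paper avoids the quartic altogether with one elementary move you are missing: bound the denominator via
\[
\sqrt{1+x}\,\sqrt{1+y} \;\le\; 1 + \frac{x+y}{2}, \qquad x,y \ge 0,
\]
applied with $x = \gamma^2(1-A_i^2)$, $y = \gamma^2(1-A_j^2)$. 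This makes the target inequality \emph{linear} in $\gamma$, namely
$(A_i+A_j)(1-B_{ij}) \ge \gamma\bigl(A_i^2+A_j^2+B_{ij}-1-\tfrac{(A_i+A_j)^2}{2}B_{ij}\bigr)$,
which is then closed using $\gamma\le 1$ together with the Cauchy--Schwarz estimate $(A_i-A_j)^2 = |\inprod{{\hat \bx}_c}{\bx_i-\bx_j}|^2 \le 2(1-B_{ij})$ and $\tfrac{(A_i+A_j)^2}{2} \le A_i+A_j$ (since $0 < A_i, A_j \le 1$). Without this (or a completed brute-force certification), your part (i) is a plan, not a proof.

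In part (ii) you make an outright error that then sends you down an unnecessarily complicated path: you claim monotonicity of the $B_{ij}$ does "not immediately transfer" to $\inprod{\bx_i}{\bx_c}$ and $\rho$ "because the normalization changes each $\bx_i$". It transfers immediately. At every fixed time step $n$, by definition $\bx_c(n) = \frac{1}{N}\sum_j \bx_j(n)$ is the average of the \emph{already normalized} states, so $\inprod{\bx_i(n)}{\bx_c(n)} = \frac{1}{N}\sum_j B_{ij}(n)$ and $\rho(n)^2 = \frac{1}{N^2}\sum_{i,j} B_{ij}(n)$ hold identically in $n$; summing the conclusion of (i) over $j$ (resp.\ over $i,j$) gives both assertions of (ii) in two lines, which is exactly the paper's argument. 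Your proposed substitutes are both problematic: the "direct dynamics" computation of $\inprod{\bx_i(n+1)}{\bx_c(n+1)}$ is harder than (i) itself because $\bx_c(n+1)$ couples all updated states, and the appeal to $-\rho^2$ as "the discrete Lyapunov functional" of Proposition \ref{P2.5} is unjustified — that proposition concerns the continuous gradient flow, and no discrete Lyapunov property is established (or needed) for the scheme \eqref{C-6}. Your fallback of summing a per-particle estimate is salvageable, but only because it collapses to the averaging identity you initially dismissed.
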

\begin{proof}
(i)~It suffices to show that if $\mcal{B}(n)>0$, 
\[ B_{ij}(n+1) \geq B_{ij}(n). \]
Since $\mcal{B}(n)>0$, one has
\begin{equation} \label{C-7-1}
0 < A_i(n), B_{ij}(n) \leq 1, \quad i, j \in [N].
\end{equation}
For notational simplicity, we set 
\[ \gamma(n) := \kappa h \rho(n) \leq 1. \]
We use \eqref{C-7} to see that 
\begin{align*}
\begin{aligned} \label{C-8}
& \inprod{\bx_i(n+1)}{\bx_j(n+1)} \geq \inprod{\bx_i(n)}{\bx_j(n)} = B_{ij} \\
& \hspace{1.5cm} \Longleftrightarrow \quad  B_{ij} + \gamma (A_i + A_j ) (1-B_{ij}) + \gamma^2 (1-A_i^2 - A_j^2 + A_i A_j B_{ij}) \\
& \hspace{4.5cm} \geq B_{ij} \sqrt{1+\gamma^2 (1-A_{i}^2)} \sqrt{1+\gamma^2 (1-A_j^2)} \\
& \hspace{1.5cm} \Longleftarrow \quad B_{ij} + \gamma (A_i + A_j)(1-B_{ij}) + \gamma^2 (1-A_i^2 - A_j^2 + A_i A_j B_{ij})  \\
& \hspace{4.5cm} \geq B_{ij} +\gamma^2 \left(1-\frac{A_{i}^2+A_j^2}{2} \right) B_{ij}  \\
&  \hspace{1.5cm} \Longleftrightarrow \quad (A_i + A_j)(1-B_{ij}) \geq \gamma \left(A_i^2 + A_j^2 + B_{ij} - 1 - \frac{(A_i +A_j)^2}{2} B_{ij} \right),
\end{aligned}
\end{align*}
where we used the inequality:
\begin{equation} \label{C-8-0}
	\sqrt{1+x}\sqrt{1+y} \leq 1+\frac{x+y}{2} \quad \mbox{for $x,y\geq 0$},
\end{equation}
and $A_i,  B_{ij}$ and $\gamma$ are evaluated at the $n$-th time step. On the other hand, by the Cauchy-Schwarz inequality, one has 
\begin{equation} \label{C-8-1}
    (A_i - A_j)^2 = \abs{\inprod{{\hat \bx}_{c}}{\bx_i-\bx_j}}^2 \leq \norm{\bx_i-\bx_j}^2 = 2(1-B_{ij}).
\end{equation}
Finally, we use $\gamma \leq 1$, \eqref{C-7-1} and \eqref{C-8-1} to see
\begin{align*}
\begin{aligned}
& \gamma \left(A_i^2 + A_j^2 + B_{ij} - 1 - \frac{(A_i +A_j)^2}{2} B_{ij} \right)  \leq \; A_i^2 + A_j^2 - \frac{(A_i-A_j)^2}{2} - \frac{(A_i +A_j)^2}{2} B_{ij} \\
& \hspace{1cm} = \frac{(A_i+A_j)^2}{2}(1-B_{ij}) \leq  (A_i + A_j)(1-B_{ij}).
\end{aligned}
\end{align*}

\vspace{0.2cm}

\noindent (ii)~We use the result (i) to find that for a fixed $i \in [N]$, 
\[ \langle \bx_i(n+1), \bx_j(n+1) \rangle \geq \langle \bx_i(n), \bx_j(n) \rangle, \quad n\geq 0. \]
We sum the above relation over $j \in [N]$ and then divide the resulting relation by $N$ to find 
\[ \langle \bx_i(n +1), \bx_c(n +1) \rangle \geq \langle \bx_i(n), \bx_c(n) \rangle, \quad n \geq 0. \]
On the other hand, by the definition of $\rho$ and the result of (i), one has 
\begin{align*}
\begin{aligned}
 \rho(n+1)^2 &= \langle \bx_c(n+1), \bx_c(n+1) \rangle = \frac{1}{N^2} \sum_{i,j \in [N]} \langle \bx_i(n+1), \bx_j(n+1) \rangle  \\
 &\geq  \frac{1}{N^2} \sum_{i,j \in [N]} \langle \bx_i(n), \bx_j(n) \rangle = \langle \bx_c(n), \bx_c(n) \rangle = \rho(n)^2, \quad n \geq 0.
 \end{aligned}
 \end{align*}
This yields the desired estimate.
\end{proof}
Now, we are ready to provide our first main result on the complete state aggregation of \eqref{C-6}.
\begin{theorem} \label{T3.1}
Suppose system parameters and initial data satisfy
\[ 0< \beta = \kappa h \leq 1, \quad  \mcal{B}(0) =\min\limits_{1\leq i,j\leq N} \langle \bx_i^0, \bx_j^0 \rangle > 0, \]
and let ${\mathcal X}$ be a solution to \eqref{C-6} with the initial data ${\mathcal X}^0$. Then, the complete state aggregation emerges:
\[\lim_{n\to\infty} \max_{1 \leq i, j \leq N} \norm{\bx_i(n) - \bx_j(n)} = 0. \]
\end{theorem}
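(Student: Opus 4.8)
The plan is to reduce the complete state aggregation to the single scalar statement that the order parameter $\rho(n)$ converges to $1$, and then to establish this via a one-step contraction estimate produced by the corrector (projection) step.

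First I would recast the diameter in terms of the functionals in \eqref{C-6-1-1}. Since all states lie on the sphere, $\norm{\bx_i(n)-\bx_j(n)}^2=2\big(1-B_{ij}(n)\big)$ and $\norm{\bx_i(n)-\hat{\bx}_c(n)}^2=2\big(1-A_i(n)\big)$, while expanding $\rho^2=\inprod{\bx_c}{\bx_c}=\frac1N\sum_i\inprod{\bx_i}{\bx_c}=\rho\cdot\frac1N\sum_i A_i$ gives the identity $\rho(n)=\frac1N\sum_{i=1}^N A_i(n)$. As each $A_i\le 1$, once $\rho(n)\to 1$ we get $\frac1N\sum_i\big(1-A_i(n)\big)=1-\rho(n)\to 0$ with nonnegative summands, so every $A_i(n)\to 1$; then $\norm{\bx_i(n)-\hat{\bx}_c(n)}\to 0$ and the diameter vanishes by the triangle inequality. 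Thus it suffices to prove $\rho(n)\to 1$. I would also record the uniform bounds furnished by Lemma \ref{L3.1}: since $\mcal{B}(n)$ is nondecreasing, $\mcal{B}(n)\ge\mcal{B}(0)>0$, whence $\rho(n)^2=\frac1{N^2}\sum_{i,j}B_{ij}(n)\ge\mcal{B}(0)>0$; together with the monotonicity of $\rho$ this yields $\gamma(n):=\kappa h\rho(n)\in[\gamma_0,1]$ with $\gamma_0:=\kappa h\rho(0)>0$. Being nondecreasing and bounded above by $1$, $\rho(n)$ converges to some $\rho^\infty\le 1$, and the goal becomes $\rho^\infty=1$.

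The heart of the argument is a gain for $\rho$ generated by the projection. Using $\eqref{C-6}_2$ and \eqref{C-6-3}, I would compute the inner product of the corrected state with the \emph{old} center direction,
\[ \inprod{\bx_i(n+1)}{\hat{\bx}_c(n)}=F\big(A_i(n)\big),\qquad F(A):=\frac{A+\gamma(1-A^2)}{\sqrt{1+\gamma^2(1-A^2)}}, \]
and then project $\bx_c(n+1)$ onto $\hat{\bx}_c(n)$ to obtain $\rho(n+1)=\norm{\bx_c(n+1)}\ge\inprod{\bx_c(n+1)}{\hat{\bx}_c(n)}=\frac1N\sum_i F\big(A_i(n)\big)$. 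The plan is to prove the scalar inequality $F(A)-A\ge c\,(1-A)$ for all $A\in[0,1]$ and $\gamma\in[\gamma_0,1]$, with an explicit $c=c(\gamma_0)\in(0,1]$; averaging over $i$ then gives $\rho(n+1)\ge\rho(n)+c\big(1-\rho(n)\big)$, i.e. $1-\rho(n+1)\le(1-c)\big(1-\rho(n)\big)$, so $\rho(n)\to 1$ geometrically and the theorem follows.

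The main obstacle is to keep the per-step gain \emph{uniform} in $n$: the projection improves alignment only in proportion to $\gamma(n)=\kappa h\rho(n)$, which would be useless if the center collapsed, and this is exactly why the hypothesis $\mcal{B}(0)>0$ is essential, since through Lemma \ref{L3.1} it propagates to $\rho(n)\ge\rho(0)>0$ and hence $\gamma(n)\ge\gamma_0>0$. Granting this, the scalar inequality is routine: squaring $F(A)\ge A$ reduces the gain to $\gamma(1-A^2)\big[2A+\gamma(1-2A^2)\big]$ over a bounded positive denominator, and since $A\mapsto 2A+\gamma(1-2A^2)$ is concave with endpoint minimum $\min(\gamma,2-\gamma)=\gamma\ge\gamma_0$ (here $\gamma\le 1$, inherited from $\kappa h\le 1$ as in Lemma \ref{L3.1}), the gain dominates a constant multiple of $\gamma_0(1-A^2)\ge\gamma_0(1-A)$, producing $c$. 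If one prefers to avoid the explicit rate, a softer compactness argument also closes the proof: as $\rho(n)$ converges, $\rho(n+1)-\rho(n)\to 0$, and passing to a subsequential limit $\mcal{X}(n_k)\to\mcal{X}^*$ in $\rho(n_k+1)\ge\frac1N\sum_i F\big(A_i(n_k)\big)$ forces $F(A_i^*)=A_i^*$ for every $i$; since $F(A)>A$ whenever $A<1$ and $\gamma_0>0$, this gives $A_i^*=1$, hence $\rho^\infty=1$.
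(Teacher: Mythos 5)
Your proposal is correct, but it follows a genuinely different route from the paper. The paper's proof is qualitative: it invokes Lemma \ref{L3.1} to get monotone convergence of $\rho(n)$, $\inprod{\bx_c(n)}{\bx_i(n)}$ and $B_{ij}(n)$ to limits, passes to the limit in the pairwise recursion \eqref{C-7} (using the same elementary inequality \eqref{C-8-0}), and shows by contradiction that the limiting one-step increment can vanish only if $B_{ij}^\infty=1$, the degenerate alternative $\beta(A_i^\infty+A_j^\infty)=2$ collapsing to the same conclusion; no rate of convergence is produced. You instead reduce everything to the scalar order parameter via the identity $\rho=\frac1N\sum_i A_i$ (valid since $\rho(n)\geq\sqrt{\mcal{B}(0)}>0$), and prove a uniform one-step contraction $1-\rho(n+1)\leq(1-c)\bigl(1-\rho(n)\bigr)$ by bounding $\inprod{\bx_i(n+1)}{{\hat \bx}_c(n)}=F(A_i(n))$ from below with the gain function $F$; your verification of the scalar inequality is sound (the squared gain $\gamma(1-A^2)\bigl[2A+\gamma(1-2A^2)\bigr]$ is indeed bounded below by $\gamma_0^2(1-A)$ via the concavity/endpoint argument, and the denominator is uniformly bounded since $\gamma\leq 1$). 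What your approach buys is a quantitative geometric rate for $\rho(n)\to 1$, hence for the diameter, with explicit dependence on $\gamma_0=\kappa h\rho(0)$ --- strictly more than the paper's Theorem \ref{T3.1} asserts; what it costs is that you conclude through the center of mass rather than obtaining the monotone improvement of each pairwise correlation $B_{ij}(n)$, which the paper's route gives directly. One small point you should make explicit: the squaring step in $F(A)\geq A$ and the restriction to $A\in[0,1]$ require $A_i(n)\geq 0$, which does hold because $\inprod{\bx_i(n)}{\bx_c(n)}=\frac1N\sum_j B_{ij}(n)\geq \mcal{B}(n)\geq\mcal{B}(0)>0$ by Lemma \ref{L3.1}, exactly as in \eqref{C-7-1}; with that line added, both your contraction argument and your alternative compactness ending (where continuity of $F$ in the converging parameter $\gamma(n_k)$ should also be noted) are complete.
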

\begin{proof}
Since 
\[ \| \bx_i(n) - \bx_j(n) \|^2 = 2 (1 - \langle \bx_i(n), \bx_j(n) \rangle) = 2(1- B_{ij}(n)), \]
it suffices to show 
\begin{equation} \label{C-8-2}
\lim_{n \to \infty}  B_{ij}(n) = 1. 
\end{equation}
{\it Proof of \eqref{C-8-2}}:~By Lemma \ref{L3.1}, the following quantities 
\[ \rho(n), \quad  \inprod{\bx_c(n)}{\bx_i(n)} = \rho(n) \inprod{{\hat \bx}_{c}(n)}{\bx_i(n)} = \rho(n) A_i(n),  \quad B_{ij}(n) = \inprod{\bx_i(n)}{\bx_j(n)} \]
monotonically increase, as $n$ increases and these are bounded above by $1$. Hence, there exist numbers $\rho^{\infty},~A_i^{\infty},~B_{ij}^{\infty} \in (0, 1)$ such that 
\[
\lim_{n\to\infty} \rho(n) = \rho^\infty, \quad 
\lim_{n\to\infty} \inprod{\bx_c(n)}{\bx_i(n)} = A_{i}^\infty, \quad 
\lim_{n\to\infty} \inprod{\bx_i(n)}{\bx_j(n)} = B_{ij}^\infty,
\]
for all $i,j \in [N]$. Then, one has
\[ \lim_{n\to\infty} \inprod{\bx_i(n)}{\bx_j(n)} = \lim_{n\to\infty} \inprod{\bx_i(n+1)}{\bx_j(n+1)} = B_{ij}^\infty, \]
and we use \eqref{C-8-0} to get
\begin{align}
\begin{aligned} \label{C-8-3}
& B_{ij}^\infty + \beta (A_i^\infty + A_j^\infty) (1-B_{ij}^\infty) +\beta^2 \big\{ (\rho^\infty)^2 - (A_i^\infty)^2 - (A_j^\infty)^2 + A_i^\infty A_j^\infty B_{ij}^\infty \big\} \\
& \hspace{0.5cm} = B_{ij}^\infty \sqrt{1+\beta^2 \big\{(\rho^\infty)^2-(A_{i}^\infty)^2\big\}} \sqrt{1+\beta^2 \big\{(\rho^\infty)^2-(A_j^\infty)^2\big\}}.
\end{aligned}
\end{align}
Since 
\[ (A_i(n) - A_j(n))^2 = \abs{\inprod{\bx_c}{\bx_i - \bx_j}}^2 \leq \norm{\bx_c}^2 \norm{\bx_i-\bx_j}^2 = 2\rho(n)^2 (1-B_{ij}(n)), \]
we have 
\[ (A_i^\infty - A_j^\infty)^2 \leq 2(\rho^\infty)^2 (1-B_{ij}^\infty). \]
On the other hand, note the following quantity:
\[ \Delta B_{ij}(n) = B_{ij}(n +1) - B_{ij}(n). \]
We take $n \to \infty$ to $\Delta B_{ij}(n)$ and use \eqref{C-7-1} and \eqref{C-8-3} to get 
\begin{align*}
\begin{aligned}
0 &\geq B_{ij}^\infty + \beta (A_i^\infty + A_j^\infty) (1-B_{ij}^\infty) +\beta^2 \big\{ (\rho^\infty)^2 - (A_i^\infty)^2 - (A_j^\infty)^2 + A_i^\infty A_j^\infty B_{ij}^\infty \big\} \\
& \hspace{0.2cm} - B_{ij}^\infty \left( 1+ \beta^2 \left((\rho^\infty)^2 - \frac{(A_i^\infty)^2 + (A_j^\infty)^2}{2} \right) \right) \\
& = \beta(A_i^\infty + A_j^\infty)(1-B_{ij}^\infty) - \beta^2 \frac{(A_i^\infty + A_j^\infty)^2}{2}(1-B_{ij}^\infty) \\
& \hspace{1cm} + \beta^2 \left\{ (\rho^\infty)^2 (1-B_{ij}^\infty) - \frac{(A_i^\infty - A_j^\infty)^2}{2} \right\} \\
&\geq \beta(A_i^\infty + A_j^\infty)(1-B_{ij}^\infty)\left(1-\beta \frac{A_i^\infty + A_j^\infty}{2} \right) \geq 0.
\end{aligned}
\end{align*}
This yields, 
\[ \mbox{either}~B_{ij}^\infty = 1 \quad \mbox{or} \quad \beta(A_i^\infty + A_j^\infty) = 2. \]
Since $0 < \beta \leq 1$ and $0 < A_i \leq 1$, the latter case implies 
\[ \beta = A_i^\infty = A_j^\infty = 1. \]
This implies $B_{ij}^\infty = 1$. Therefore,  for all cases, we get the desired estimate:
\[ B_{ij}^\infty = 1, \quad  i, j \in [N]. \] 
\end{proof}

\section{Discrete Lohe matrix flow on the unitary group} \label{sec:4}
\setcounter{equation}{0}
In this section, we present several discretization algorithms for the Lohe matrix model \eqref{A-2} using an exponential map such as the Lie group integrator method and splitting methods, and then we apply the aforementioned discretization algorithms to the Lohe matrix model to derive three different discrete models for the Lohe matrix model. More over, for the three discrete models, we study their emergent properties. 
\subsection{Discretization methods} \label{sec:4.1}
In this subsection, we study two discretization methods, namely ``{\it the Lie group integrator method}" and ``{\it operator splitting method}". The Euler method discussed in the previous section may not preserve the structure of a state and thus a projection is required. In what follows, we present discretization algorithms without a projection step.
\subsubsection{The Lie group integrator method} \label{sec:4.1.1}
Let $({\mathcal M}, \langle \cdot, \cdot \rangle_{{\mathcal M}})$ be a Riemannian manifold with metric $\langle \cdot, \cdot \rangle_{\mathcal M}$, and we consider the Cauchy problem:
\begin{equation} \label{D-0}
\begin{cases}
\displaystyle \dot{\bx}_i =  {\boldmath f}_i({\mathcal X}),\quad t > 0,~~i \in [N], \\
\displaystyle \bx_i(0) = \bx_i^0\in\mcal{M},
\end{cases}
\end{equation}
where we assumed
\begin{equation} \label{D-0-0}
{\boldmath f}_i({\mathcal X}) \in T_{\bx} \mcal{M} \quad \mbox{at any point $\bx \in M$},~~i \in [N].
\end{equation} 
As noticed before, the forward Euler scheme does not guarantee ${\mathcal X}(n+1) \in \mcal{M}^N$ even if ${\mathcal X}(n) \in {\mathcal M}^N$ in general.  Thus, we take the update of $\bx_i(n)$ to the tangential direction ${\boldmath f}_i({\mathcal X}(n))$ via an {\it exponential map} \cite{Bie, Do}.  Before we move on further, we briefly recall the exponential map below. We set the tangent bundle $T {\mathcal M}$ as 
\[ T{\mathcal M} := \{(\bx, \bv) \in {\mathcal M} \times T_{\bx} {\mathcal M}:~ \bx \in {\mathcal M},~~ \bv \in T_{\bx} {\mathcal M} \}, \]
where $T_{\bx} {\mathcal M}$ is the tangent space of $\mcal{M}$ at $\bx \in {\mathcal M}$.
Since we are interested in asymptotic behaviors, we only consider that the Riemannian manifold ${\mathcal M}$ is {\it geodesically complete} in the sense that for any two points $\bx, \by \in {\mathcal M}$, there exists a unique length minimizing geodesic between $\bx$ and $\by$. For a given $(\bx, \bv) \in {\mathcal M} \times T_{\bx}{\mathcal M}$, let $\gamma: [0, 1] \to {\mathcal M}$  be a locally length minimizing geodesic such that 
\begin{equation} \label{D-0-1}
\gamma(0) = \bx, \quad {\dot \gamma}(0) = \bv. 
\end{equation}
For $\bx \in {\mathcal M}$, the exponential map $\mbox{exp}_{\bx}:~T_{\bx} {\mathcal M} \to {\mathcal M}$ is defined as 
\[ \exp_{\bx}(\bv) :=  \gamma(1). \]
Note that for the trivial manifold ${\mathcal M} = {\mathbb R}^d$, one has 
\[ T_x {\mathcal M} \simeq {\mathbb R}^d, \quad  T{\mathcal M}   \simeq  {\mathbb R}^d \times  {\mathbb R}^d,  \]
and the geodesic $\gamma$ satisfying \eqref{D-0-1} and the exponential map are just a straight line passing through $\bx$ in the direction of $\bv$ and the sum of $\bx$ and $\bv$:
\[ \gamma(t) = \bx + t \bv, \quad t \in {\mathbb R}, \quad  \exp_{\bx} \bv = \gamma(1) = \bx + \bv. \]
Now, we propose the discrete model for \eqref{D-0} via the exponential map update  ${\mathcal X}(n)$ according to the following discrete map:
\begin{equation}
\begin{cases} \label{D-0-2}
\bx_i(n+1) = \exp_{\bx_i(n)}\Big (h {\boldmath f}_i({\mathcal X}(n)) \Big), \quad n \geq 0, \\
\bx_i(0) = \bx_i^0 \in\mcal{M}, \quad i \in [N].
\end{cases}
\end{equation}
By \eqref{D-0-0}, as long as system \eqref{D-0-2} admits a solution, one has
\[ {\mathcal X}(n)\in\mcal{M}^N, \quad n\geq 0.\]
Note that for ${\mathcal M} = {\mathbb R}^d$, discrete system \eqref{D-0-2} can cast as the first-order forward Euler discretization of \eqref{D-0}:
\[  {\bx}_i (n+1) = \bx_i(n) + h {\boldmath f}_i({\mathcal X}(n)), \quad n \geq 0,~~ i \in [N].  \]

\vspace{0.2cm}

Next, we return to our setting. Let $G$ be a Lie group, and consider the following Cauchy problem:
\begin{equation} \label{D-1}
\begin{cases}
\displaystyle \dot{\bx}_i = A_i({\mathcal X}) \bx_i,\quad t > 0, \\
\displaystyle \bx_i(0) = \bx_i^0 \in G, \quad i \in [N],
\end{cases}
\end{equation}
where we assumed
\begin{equation} \label{D-1-0}
 A_i({\mathcal X})\in \mathfrak{g} := T_{\text{id}} G \quad \mbox{at any point $\bx \in G$}. 
\end{equation}
Note that a Lie group equipped with a left-invariant metric is geodesically complete. For a Lie group case, discrete scheme \eqref{D-0-2} can be rewritten as
\begin{equation}
\begin{cases} \label{D-1-2}
\bx_i(n+1) = \exp\big (h A_i({\mathcal X}(n)) \big) \bx_i(n), \quad n \geq 0, \\
\bx_i(0) = \bx_i^0 \in G,~~ i \in [N].
\end{cases}
\end{equation}
On a matrix Lie group, the exponential map is equivalent to the matrix exponential. This numerical scheme for system \eqref{D-1} is often called the discrete model with the {\it ``Lie group integrator method"} \cite{C-M-O, Ib, I-M-N}.

\subsubsection{Operator splitting method} \label{sec:4.1.2} 
In this part, we consider the case in which the coefficient in front of $\bx_i$ in \eqref{D-1} can split into two parts, say
\begin{equation} \label{D-1-3}
\begin{cases}
\displaystyle \dot{\bx}_i = (A_i({\mathcal X}) + B_i({\mathcal X})) \bx_i,\quad t > 0,\\
\displaystyle \bx_i(0) = \bx_i^0 \in G,~~i \in [N],
\end{cases}
\end{equation}
where we also assume \eqref{D-1-0} for $A_i$ and $B_i$ for all $i \in [N]$. For constant matrices $A_i$ and $B_i$, the unique solution to \eqref{D-1-3} is given by
\[ \bx_i(t) = \exp( tA_i+ tB_i ) \bx_i^0, \quad t \geq 0,\quad i \in [N]. \]
If $A_i$ and $B_i$ are not commutative, it is very difficult to calculate the matrix exponential $ e^{tA_i+ tB_i}$. Thus, we propose another discrete model motivated by the Lie-Trotter product formula \cite{Tro}:
\begin{equation*}
e^{A_i+B_i} = \lim_{n\to\infty} \left(e^{A_i/n} e^{B_i/n}\right)^n.
\end{equation*}
Now, if we apply the Lie-trotter splitting method to \eqref{D-1-3}, one has 
\begin{equation} \label{D-1-4}
\begin{cases}
\displaystyle \bx_i(n+1) = \exp(h A_i({\mathcal X}(n)) )\exp(h B_i({\mathcal X}(n))) \bx_i(n) ,\quad n \geq 0, \\
\bx_i(0) = \bx_i^0 \in G,~~ i \in [N].
\end{cases}
\end{equation}
It is known that the Lie-trotter splitting scheme admits a first order local truncation error:
\[e^{h(A+B)} = e^{hA} e^{hB} + \mcal{O}(h).\]
Our discrete models are the first order approximation of the continuous model, hence the discretization admits second order local truncation error. The first order error from the operator splitting may be relatively large to the local truncation error from the time discretization. To get rid of such issues, we introduce the Strang splitting:
\[e^{h(A+B)} = e^{hA/2} e^{hB} e^{hA/2} + \mcal{O}(h^2). \]
See \cite{J-L, Mu, Str} for the detailed description of the Strang splitting and higher order splitting methods. The Strang splitting method applied to \eqref{D-1-3} yields
\begin{equation}
\begin{cases} \label{D-1-5}
\displaystyle \bx_i(n+1) = \exp \left(\frac{h}{2} A_i({\mathcal X}(n)) \right) \exp \Big(h B_i({\mathcal X}(n)) \Big) \exp \left(\frac{h}{2} A_i({\mathcal X}(n)) \right) \bx_i(n),~~ n \geq 0,\\
\displaystyle \bx_i(0) = \bx_i^0 \in G, \quad i \in [N].
\end{cases}
\end{equation}

\subsection{Discrete Lohe matrix models}  \label{sec:4.2}
In this subsection, we present three explicit discrete models for \eqref{A-2} using the discretization schemes discussed in the previous subsection. First, we rewrite the system \eqref{A-2} as in the form of  \eqref{D-1}:
\begin{equation} 
\begin{cases} \label{D-2}
\displaystyle \dot{U}_i = A_i(\mathcal{U}) U_i, \quad i \in [N], \\
\displaystyle A_i(\mathcal{U}) = -\mathrm{i}H_i + \frac{\kappa}{2}\big(U_c U_i^\dag - U_i U_c^\dag \big),
\end{cases}
\end{equation}
where $U_c := \frac{1}{N} \sum_{j=1}^{N} U_j$.

\vspace{0.2cm}

In what follows, we introduce three discretize schemes using Lie group integrator formula \eqref{D-1-2}, Lie-Trotter splitting formula \eqref{D-1-4}, and Strang splitting formula \eqref{D-1-5}. Without these formulas, we might use the projection algorithm(which applied in the Lohe sphere model) to discretize the Lohe matrix model. However, the projection of a general matrix onto the unitary group $\mathbb{U}(d)$ is incomparably harder, since we should use the SVD(singular value decomposition) to find the projection. However, the SVD has no explicit formula. This is the priimary reason why we can not find estimates for the discretized system via projection.

\vspace{0.2cm}

For the discretized schemes \eqref{D-1-2}, \eqref{D-1-4}, and \eqref{D-1-5}, we can make sufficient estimates for asymptotic behaviors. In particular,  for a homogeneous model, we obtain complete state aggregation as expected, whereas we obtain the orbital stability for a heterogeneous model. It is the main benefit of using the Lie group exponential maps.

\vspace{0.2cm}

\subsubsection{The discrete Lohe matrix model A}~If we apply the Lie group integrator method \eqref{D-1-2} to the Lohe matrix model \eqref{D-2}, one has the discrete Lohe matrix model A: 
\begin{align}
\begin{aligned} \label{D-2-2}
U_i (n+1) &= \exp \Big(h A_i (\mathcal{U}(n)) \Big) U_i(n) \\
 &=  {\exp\left(-\mathrm{i} H_i h + \frac{\kappa h}{2}(U_c(n) U_i^\dag (n) - U_i(n) U_c^\dag (n))\right) U_i(n)},  \quad i \in [N], ~~ n \geq 0.
\end{aligned}
\end{align}
Discrete model \eqref{D-2-2} can be reduced to the discrete Kuramoto model \eqref{C-5-1} in a special setting:
\begin{equation} \label{D-2-3}
U_i = e^{-\mathrm{i}\theta_i},\quad H_i = \nu_i, \quad i \in [N].
\end{equation}
It follows from \eqref{D-2-2} and \eqref{D-2-3} that 
\begin{align}
\begin{aligned} \label{D-2-4}
& \exp(-\mathrm{i}\theta_i(n+1)) = U_i(n+1) = \exp \bigg(-\mathrm{i}H_i h + \frac{\kappa h}{2N} \sum_{j=1}^N (U_j U_i^\dag - U_i U_j^\dag) \bigg) U_i(n) \\
& \hspace{.7cm} = \exp \bigg( -\mathrm{i}\nu_i h + \frac{\kappa h}{2N} \sum_{j=1}^N (e^{-\mathrm{i}(\theta_j(n) - \theta_i(n))} - e^{-\mathrm{i}(\theta_i(n) - \theta_j(n))})\bigg) \exp(-\mathrm{i}\theta_i(n)) \\
& \hspace{.7cm} = \exp\bigg(-\mathrm{i}\Big[ \theta_i(n) + \nu_i h + \frac{\kappa h}{N} \sum\limits_{j=1}^N \sin(\theta_j(n) - \theta_i(n)) \Big] \bigg).
\end{aligned}
\end{align}
By comparing the exponents on both sides of \eqref{D-2-4}, one obtains the discrete Kuramoto model \eqref{C-5-1}. \newline

\vspace{0.2cm}

\subsubsection{The discrete Lohe matrix model B}~If we apply the Lie-Trotter splitting scheme \eqref{D-1-4} for system \eqref{D-2},  one obtains the second discrete model: 
\begin{equation} \label{D-2-5}
\displaystyle U_i (n+1)=  \exp(-\mathrm{i} H_i h)  \exp \left(\frac{\kappa h}{2}(U_c(n) U_i^\dag (n) - U_i(n) U_c^\dag (n))\right) U_i(n),
\end{equation}
for $i \in [N]$ and $n \geq 0$. Note that this model \eqref{D-2-5} can not be followed from \eqref{D-2-2}, unless $-\mathrm{i} H_i h$ and $\frac{\kappa h}{2}(U_c(n) U_i^\dag (n) - U_i(n) U_c^\dag (n))$ commute in general. 

\vspace{0.2cm}

\subsubsection{The discrete Lohe matrix model C}~Again, we apply the Strang splitting scheme \eqref{D-1-5} for system \eqref{D-2} to get the third discrete model:
\begin{equation} \label{D-2-6}
\displaystyle U_i (n+1)=  \exp \left(-\frac{\mathrm{i} H_i h}{2} \right)  \exp \left(\frac{\kappa h}{2}(U_c(n) U_i^\dag (n) - U_i(n) U_c^\dag (n))\right) \exp \left(-\frac{\mathrm{i} H_i h}{2} \right) U_i(n),
\end{equation}
for $ i \in [N]$ and $n \geq 0$.

\vspace{0.2cm}

Note that for zero free flows:
\[ H_i = O,  \quad i \in [N]. \]
all discrete Lohe matrix models reduced to same form \eqref{D-3}. In the following three sections, we study emergent dynamics of the discrete Lohe matrix models one by one. 

\section{A homogeneous matrix ensemble}
}  \label{sec:5}
\setcounter{equation}{0} 
In this section, we study emergent behaviors of a homogeneous ensemble with the same hamiltonian $H$. Thanks to the solution splitting property, at the level of continuous system, we can assume the common hamiltonian to be zero. In this case, all three discrete Lohe matrix models coincide and it reads as 
\begin{equation} \label{D-3}
\displaystyle U_i (n+1)  = \exp\left( \frac{\kappa h}{2}(U_c(n) U_i^\dag (n) - U_i(n) U_c^\dag (n))\right) U_i(n),\quad i \in [N] ,~~ n \geq 0.
\end{equation}

\vspace{0.2cm}

\subsection{Preliminary lemmas} \label{sec:5.1}
In this subsection, we study several lemmas to be crucially used in the next subsection. First, we study the relations between the Frobenius norm and operator norm for complex valued square matrix. 
\begin{lemma} \label{L5.1} 
Let $A$ and $B$ be square matrices in ${\mathbb C}^{d \times d}$ and $U \in \mbf{U}(d)$. Then, one has 
\[ \norm{AB}_\F \leq \norm{A}_{\op} \norm{B}_\F,\quad \norm{AB}_\F \leq \norm{A}_\F \norm{B}_{\op}, \quad \norm{AU}_\F = \norm{UA}_\F = \norm{A}_\F. \]
\end{lemma}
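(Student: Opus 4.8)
The plan is to establish the first submultiplicative bound via a column decomposition, deduce the second by passing to Hermitian conjugates, and obtain the unitary invariance directly from the cyclicity of the trace. First I would write $B$ in terms of its columns, $B = (b_1, \dots, b_d)$ with $b_k \in \C^d$, so that $AB = (Ab_1, \dots, Ab_d)$. Since $\norm{M}_\F^2$ equals the sum of the squared $2$-norms of the columns of $M$, one has $\norm{AB}_\F^2 = \sum_{k=1}^d \norm{Ab_k}_2^2$. Bounding each summand through the operator norm via $\norm{Ab_k}_2 \leq \norm{A}_{\op} \norm{b_k}_2$ and summing yields $\norm{AB}_\F^2 \leq \norm{A}_{\op}^2 \sum_{k=1}^d \norm{b_k}_2^2 = \norm{A}_{\op}^2 \norm{B}_\F^2$, which is the first inequality.

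Next, for the second inequality I would pass to Hermitian conjugates. From the definition $\norm{M}_\F = \sqrt{\tr{M^\dag M}} = \sqrt{\tr{M M^\dag}}$ recorded in the Gallery of Notation, one immediately gets $\norm{M}_\F = \norm{M^\dag}_\F$, and likewise $\norm{M}_{\op} = \norm{M^\dag}_{\op}$ since conjugation preserves singular values. Applying the already proved first inequality to the product $B^\dag A^\dag$ then gives $\norm{AB}_\F = \norm{B^\dag A^\dag}_\F \leq \norm{B^\dag}_{\op} \norm{A^\dag}_\F = \norm{B}_{\op} \norm{A}_\F$, which is the claimed bound.

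Finally, for the unitary invariance I would compute the two quantities directly. Using $U^\dag U = U U^\dag = I_d$ together with the cyclic property of the trace, $\norm{UA}_\F^2 = \tr{A^\dag U^\dag U A} = \tr{A^\dag A} = \norm{A}_\F^2$ and $\norm{AU}_\F^2 = \tr{U^\dag A^\dag A U} = \tr{A^\dag A U U^\dag} = \tr{A^\dag A} = \norm{A}_\F^2$, and taking square roots gives the three equalities. Since each step is a short trace or norm manipulation, there is no genuine obstacle here; the only points requiring mild care are the column-decomposition identity for the Frobenius norm used in the first step and the transfer $\norm{M^\dag}_{\op} = \norm{M}_{\op}$ invoked to carry the first bound over to the second.
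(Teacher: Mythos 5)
Your proof is correct and follows essentially the same route as the paper: the column decomposition $\norm{AB}_\F^2 = \sum_k \norm{Ab_k}^2 \leq \norm{A}_{\op}^2 \norm{B}_\F^2$ for the first inequality, passing to Hermitian conjugates for the second, and the trace computation with $UU^\dag = U^\dag U = I_d$ for the unitary invariance. The only (harmless) differences are that you spell out the transfers $\norm{M^\dag}_\F = \norm{M}_\F$ and $\norm{M^\dag}_{\op} = \norm{M}_{\op}$ explicitly and verify both $\norm{UA}_\F$ and $\norm{AU}_\F$, whereas the paper writes out only the $AU$ case.
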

\begin{proof}
Let $B = \left[\begin{array}{c|c|c} b_1 & \cdots & b_d \end{array}\right]$ for column vectors $b_1,\cdots,b_d$. Then, we have
\begin{align*}
& \norm{AB}_\F^2 = \sum_{i=1}^d \norm{Ab_i}^2 \leq \norm{A}_{\op}^2 \sum_{i=1}^d \norm{b_i}^2 = \norm{A}_{\op}^2 \norm{B}_\F^2, \\
& \norm{AB}_\F =\snorm{B^\dag A^\dag}_\F \leq \snorm{B^\dag}_{\op} \snorm{A^\dag}_\F = \snorm{B}_{\op} \snorm{A}_\F, \\
& \norm{AU}_\F^2 = \tr{AUU^\dag A^\dag} = \tr{AA^\dag} = \norm{A}_\F^2 . \qedhere
\end{align*}
\end{proof}
\begin{lemma} \label{L5.2}
Let $A_1,\cdots,A_k$ and $B_1,\cdots, B_k$ be square matrices in ${\mathbb C}^{d\times d}$. Then, one has the following assertions:
\begin{align*}
& \|A_1A_2\cdots A_k\|_\F \leq \|A_1\|_\F \|A_2\|_\F \cdots \|A_k\|_\F,\\
& |\mathrm{tr}(A_1A_2\cdots A_k)|\leq \norm{A_1}_\F \norm{A_2}_\F \cdots  \norm{A_k}_\F.
\end{align*}
\end{lemma}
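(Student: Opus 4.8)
The plan is to reduce the whole statement to the $k=2$ case of Lemma \ref{L5.1} together with the elementary bound $\norm{A}_{\op} \leq \norm{A}_\F$, and then bootstrap by induction on $k$. (The matrices $B_1,\dots,B_k$ in the statement play no role in either inequality, so I would simply disregard them.)

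First I would record the elementary fact $\norm{A}_{\op} \leq \norm{A}_\F$, valid for every $A \in \C^{d\times d}$: since $\norm{A}_{\op}$ is the largest singular value of $A$ while $\norm{A}_\F$ is the square root of the sum of squares of all the singular values, the former is dominated by the latter. Combining this with the first inequality of Lemma \ref{L5.1} produces the seed estimate
\[ \norm{A_1 A_2}_\F \leq \norm{A_1}_{\op}\norm{A_2}_\F \leq \norm{A_1}_\F \norm{A_2}_\F. \]
This is precisely the $k=2$ case of the Frobenius product bound.

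Next I would prove the Frobenius product bound for general $k$ by induction. With the base case already in hand, the inductive step groups the product as $A_1 (A_2\cdots A_k)$, treats $A_2\cdots A_k$ as a single matrix, and applies the seed estimate followed by the induction hypothesis:
\[ \norm{A_1 A_2 \cdots A_k}_\F \leq \norm{A_1}_\F \norm{A_2 \cdots A_k}_\F \leq \norm{A_1}_\F \norm{A_2}_\F \cdots \norm{A_k}_\F. \]
For the trace inequality I would use the Frobenius inner product $\inprod{A}{B} := \tr{A^\dag B}$ and the Cauchy--Schwarz inequality, which gives $\abs{\tr{A^\dag B}} \leq \norm{A}_\F \norm{B}_\F$. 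The only point requiring care is that the statement has no dagger, so I would rewrite $\tr{A_1(A_2\cdots A_k)} = \inprod{A_1^\dag}{A_2\cdots A_k}$ and use $\norm{A_1^\dag}_\F = \norm{A_1}_\F$ (immediate from $\tr{A_1 A_1^\dag}=\tr{A_1^\dag A_1}$, as in Lemma \ref{L5.1}) to get $\abs{\tr{A_1 A_2 \cdots A_k}} \leq \norm{A_1}_\F \norm{A_2\cdots A_k}_\F$. The already-proven Frobenius product bound applied to $A_2\cdots A_k$ then closes the argument:
\[ \abs{\tr{A_1 A_2 \cdots A_k}} \leq \norm{A_1}_\F \norm{A_2}_\F \cdots \norm{A_k}_\F. \]

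I do not anticipate a genuine obstacle: the argument is routine once $\norm{A}_{\op}\leq\norm{A}_\F$ and the Cauchy--Schwarz bound for the trace inner product are available. The only subtlety worth flagging is tracking the Hermitian conjugate in the trace term, so that Cauchy--Schwarz is applied to the correct pairing rather than to $\tr{A_1 A_2}$ directly.
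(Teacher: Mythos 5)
Your proof is correct, and its overall skeleton (reduce to $k=2$, then induct by grouping $A_1(A_2\cdots A_k)$) matches the paper's. The one genuine difference is how the two base cases are established. For the product bound $\norm{A_1A_2}_\F \leq \norm{A_1}_\F\norm{A_2}_\F$, the paper works in coordinates: it applies Cauchy--Schwarz entrywise to $[A_1A_2]_{ij}$ and sums over $i,j$, making the lemma self-contained. You instead route through Lemma \ref{L5.1}'s mixed bound $\norm{AB}_\F \leq \norm{A}_{\op}\norm{B}_\F$ together with the singular-value comparison $\norm{A}_{\op} \leq \norm{A}_\F$; this is shorter and reuses an already-proved estimate, at the cost of invoking a standard fact the paper never states explicitly (though your one-line singular-value justification is complete). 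For the trace bound, your argument via the Frobenius inner product $\inprod{A}{B} = \tr{A^\dag B}$ and abstract Cauchy--Schwarz is really the same computation as the paper's entrywise estimate $|\tr{AB}| \leq \sum_{i,k}|[A]_{ik}||[B]_{ki}| \leq \norm{A}_\F\norm{B}_\F$, just packaged coordinate-free; your explicit handling of the missing dagger via $\norm{A_1^\dag}_\F = \norm{A_1}_\F$ is exactly the point where a careless reader could slip, and you flag it correctly. Net effect: your version buys economy and modularity (everything hangs off Lemma \ref{L5.1} and one norm comparison), while the paper's buys a fully elementary, self-contained computation; both close the induction for general $k$ identically.
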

\begin{proof} It suffices to show that the both assertions hold for two matrices:
\[ \norm{AB}_\F \leq \norm{A}_\F \cdot \norm{B}_\F, \quad |\mathrm{tr}(AB)| \leq \norm{A}_\F \cdot \norm{B}_\F.  \]
The general case can be treated using mathematical induction. We set 
\[ A = (a_{ij}), \quad B = (b_{ij}), \]
where $[A]_{ij}$ denotes the $(i,j)$-component of the matrix A. \newline

\noindent (i)~We use the Cauchy-Schwarz inequality to get 
\[  |[AB]_{ij}| = \Big| \sum_{k=1}^{N} a_{ik} b_{kj} \Big| \leq \sum_{k=1}^{N} |a_{ik} | \cdot |b_{kj}| \leq \Big( \sum_{k=1}^{N} |a_{ik}|^2  \Big)^{\frac{1}{2}} \cdot \Big( \sum_{k=1}^{N} |b_{kj}|^2  \Big)^{\frac{1}{2}}.   \]
This yields
\begin{align*}
\begin{aligned}
\norm{AB}_\F^2 &= \sum_{i,j=1}^{N}  \Big |[AB]_{ij} \Big|^2 \leq  \sum_{i,j=1}^{N} \Big( \sum_{k=1}^{N} |a_{ik}|^2  \Big) \cdot  \Big( \sum_{k=1}^{N} |b_{ik}|^2  \Big) \\
&\leq \Big( \sum_{i,k=1}^{N} |a_{ik}|^2  \Big) \cdot  \Big( \sum_{j,k=1}^{N} |b_{ik}|^2  \Big) = \norm{A}^2_\F \cdot \norm{B}^2_\F.
\end{aligned}
\end{align*}

\noindent (ii)~From simple calculations, we get
\begin{align*}
 |\tr{AB} | &= \Big| \sum_{i=1}^{d} [AB]_{ii} \Big| \leq \sum_{i, k = 1}^d  |[A]_{ik}| \cdot |[B]_{ki}|  \\
 & \leq \Big( \sum_{i, k = 1}^d  |[A]_{ik}|^2  \Big)^{\frac{1}{2}} \cdot \Big( \sum_{i, k = 1}^d  |[B]_{ki}|^2  \Big)^{\frac{1}{2}} \leq \norm{A}_\F \cdot \norm{B}_\F. \qedhere
\end{align*}
\end{proof}
\begin{remark}
For $k=1$, Lemma \ref{L5.2} can be reduced as follows:
\[ |\mathrm{tr}(A) |= \Big| \sum_{i=1}^{d} a_{ii} \Big| \leq \sum_{i=1}^{d} |a_{ii}| \leq \sqrt{d} \Big( \sum_{i=1}^{d} |a_{ii}|^2 \Big)^{\frac{1}{2}} \leq  \sqrt{d}  \norm{A}_\F. \]
\end{remark}

\begin{lemma} \label{L5.3}
Let $A_1,\cdots,A_k$ and $B_1,\cdots,B_k$ be square matrices in ${\mathbb C}^{d\times d}$ such that 
\[ \norm{A_i}_\op \leq 1, \quad \norm{B_i}_\op \leq 1, \quad i \in [K]. \]
Then, one has 
\[ \norm{A_1\cdots A_k - B_1\cdots B_k}_\F \leq \sum_{l=1}^k \norm{A_l - B_l}_\F. 
\]
\end{lemma}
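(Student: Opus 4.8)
The plan is to reduce the product difference to a telescoping sum of single-factor differences, each of which can then be controlled using the mixed Frobenius--operator norm inequalities already established in Lemma \ref{L5.1}.

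First I would write the standard telescoping identity
\[ A_1 \cdots A_k - B_1 \cdots B_k = \sum_{l=1}^k A_1 \cdots A_{l-1} (A_l - B_l) B_{l+1} \cdots B_k, \]
where the empty products $A_1 \cdots A_0$ and $B_{k+1} \cdots B_k$ are understood to equal the identity matrix. This identity is verified directly for $k=2$ (the cross terms $A_1 B_2$ cancel), and the general case follows by induction on $k$, so I would only record the base case and defer the routine inductive step.

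Next, applying the triangle inequality for $\norm{\cdot}_\F$ to the telescoping sum, the problem reduces to bounding each summand $\norm{A_1 \cdots A_{l-1}(A_l - B_l) B_{l+1} \cdots B_k}_\F$ by $\norm{A_l - B_l}_\F$. Here I would peel off the prefix and the suffix using the two distinct forms of the inequality in Lemma \ref{L5.1}: first $\norm{PX}_\F \le \norm{P}_{\op} \norm{X}_\F$ with $P = A_1 \cdots A_{l-1}$, and then $\norm{YQ}_\F \le \norm{Y}_\F \norm{Q}_{\op}$ with $Q = B_{l+1} \cdots B_k$. Submultiplicativity of the operator norm (itself an iterated consequence of the first inequality in Lemma \ref{L5.1}) together with the hypotheses $\norm{A_i}_{\op}, \norm{B_i}_{\op} \le 1$ gives $\norm{A_1 \cdots A_{l-1}}_{\op} \le 1$ and $\norm{B_{l+1} \cdots B_k}_{\op} \le 1$, so each summand is indeed bounded above by $\norm{A_l - B_l}_\F$.

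Summing these bounds over $l \in [k]$ yields the claimed inequality. There is no genuine obstacle in this argument; the only point requiring care is the bookkeeping, namely writing the telescoping split with the correct structure (a prefix of $A$'s, a single difference, and a suffix of $B$'s) and applying the two distinct versions of the mixed-norm inequality to the correct side so that the operator-norm factors land on the partial products and the Frobenius norm lands on the single difference.
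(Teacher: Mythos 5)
Your proposal is correct and follows essentially the same route as the paper: the paper proves the case $k=2$ via exactly your telescoping split $A_1A_2 - B_1B_2 = (A_1-B_1)A_2 + B_1(A_2-B_2)$ together with the mixed Frobenius--operator norm inequalities of Lemma \ref{L5.1}, and disposes of general $k$ by induction, which when unrolled is precisely your closed-form telescoping sum $\sum_{l=1}^k A_1\cdots A_{l-1}(A_l-B_l)B_{l+1}\cdots B_k$. The only cosmetic difference is that you write the general identity directly and make the submultiplicativity bound $\norm{A_1\cdots A_{l-1}}_{\op}, \norm{B_{l+1}\cdots B_k}_{\op} \leq 1$ explicit, rather than deferring to an inductive step.
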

\begin{proof} As in Lemma \ref{L5.2}, it is sufficient to check that the assertion holds for $k = 2$. The general case can be made using the mathematical induction and the result of Lemma \ref{L5.2}. Suppose that $A_1, A_2, B_1$ and $B_2$ are square matrices in $\C^{d\times d}$ such that 
\[ \norm{A_i}_{\op} \leq 1, \quad  \norm{A_i}_{\op} \leq 1,\quad i = 1,2. \]
Then, we use Lemma \ref{L5.1} and Lemma \ref{L5.2} to get 
\begin{align*}
\norm{A_1 A_2 - B_1 B_2}_\F &= \norm{(A_1 - B_1) A_2 + B_1 (A_2 - B_2)}_\F \\
&\leq \norm{(A_1 - B_1) A_2}_\F + \norm{B_1 (A_2 - B_2)}_\F \\
& \leq \norm{A_1 - B_1}_\F \cdot \norm{A_2}_\op + \norm{B_1}_\F \cdot \norm{A_2 - B_2}_\F \\
& \leq \norm{A_1 - B_1}_\F + \norm{A_2 - B_2}_\F. \qedhere
\end{align*}
\end{proof}

\vspace{0.2cm}

For $n\geq 0,~\mathcal{U}(n)=(U_1(n),\cdots,U_N(n)) \in (\mbf{U}(d))^N$ and $i \in [N]$, we set
\begin{equation} \label{D-3-0}
\mcal{D}(n) := \max\limits_{1 \leq i,j \leq N} \norm{U_i(n) - U_j(n)}_\F,  \quad \Delta_i := \frac{1}{2}(U_c U_i^\dag - U_i U_c^\dag).
\end{equation}
Note that system $\eqref{D-3}_1$ can be rewritten as: 
\begin{equation*} \label{D-3-1}
U_i(n+1) = \exp\left(\frac{\kappa h}{2}(U_c(n) U_i^\dag (n) - U_i(n) U_c^\dag (n))\right) U_i(n) = e^{\beta \Delta_i(n)} U_i(n).
\end{equation*}
\begin{lemma} \label{L5.4} 
Let $U_1,\cdots,U_N$ be $d\times d$ unitary matrices. Then, one has the following estimates:~for $i, j \in [N]$,
\begin{align*}
\begin{aligned}
& (i)~\snorm{U_i U_j^\dag - U_j U_i^\dag}_\F\leq 2 \norm{U_i - U_j}_\F, \quad \norm{\Delta_i}_\F \leq \frac{1}{N} \sum\limits_{k=1}^N \norm{U_k -U_i}_\F, \\
& (ii)~\norm{\Delta_i}_{\op} \leq \norm{U_c}_{\op} \leq 1, ~~ \norm{\Delta_i - \Delta_j}_\F  \leq \norm{U_i - U_j}_\F, ~~ \norm{\Delta_i - \Delta_j}_{\op} \leq \norm{U_i - U_j}_{\op}.
\end{aligned}
\end{align*}
\end{lemma}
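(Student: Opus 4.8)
The plan is to derive all six inequalities from two basic ingredients: the sub-multiplicative estimates in Lemma \ref{L5.1} (together with the unitary invariance $\norm{U}_\op = 1$, $\norm{A^\dag}_\F = \norm{A}_\F$, and $\norm{A^\dag}_\op = \norm{A}_\op$), and a single ``add-and-subtract'' algebraic decomposition that exposes the difference $U_i - U_j$ inside each commutator-type expression.

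First I would treat the commutator bound in (i). The key is to write
\[
U_i U_j^\dag - U_j U_i^\dag = (U_i - U_j) U_j^\dag + U_j (U_j - U_i)^\dag,
\]
which follows by adding and subtracting $U_j U_j^\dag$. Applying the triangle inequality and Lemma \ref{L5.1} (with $\norm{U_j^\dag}_\op = \norm{U_j}_\op = 1$ and $\norm{(U_j - U_i)^\dag}_\F = \norm{U_i - U_j}_\F$) yields the factor $2$ at once. For the bound on $\norm{\Delta_i}_\F$, I would first expand $U_c = \frac{1}{N}\sum_k U_k$ to rewrite
\[
\Delta_i = \frac{1}{2N} \sum_{k=1}^N (U_k U_i^\dag - U_i U_k^\dag),
\]
so that the triangle inequality followed by the commutator bound just proven (applied to each summand with the roles $i,j \mapsto k,i$) absorbs the factor $\tfrac12$ and produces the claimed average.

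For (ii), I would exploit the skew-Hermitian structure $\Delta_i = \tfrac{1}{2}(M - M^\dag)$ with $M = U_c U_i^\dag$. The operator-norm bound $\norm{\Delta_i}_\op \leq \norm{U_c}_\op$ follows from the triangle inequality and sub-multiplicativity of $\norm{\cdot}_\op$, using $\norm{U_i}_\op = \norm{U_i^\dag}_\op = 1$; the remaining estimate $\norm{U_c}_\op \leq \frac{1}{N}\sum_k \norm{U_k}_\op = 1$ then closes this piece. For the two difference estimates, the decisive step is the factorization
\[
\Delta_i - \Delta_j = \tfrac{1}{2}\big(U_c (U_i - U_j)^\dag - (U_i - U_j) U_c^\dag\big),
\]
obtained by grouping the $U_c$ terms. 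After this, both the Frobenius and operator versions follow the identical pattern: triangle inequality, then Lemma \ref{L5.1} (respectively sub-multiplicativity of $\norm{\cdot}_\op$) to split off the factor $U_i - U_j$, and finally $\norm{U_c}_\op \leq 1$ to discard the $U_c$ factor.

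Since every step reduces to routine norm estimates, there is no serious analytic obstacle here; the only points requiring care are choosing the correct add-and-subtract pairing so that each term carries exactly one factor of $U_i - U_j$ and one unitary (or $U_c$) factor, and keeping track of which norm ($\F$ versus $\op$) the relevant inequality of Lemma \ref{L5.1} is applied to, so that each chain of estimates actually closes.
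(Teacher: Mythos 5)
Your proof is correct and takes essentially the same route as the paper's: the same reduction of the $\norm{\Delta_i}_\F$ bound to the commutator estimate via $\Delta_i = \frac{1}{2N}\sum_{k}(U_kU_i^\dag - U_iU_k^\dag)$, the same factorization $\Delta_i - \Delta_j = \tfrac{1}{2}\big(U_c(U_i-U_j)^\dag - (U_i-U_j)U_c^\dag\big)$, and the same applications of Lemma \ref{L5.1} together with $\norm{U_c}_\op \leq 1$. The only cosmetic difference is your add-and-subtract of $U_jU_j^\dag$ in the commutator identity, where the paper inserts $U_iU_i^\dag$ instead; the two choices are interchangeable.
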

\begin{proof}
\noindent (i)~Note that 
\[ U_i U_j^\dag - U_j U_i^\dag = (U_i - U_j) U_i^\dag - U_i(U_i^\dag - U_j^\dag). \]
This relation and Lemma \ref{L5.1} imply 
\[ \norm{U_i U_j^\dag - U_j U_i^\dag}_\F \leq \snorm{(U_i - U_j) U_i^\dag }_\F + \snorm{U_i(U_i^\dag - U_j^\dag)}_\F = 2\norm{U_i - U_j}_\F. \] 
On the other hand, we use the above result and defining relation $\Delta_i$ in \eqref{D-3-0}  to get 
\[\snorm{\Delta_i}_\F \leq \frac{1}{2N} \sum_{k=1}^N \norm{U_k U_i^\dag - U_i U_k^\dag}_\F \leq \frac{1}{N} \sum_{k=1}^N \norm{U_k - U_i}_\F\]
Moreover, one has 
\begin{equation} \label{D-3-2}
\norm{\Delta_i}_{\op} \leq \frac{1}{2} \big(\norm{U_c}_{\op} \snorm{U_i^\dag}_{\op} + \norm{U_i}_{\op} \snorm{U_c^\dag}_{\op} \big) = \norm{U_c}_{\op} \leq 1.
\end{equation}

\vspace{0.1cm}

\noindent(iii) Note that
\[
\Delta_i - \Delta_j = \frac{1}{2} \big( U_c (U_i - U_j)^\dag - (U_i - U_j) U_c^\dag \big).
\]
Then, we use Lemma \ref{L5.1} and \eqref{D-3-2} to find
\[ \norm{\Delta_i - \Delta_j}_\F \leq \norm{U_c}_{\op} \cdot\norm{U_i - U_j}_\F \leq \norm{U_i - U_j}_\F. \] 
Similarly, we obtain 
\[ \norm{\Delta_i - \Delta_j}_{\op} \leq \norm{U_c}_{\op} \cdot\norm{U_i - U_j}_\F \leq \norm{U_i - U_j}_{\op}. \qedhere \]
\end{proof}

\subsection{Complete state aggregation} \label{sec:5.2}
In this subsection, we derive complete state aggregation via an exponential decay of ${\mathcal D}(n)$ for \eqref{D-3}. \newline

Consider the variation of the ensemble diameter $\mcal{D}$, i.e., $\mcal{D}(n+1) - \mcal{D}(n)$. For this, we consider the estimate for 
\[ \norm{U_i(n+1) - U_j(n+1)}_\F^2 - \norm{U_i(n) - U_j(n)}_\F^2. \]

For notational simplicity, until end of this section, unless stated otherwise, the states $U_i$, and the functionals $U_c$, $\Delta_i$ are evaluated at the $n$-th time step.
\begin{align*}
\begin{aligned}
&\norm{U_i(n+1) - U_j(n+1)}_\F^2 - \norm{U_i(n) - U_j(n)}_\F^2 \\
& \hspace{0.5cm} =\; \tr{-U_i(n+1) U_j^\dag(n+1) - U_j(n+1) U_i^\dag(n+1) + U_i(n) U_j^\dag (n) + U_j(n) U_i^\dag (n)} \\
& \hspace{0.5cm} = \; \tr{-e^{\beta\Delta_i} U_i U_j^\dag e^{-\beta\Delta_j} - e^{\beta\Delta_j} U_j U_i^\dag e^{-\beta\Delta_i} + U_i U_j^\dag + U_j U_i^\dag } \quad \text{ at }n \text{-th time step}\\
& \hspace{0.5cm} =\; \tr{- e^{-\beta\Delta_j} e^{\beta\Delta_i} U_i U_j^\dag - e^{-\beta\Delta_i} e^{\beta\Delta_j} U_j U_i^\dag + U_i U_j^\dag + U_j U_i^\dag } \\
& \hspace{0.5cm} =\; -\tr{(e^{-\beta\Delta_j} e^{\beta\Delta_i} - I) U_i U_j^\dag + (e^{-\beta\Delta_i} e^{\beta\Delta_j} -I) U_j U_i^\dag} \\
& \hspace{0.5cm} =\; - \beta \underbrace{\tr{(\Delta_i - \Delta_j) U_i U_j^\dag + (\Delta_j - \Delta_i) U_j U_i^\dag}}_{=:\mcal{A}_1} \\
& \hspace{0.7cm} - \frac{\beta^2}{2!} \underbrace{\tr{(\Delta_i^2 - 2\Delta_j \Delta_i + \Delta_j^2) U_i U_j^\dag + ( \Delta_j^2 - 2\Delta_i \Delta_j + \Delta_i^2) U_j U_i^\dag}}_{=:\mcal{A}_2} \\
& \hspace{0.7cm} - \frac{\beta^3}{3!} \underbrace{\tr{(\Delta_i^3 - 3\Delta_j \Delta_i^2 + 3\Delta_j^2 \Delta_i - \Delta_j^3) U_i U_j^\dag + (\Delta_j^3 - 3\Delta_i \Delta_j^2 + 3\Delta_i^2 \Delta_j - \Delta_i^3) U_j U_i^\dag}}_{=:\mcal{A}_3} \\
& \hspace{0.7cm}- \cdots,
\end{aligned}
\end{align*}
where in the last equality, we used the expansion of the matrix exponential to get 
\begin{align}
\begin{aligned} \label{D-8} 
    e^{-\beta \Delta_j} e^{\beta \Delta_i} &= \left(\sum_{k=0}^\infty \frac{1}{k!} (-\beta\Delta_j)^k \right) \left( \sum_{l=0}^\infty \frac{1}{l!} (\beta\Delta_i)^l \right) \\
    &= \sum_{k,l=0}^\infty \frac{1}{k!l!} (-1)^k \beta^{k+l} \Delta_j^k \Delta_i^l  = \sum_{m=0}^\infty \frac{\beta^m}{m!} \left( \sum_{k=0}^m \frac{m!}{k!(m-k)!} (-1)^k \Delta_j^k \Delta_i^{m-k} \right).
\end{aligned}
 \end{align}
Therefore, we get
\begin{equation} \label{D-9}
\norm{U_i(n+1) - U_j(n+1)}_\F^2 = \norm{U_i(n) - U_j(n)}_\F^2 -\sum_{m=1}^\infty \frac{\beta^m}{m!} \mcal{A}_m,
\end{equation}
where
\begin{align*}
\mcal{A}_m &= \underbrace{\text{tr}\Bigg\{ \bigg( \sum_{k=0}^m \frac{m!}{k!(m-k)!} (-1)^k \Delta_j^k \Delta_i^{m-k} \bigg) U_i U_j^\dag \Bigg\}}_{\mcal{A}_m^+} + \underbrace{\text{tr} \Bigg\{ \bigg( \sum_{k=0}^m \frac{m!}{k!(m-k)!} (-1)^k \Delta_i^k \Delta_j^{m-k} \bigg) U_j U_i^\dag\Bigg\}}_{\mcal{A}_m^-}.
\end{align*}
To majorize the R.H.S. of \eqref{D-9} by  term involving with $\norm{U_i(n) - U_j(n)}_\F^2$, we estimate $\mathcal{A}_1$ and $\mathcal{A}_m$ with $m\geq2$ separately in the following two lemmas.
\begin{lemma}\label{L5.5}
The term $\mcal{A}_1$ satisfies the following estimates:
\begin{align*}
\begin{aligned}
& (i)~\mcal{A}_1 \geq 2\norm{U_i - U_j}_\F^2 - \sqrt{d} \big(\max_{1 \leq k,l \leq N}\norm{U_k - U_l}_{\op}^2 \big) \norm{U_i - U_j}_\F^2. \\
& (ii)~\mcal{A}_1 \geq 2\norm{U_i - U_j}_\F^2 - \big(\max_{1 \leq k,l \leq N}\norm{U_k - U_l}_\F^2 \big) \norm{U_i - U_j}_\F^2.
\end{aligned}
\end{align*}
\end{lemma}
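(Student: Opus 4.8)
The plan is to reduce $\mcal{A}_1$ to a single real trace, peel off an explicit main term, and then show the remainder is quadratically small in the diameter. First I would combine the two summands of $\mcal{A}_1$ into
\[ \mcal{A}_1 = \tr{(\Delta_i - \Delta_j)(U_i U_j^\dag - U_j U_i^\dag)}, \]
observing that both $\Delta_i - \Delta_j$ and $C := U_iU_j^\dag - U_jU_i^\dag$ are skew-Hermitian, so $\mcal{A}_1 \in \R$. Since $\mcal{A}_1$ and every diameter $\norm{U_k-U_l}$ are unchanged if all $U_k$ are right-multiplied by the fixed unitary $U_j^\dag$ (the algebraic counterpart of Proposition \ref{P2.1}(ii)), I would normalize $U_j = I$, which turns $C$ into $U_i - U_i^\dag$ and gives $\Delta_i - \Delta_j = \frac12(U_c V^\dag - V U_c^\dag)$ with $V := U_i - I = U_i - U_j$.

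Next I would extract the main term by freezing $U_c$ at $U_j = I$: then $\Delta_i - \Delta_j$ collapses to $\frac12(V^\dag - V) = -\frac12 C$, so that
\[ \mcal{A}_1 = \tfrac12\norm{C}_\F^2 + \tfrac12\tr{(P V^\dag - V P^\dag)C}, \qquad P := U_c - I . \]
For the main term I would use the elementary identity $V V^\dag = 2I - U_iU_j^\dag - U_jU_i^\dag$, which yields the exact relation $2\norm{V}_\F^2 - \frac12\norm{C}_\F^2 = \frac12\norm{VV^\dag}_\F^2$; bounding the right-hand side by Lemma \ref{L5.1} (as $\norm{VV^\dag}_\F \le \norm{V}_\op \norm{V}_\F$) or by Lemma \ref{L5.2} (as $\norm{VV^\dag}_\F \le \norm{V}_\F^2$) gives $\frac12\norm{C}_\F^2 \ge 2\norm{V}_\F^2 - (\text{diameter})^2\norm{V}_\F^2$ in both the operator-norm and the Frobenius forms.

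The hard part will be the correction $\frac12\tr{(PV^\dag - VP^\dag)C}$, which I would expand (using $C = V - V^\dag$ in the normalized gauge) into two pieces, $\tr{\mathrm{Herm}(P)\,V^\dag V}$ and $-\Re\tr{P (V^\dag)^2}$. The first is harmless: $\mathrm{Herm}(P) = \frac12(U_c + U_c^\dag) - I = -\frac{1}{2N}\sum_k (U_k - I)(U_k - I)^\dag$ is negative semidefinite of size $O(\text{diameter}^2)$, so $|\tr{\mathrm{Herm}(P)V^\dag V}| \le \norm{\mathrm{Herm}(P)}_\op \norm{V}_\F^2$ is already second order. The genuine obstacle is $-\Re\tr{P(V^\dag)^2}$: a naive submultiplicative estimate only gives $O(\text{diameter})\cdot\norm{V}_\F^2$, which is one power of the diameter too large. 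To recover the missing order I would split $V^\dag = S + H$ into its skew-Hermitian part $S = -\frac12 C$ (size $O(\text{diameter})$) and Hermitian part $H = -\frac12 VV^\dag$ (size $O(\text{diameter}^2)$), and likewise $P = \mathrm{skew}(P) + \mathrm{Herm}(P)$. Expanding $(V^\dag)^2 = S^2 + (SH+HS) + H^2$, the only summand that is not manifestly second order is $\Re\tr{\mathrm{skew}(P)\,S^2}$; but $S^2$ is Hermitian and $\mathrm{skew}(P)$ is skew-Hermitian, so $\tr{\mathrm{skew}(P)S^2}$ is purely imaginary and this term \emph{vanishes}. This cancellation is the crux of the argument. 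The surviving terms each carry one $O(\text{diameter}^2)$ factor (namely $\mathrm{Herm}(P)$, $\mathrm{Herm}(V^\dag)$, or $VV^\dag$) against bounded factors, and Lemmas \ref{L5.1}--\ref{L5.2} deliver the required $O(\text{diameter}^2)\norm{V}_\F^2$ control.

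Finally I would assemble the two estimates, collect the diameter corrections into $\max_{k,l}\norm{U_k-U_l}_\op^2$ for (i) and $\max_{k,l}\norm{U_k-U_l}_\F^2$ for (ii), and maximize over $i,j$. The factor $\sqrt d$ in (i) is expected to enter only where a trace or Frobenius norm is converted to the operator-norm diameter via $|\tr M| \le \sqrt d\,\norm{M}_\F$ (the Remark after Lemma \ref{L5.2}); version (ii), carried out with Frobenius norms throughout, avoids it.
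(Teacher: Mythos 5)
Your gauge normalization $U_j = I$, the exact identities you invoke ($2\norm{V}_\F^2 - \tfrac12\norm{C}_\F^2 = \tfrac12\snorm{VV^\dag}_\F^2$, $\mathrm{Herm}(V^\dag) = -\tfrac12 VV^\dag$, $\mathrm{Herm}(U_c - I) = -\tfrac{1}{2N}\sum_k (U_k-I)(U_k-I)^\dag$), and the cancellation $\Re\,\tr{\mathrm{skew}(P)S^2} = 0$ are all correct, and the cancellation does eliminate the one term that is only first order in the diameter. The genuine gap is in your final step, ``collect the diameter corrections into $\max_{k,l}\norm{U_k-U_l}^2$'': the lemma asserts the correction with coefficient exactly $1$ in (ii) and $\sqrt d$ in (i), and your term-by-term bounds do not sum to these constants. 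Writing $D$ for the relevant diameter, the pieces in your tally are bounded by $\tfrac12 D^2\norm{V}_\F^2$ (main term $\tfrac12\snorm{VV^\dag}_\F^2$), $\tfrac12 D^2 \norm{V}_\F^2$ (the $\tr{\mathrm{Herm}(P)V^\dag V}$ term), $\tfrac12 D^2\norm{V}_\F^2$ (the $\mathrm{Herm}(P)S^2$ term), $D^2 \norm{V}_\F^2$ (the $SH+HS$ cross term, via $2\norm{P}_\op\norm{S}_\F\norm{H}_\F$), plus an $O(D^4)\norm{V}_\F^2$ tail from $H^2$ --- in total about $\tfrac52 D^2 \norm{V}_\F^2$, and the signs offer no rescue, since the first three are traces of PSD matrices against $\mathrm{Herm}(P) \preceq 0$ and hence genuinely negative contributions. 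So as written your route proves the lemma only with $1$ replaced by roughly $\tfrac52$, and part (i) only when $\sqrt d \geq \tfrac52$, i.e. $d \geq 7$. This is not cosmetic: the coefficient $1$ in (ii) is exactly what produces $\Lambda(\beta) = 4 - e^{2\beta} - \frac{e^{2\beta}-1}{2\beta}$, the threshold $\mcal{D}(0) < \sqrt{\Lambda(\beta)}$ in Theorem \ref{T5.1}, and the constants $\beta_0$, $\beta_1$, $M(\beta)$ used throughout Section \ref{sec:6}; all of these degrade under a larger constant.

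The paper avoids the splintering altogether: no gauge fixing and no Hermitian/skew decomposition, but an exact algebraic rearrangement using unitarity in the form $U_kU_i^\dag + U_iU_k^\dag = 2I - (U_k-U_i)(U_k-U_i)^\dag$, which yields the identity $\mcal{A}_1 = 2\norm{U_i-U_j}_\F^2 - \frac{1}{2N}\sum_k \tr{\{(U_k-U_i)(U_k-U_i)^\dag + (U_k-U_j)(U_k-U_j)^\dag\}(U_i-U_j)(U_i-U_j)^\dag}$. The entire correction is thus packaged as an average over $k$ of exactly two traces of products of PSD Gram matrices, and one Cauchy--Schwarz per trace (Lemma \ref{L5.1}) delivers the coefficient $1$ in one stroke. (Your guess for the provenance of $\sqrt d$ in (i), via $|\tr{M}| \le \sqrt d\,\norm{M}_\F$, is also not how it enters the paper: it appears as slack in \eqref{D-9-2}, and the same Cauchy--Schwarz argument would in fact give (i) with constant $1$ as well.) To repair your argument you would need to keep $\mathrm{Herm}(P)$ and $\mathrm{skew}(P)$ unexpanded as averages $\frac1N\sum_k$ of per-particle Hermitian and skew parts and regroup your pieces per index $k$ back into these Gram-matrix traces --- at which point you have essentially rederived the paper's identity.
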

\begin{proof}
First, we rewrite ${\mathcal A}_1$ as follows:
{\allowdisplaybreaks
 \begin{align}
 \begin{aligned} \label{D-9-1}
\mcal{A}_1 &= \tr{(\Delta_i - \Delta_j) U_i U_j^\dag + (\Delta_j - \Delta_i) U_j U_i^\dag} = \tr{(\Delta_i - \Delta_j)(U_i U_j^\dag - U_j U_i^\dag)}\\
&= \frac{1}{2N} \sum_{k=1}^N\tr{\big(U_k (U_i^\dag - U_j^\dag) - (U_i - U_j) U_k^\dag \big)(U_i U_j^\dag - U_j U_i^\dag)} \\
&= \frac{1}{2N} \sum_{k=1}^N \left[ \tr{U_k (U_i^\dag - U_j^\dag) (U_i U_j^\dag - U_j U_i^\dag)} - \tr{ (U_i U_j^\dag - U_j U_i^\dag)(U_i - U_j) U_k^\dag} \right]\\
&= \frac{1}{2N} \sum_{k=1}^N \tr{U_k (U_i^\dag + U_j^\dag)(U_i - U_j)(U_i^\dag - U_j^\dag)} + \tr{(U_i - U_j)(U_i^\dag - U_j^\dag)(U_i + U_j)U_k^\dag} \\
&= \frac{1}{2N} \sum_{k=1}^N \tr{(U_k U_i^\dag + U_k U_j^\dag + U_i U_k^\dag + U_j U_k^\dag)(U_i - U_j)(U_i^\dag - U_j^\dag)} \\\
&= \frac{1}{2N} \sum_{k=1}^N \left[ 4\; \tr{(U_i - U_j)(U_i^\dag - U_j^\dag)} \right. \\
& \hspace{0.5cm} \left. - \tr{ \big\{ (U_k - U_i)(U_k^\dag - U_i^\dag) + (U_k - U_j)(U_k^\dag - U_j^\dag) \big\} (U_i - U_j)(U_i^\dag - U_j^\dag) } \right] \\
&= 2 \norm{U_i - U_j}_\F^2 - \frac{1}{2N} \sum_{k=1}^N \left[ \tr{(U_k - U_i)(U_k^\dag - U_i^\dag) (U_i - U_j)(U_i^\dag - U_j^\dag)} \right. \\
& \qquad\qquad\qquad\qquad\qquad\quad \left. + \tr{(U_k - U_j)(U_k^\dag - U_j^\dag) (U_i - U_j)(U_i^\dag - U_j^\dag)} \right],
\end{aligned}
\end{align}}
where in the fourth equality, we used the following identities for unitary matrices $U_i$ and $U_j$:
\begin{align*}
(U_i^\dag - U_j^\dag) (U_i U_j^\dag - U_j U_i^\dag) &= (U_i^\dag + U_j^\dag)(U_i - U_j)(U_i^\dag - U_j^\dag), \\
(U_i U_j^\dag - U_j U_i^\dag)(U_i - U_j) &= - (U_i - U_j)(U_i^\dag - U_j^\dag)(U_i + U_j).
\end{align*}
Now we use Lemma \ref{L5.1}  to see
\begin{align}
\begin{aligned} \label{D-9-2}
& \abs{\tr{(U_k - U_i)(U_k^\dag - U_i^\dag) (U_i - U_j)(U_i^\dag - U_j^\dag)}} \\
& \hspace{.5cm} \leq \norm{(U_k^\dag - U_i^\dag) (U_i - U_j)}_\F \norm{(U_i^\dag - U_j^\dag)(U_k - U_i)}_\F \\
& \hspace{.5cm} \leq \sqrt{d} \norm{U_k - U_i}_\op^2 \cdot \norm{U_i - U_j}_\F^2 \leq \sqrt{d}\, \Big( \max_{1 \leq i, k \leq N} \norm{U_k - U_i}_\op^2 \Big)  \cdot \norm{U_i - U_j}_\F^2.
\end{aligned}
\end{align}
Similarly, one has 
\begin{equation} \label{D-9-3}
\abs{\tr{(U_k - U_i)(U_k^\dag - U_i^\dag) (U_i - U_j)(U_i^\dag - U_j^\dag)}} \leq \Big( \max_{1 \leq i, k \leq N} \norm{U_k - U_i}_\F^2  \Big) \cdot \norm{U_i - U_j}_\F^2.
\end{equation}
Finally, we combine \eqref{D-9-1}, \eqref{D-9-2} and \eqref{D-9-3} to derive desired estimates. 
\end{proof}

\vspace{0.5cm}

Next, we estimate the term $\mcal{A}_m$ with $m\geq 2$ in the following lemma.
\begin{lemma} \label{L5.6} For $m\geq 2$, $\mcal{A}_m$ satisfies 
\[
\abs{\mcal{A}_m} \leq (m+1) 2^{m-1} \norm{U_i - U_j}_\F^2.
 \]
\end{lemma}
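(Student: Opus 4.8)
The plan is to extract the cancellation hidden in the alternating binomial coefficients, rather than to bound the $m+1$ summands separately (each is only $O(1)$). Write $\mcal{A}_m=\mcal{A}_m^++\mcal{A}_m^-$ with inner matrices
\[ P_m:=\sum_{k=0}^m\binom{m}{k}(-1)^k\Delta_j^k\Delta_i^{m-k},\qquad Q_m:=\sum_{k=0}^m\binom{m}{k}(-1)^k\Delta_i^k\Delta_j^{m-k}, \]
so that $\mcal{A}_m^+=\tr{P_m U_iU_j^\dag}$ and $\mcal{A}_m^-=\tr{Q_m U_jU_i^\dag}$. The first step is the telescoped closed form
\[ P_m=\sum_{l=0}^{m-1}(-\Delta_j)^l\,(\Delta_i-\Delta_j)\,\Delta_i^{\,m-1-l}, \]
obtained from the one-line recursion $P_m=P_{m-1}\Delta_i-\Delta_j P_{m-1}$ (Pascal's rule), together with its $i\leftrightarrow j$ mirror $Q_m=-\sum_{l=0}^{m-1}(-\Delta_i)^l(\Delta_i-\Delta_j)\Delta_j^{\,m-1-l}$. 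Each summand carries exactly one factor $\Delta_i-\Delta_j$ flanked by matrices of operator norm $\le 1$ (Lemma \ref{L5.4}).

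Next I would split the pairing symmetrically. With $W_\pm:=U_iU_j^\dag\pm U_jU_i^\dag$, the identities $U_iU_j^\dag=\tfrac12(W_++W_-)$ and $U_jU_i^\dag=\tfrac12(W_+-W_-)$ give, by pure algebra,
\[ \mcal{A}_m=\tfrac12\tr{(P_m+Q_m)W_+}+\tfrac12\tr{(P_m-Q_m)W_-}. \]
The antisymmetric piece is easy: $P_m-Q_m=\sum_{l=0}^{m-1}\big[(-\Delta_j)^l(\Delta_i-\Delta_j)\Delta_i^{m-1-l}+(-\Delta_i)^l(\Delta_i-\Delta_j)\Delta_j^{m-1-l}\big]$ is a sum of $2m$ products each with one difference-factor, so Lemma \ref{L5.4} (via $\norm{\Delta_i-\Delta_j}_\F\le\norm{U_i-U_j}_\F$ and $\norm{\cdot}_\op\le1$) yields $\norm{P_m-Q_m}_\F\le 2m\norm{U_i-U_j}_\F$. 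Since $\norm{W_-}_\F\le 2\norm{U_i-U_j}_\F$ (Lemma \ref{L5.4}), Lemma \ref{L5.2} bounds this contribution by $2m\norm{U_i-U_j}_\F^2$.

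The crux, and the step I expect to be the main obstacle, is the symmetric piece $\tr{(P_m+Q_m)W_+}$: here $W_+$ is an $O(1)$ matrix, so the naive bound $\norm{P_m+Q_m}_\F\,\norm{W_+}_\F$ would lose a dimensional factor $\sqrt d$. The resolution is that $P_m+Q_m$ is \emph{doubly} small; each of its brackets can be rewritten as
\[ (-\Delta_j)^l(\Delta_i-\Delta_j)\Delta_i^{m-1-l}-(-\Delta_i)^l(\Delta_i-\Delta_j)\Delta_j^{m-1-l} \]
\[ =\big[(-\Delta_j)^l-(-\Delta_i)^l\big](\Delta_i-\Delta_j)\Delta_i^{m-1-l}+(-\Delta_i)^l(\Delta_i-\Delta_j)\big[\Delta_i^{m-1-l}-\Delta_j^{m-1-l}\big], \]
which exposes two difference-factors once Lemma \ref{L5.3} is used to estimate $\norm{(-\Delta_j)^l-(-\Delta_i)^l}_\F\le l\norm{\Delta_i-\Delta_j}_\F$ and $\norm{\Delta_i^{m-1-l}-\Delta_j^{m-1-l}}_\F\le(m-1-l)\norm{\Delta_i-\Delta_j}_\F$. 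The delicate point is to estimate $\tr{\cdot\,W_+}$ by grouping each fourfold product into \emph{two} Frobenius factors (Lemma \ref{L5.2}, $k=2$) while absorbing the surviving $\Delta$'s and $W_+$ in operator norm (Lemma \ref{L5.1}, using $\norm{W_+}_\op\le 2$); this keeps both difference-factors in Frobenius norm and discards $d$, giving a per-bracket bound $2(m-1)\norm{U_i-U_j}_\F^2$ and, after summing over $l$, a symmetric contribution $\le m(m-1)\norm{U_i-U_j}_\F^2$. Adding the two pieces yields $|\mcal{A}_m|\le m(m+1)\norm{U_i-U_j}_\F^2$, which proves the claim since $m\le 2^{m-1}$ for $m\ge 1$. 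Thus the whole difficulty is the bookkeeping of the Frobenius/operator-norm interplay needed to pair the doubly-small symmetric matrix against the $O(1)$ matrix $W_+$ without incurring a dimension factor.
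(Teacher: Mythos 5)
Your overall strategy---extract one factor of $\Delta_i-\Delta_j$, make the remaining pairing carry a second small factor, and then pair \emph{two} Frobenius-norm factors while absorbing all $O(1)$ matrices in operator norm so that no $\sqrt d$ appears---is sound and structurally close to the paper's own proof. However, your argument as written rests on a false identity at its very first step. The recursion $P_m=P_{m-1}\Delta_i-\Delta_j P_{m-1}$ is correct, but iterating it does \emph{not} telescope into an unweighted sum: since $P_1=\Delta_i-\Delta_j$ and the two commuting operations ``right-multiply by $\Delta_i$'' and ``left-multiply by $-\Delta_j$'' branch at every step, one has $P_m=(\mathrm{L}+\mathrm{R})^{m-1}P_1$ and hence the closed form carries binomial weights,
\[
P_m=\sum_{l=0}^{m-1}\binom{m-1}{l}(-\Delta_j)^l(\Delta_i-\Delta_j)\Delta_i^{m-1-l},
\]
with the mirror statement for $Q_m$. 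Your unweighted version already fails at $m=3$: it yields $\Delta_i^3-2\Delta_j\Delta_i^2+2\Delta_j^2\Delta_i-\Delta_j^3$, whereas $P_3=\Delta_i^3-3\Delta_j\Delta_i^2+3\Delta_j^2\Delta_i-\Delta_j^3$. You have conflated the unit-coefficient telescoping of $A^m-B^m$ (which the paper uses to bound $\norm{\Delta_i^m-\Delta_j^m}_\F\le m\norm{U_i-U_j}_\F$) with the expansion of the binomial sum; note that the paper itself keeps the weights $\binom{m}{l}$ when it pulls out $(\Delta_i-\Delta_j)$ by trace cyclicality in the $\mcal{B}_{m+1}^{\pm}$ computation, and derives exactly the weighted closed form for $\mcal{I}_{m+1}^+$ in Section \ref{sec:6}. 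Consequently your downstream counts---$\snorm{P_m-Q_m}_\F\le 2m\norm{U_i-U_j}_\F$ and the symmetric sum over $m$ unweighted brackets---are wrong, and your final bound $m(m+1)\norm{U_i-U_j}_\F^2$, which is \emph{strictly stronger} than the lemma for $m\ge 3$, is unsubstantiated.

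The good news is that the repair is mechanical and lands exactly on the stated constant. Restoring the weights, each bracket of $P_m-Q_m$ still carries one difference factor, so $\snorm{P_m-Q_m}_\F\le 2\cdot 2^{m-1}\norm{U_i-U_j}_\F$, and with $\norm{W_-}_\F\le 2\norm{U_i-U_j}_\F$ (Lemma \ref{L5.4}) the antisymmetric piece is at most $\tfrac12\cdot 2^m\cdot 2\norm{U_i-U_j}_\F^2=2^m\norm{U_i-U_j}_\F^2$. For the symmetric piece your add--subtract splitting and the grouping $|\tr{ABCW_+}|\le\norm{A}_\F\norm{B}_\F\norm{C}_\op\norm{W_+}_\op$ (Lemmas \ref{L5.1}, \ref{L5.2}, with $\norm{W_+}_\op\le 2$) give the per-bracket bound $2(m-1)\norm{U_i-U_j}_\F^2$, exactly as you computed; summing with the weights $\binom{m-1}{l}$ yields $\tfrac12\cdot 2(m-1)2^{m-1}\norm{U_i-U_j}_\F^2=(m-1)2^{m-1}\norm{U_i-U_j}_\F^2$. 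The total is $\bigl(2+(m-1)\bigr)2^{m-1}\norm{U_i-U_j}_\F^2=(m+1)2^{m-1}\norm{U_i-U_j}_\F^2$, which is precisely the lemma. After this correction your route differs from the paper's only cosmetically: the paper extracts the single $(\Delta_i-\Delta_j)$ by cyclicality applied to $\mcal{A}_m^++\mcal{A}_m^-$ and lodges the second smallness in $\norm{\Delta_i^{m-l}U_iU_j^\dag\Delta_j^l-\Delta_j^{m-l}U_jU_i^\dag\Delta_i^l}_\F\le(m+2)\norm{U_i-U_j}_\F$, whereas you use the $W_\pm$ decomposition---the same double-smallness, dimension-free pairing in both cases.
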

\begin{proof}

\noindent We first claim that $\mcal{A}_{m+1}$ can be written as follows:
\begin{align}
\begin{aligned} \label{D-10}
\mcal{A}_{m+1} &= \tr{(\Delta_i - \Delta_j) \bigg( \sum_{l=0}^m \frac{m!}{l!(m-l)!} (-1)^l (\Delta_i^{m-l} U_i U_j^\dag \Delta_j^l - \Delta_j^{m-l} U_j U_i^\dag \Delta_i^l) \bigg)} \\
&= \underbrace{\tr{(\Delta_i - \Delta_j) \sum_{l=0}^m \frac{m!}{l!(m-l)!} (-1)^l \Delta_i^{m-l} U_i U_j^\dag \Delta_j^l}}_{\mcal{B}_{m+1}^+}  \\
&\qquad + \underbrace{\tr{(\Delta_j - \Delta_i) \sum_{l=0}^m \frac{m!}{l!(m-l)!} (-1)^l \Delta_j^{m-l} U_j U_i^\dag \Delta_i^l}}_{\mcal{B}_{m+1}^-}.
\end{aligned} 
\end{align}
By the cyclicality of the trace, we have
\begin{align*}
\begin{aligned}
\mcal{B}_{m+1}^+ &= \tr{\sum_{l=0}^m \frac{m!}{l!(m-l)!} (-1)^l \Delta_j^l \Delta_i^{m+1-l} U_i U_j^\dag - \sum_{l=0}^m \frac{m!}{l!(m-l)!} (-1)^l \Delta_j^{l+1} \Delta_i^{m-l} U_i U_j^\dag} \\
&= \text{tr} \left( \sum_{l=0}^m \frac{m!}{l!(m-l)!} (-1)^l \Delta_j^l \Delta_i^{m+1-l} U_i U_j^\dag \right. - \left. \sum_{l=1}^{m+1} \frac{m!}{(l-1)!(m+1-l)!} (-1)^{l-1} \Delta_j^{l} \Delta_i^{m+1-l} U_i U_j^\dag \right) \\
&= \tr{\sum_{k=0}^{m+1} \bigg(\frac{m!}{k!(m-k)!} + \frac{m!}{(k-1)!(m+1-k)!} \bigg) (-1)^k \Delta_j^k \Delta_i^{m+1-k} U_i U_j^\dag }\\
&= \tr{\sum_{k=0}^{m+1} \frac{(m+1)!}{k!(m+1-k)!} (-1)^k \Delta_j^k \Delta_i^{m+1-k}U_i U_j^\dag} = \mcal{A}_{m+1}^+.
\end{aligned}
\end{align*}
Since we have $\mcal{B}_{m+1}^+ = \mcal{A}_{m+1}^+$, the formula $\mcal{B}_{m+1}^- = \mcal{A}_{m+1}^-$ immediately follows by interchanging the role of the indices $i$ and $j$. Therefore, we get \eqref{D-10} as a result.
\[\text{R.H.S. of \eqref{D-10}} = \mcal{B}_{m+1}^+ + \mcal{B}_{m+1}^- = \mcal{A}_{m+1}^+ + \mcal{A}_{m+1}^- = \mcal{A}_{m+1}. \]
By Lemma \ref{L5.1}, we have
\begin{align*}
\begin{aligned}
\norm{\Delta_i^m - \Delta_j^m}_\F &= \norm{\sum_{l=0}^{m-1} \Delta_i^l (\Delta_i - \Delta_j) \Delta_j^{m-1-l}}_\F  \\
& \leq \sum_{l=0}^{m-1} \norm{\Delta_i}_{\op}^l \norm{\Delta_i - \Delta_j}_\F \norm{\Delta_j}_{\op}^{m-1-l} \leq m \norm{U_i - U_j}_\F
\end{aligned}
\end{align*}
for all $m \geq 1$. Hence, by Lemma \ref{L5.3}, we have
\begin{align*}
\begin{aligned}
& \norm{\Delta_i^{m-l} U_i U_j^\dag \Delta_j^l - \Delta_j^{m-l} U_j U_i^\dag \Delta_i^l}_\F \\
% & \hspace{0.2cm} \leq \; \norm{(\Delta_i^{m-l} - \Delta_j^{m-l})U_i U_j^\dag \Delta_j^l}_\F + \norm{\Delta_j^{m-l}(U_i U_j^\dag - U_j U_i^\dag) \Delta_j^l }_\F + \norm{\Delta_j^{m-l} U_j U_i^\dag (\Delta_j^l - \Delta_i^l)}_\F \\
& \hspace{0.2cm} \leq \; \norm{\Delta_i^{m-l} - \Delta_j^{m-l}}_\F + \norm{U_i U_j^\dag - U_j U_i^\dag}_\F + \norm{\Delta_i^{l} - \Delta_j^{l}}_\F \\
& \hspace{0.2cm} \leq \; (m-l) \norm{U_i - U_j}_\F + 2 \norm{U_i - U_j}_\F + l \norm{U_i - U_j}_\F = (m+2) \norm{U_i - U_j}_\F
\end{aligned}
\end{align*}
for all $0\leq l\leq m$. Therefore, we have
\begin{align*}
    \abs{\mcal{A}_{m+1}} &= \tr{(\Delta_i - \Delta_j) \left( \sum_{l=0}^m \frac{m!}{l!(m-l)!} \left(\Delta_i^{m-l} U_i U_j^\dag \Delta_j^l - \Delta_j^{m-l} U_j U_i^\dag \Delta_i^l\right) \right)} \\
    &\leq \norm{\Delta_i - \Delta_j}_\F \left(\sum_{l=0}^m \frac{m!}{l!(m-l)!} \norm{\Delta_i^{m-l} U_i U_j^\dag \Delta_j^l - \Delta_j^{m-l} U_j U_i^\dag \Delta_i^l}_\F \right) \\
    &\leq \norm{U_i - U_j}_\F \left(\sum_{l=0}^{m} \frac{m!}{l!(m-l)!} \right) \cdot (m+2) \norm{U_i - U_j}_\F \\
    & = (m+2)2^m \norm{U_i - U_j}_\F^2.
\end{align*}
This yields the desired result.
\end{proof}
Now, we are ready to provide complete aggregation for a homogeneous ensemble. Let  $\beta_0$ be the positive solution of the equation:
\[
e^{2\beta} + \frac{e^{2\beta}-1}{2\beta} = 4.
\]

\begin{theorem} \label{T5.1}
\emph{(Complete state aggregation)}
Suppose system parameters and initial data satisfy
\[ H_i = O, \quad i \in [N], \quad 0<\beta =\kappa h < \beta_0 \approx 0.437864, \quad \mcal{D}(0) < \sqrt{ 4 - e^{2\beta} - \frac{e^{2\beta}-1}{2\beta}}, \]
and let $\{U_i (n) \}$ be a solution to \eqref{D-3}. Then, complete state aggregation emerges asymptotically:
\[\lim_{n\to\infty} \max_{1\leq i, j \leq N} \norm{U_i(n) - U_j(n)}_\F = 0.\]
Moreover, the convergence rate is exponential.
\end{theorem}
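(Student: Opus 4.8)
The plan is to control the ensemble diameter $\mcal{D}(n)$ directly from the exact identity \eqref{D-9} together with the two structural estimates already proved in Lemma \ref{L5.5} and Lemma \ref{L5.6}. Fix a pair $(i,j)$ and start from
\[ \norm{U_i(n+1) - U_j(n+1)}_\F^2 = \norm{U_i(n) - U_j(n)}_\F^2 - \beta \mcal{A}_1 - \sum_{m=2}^\infty \frac{\beta^m}{m!} \mcal{A}_m. \]
For the leading term I would insert the lower bound $\mcal{A}_1 \geq 2\norm{U_i - U_j}_\F^2 - \mcal{D}(n)^2 \norm{U_i - U_j}_\F^2$ from Lemma \ref{L5.5}(ii), and for the tail the absolute bound $\abs{\mcal{A}_m} \leq (m+1)2^{m-1} \norm{U_i - U_j}_\F^2$ from Lemma \ref{L5.6}. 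Factoring out $\norm{U_i - U_j}_\F^2$ then produces a pairwise contraction of the form
\[ \norm{U_i(n+1) - U_j(n+1)}_\F^2 \leq \Big( 1 - \beta\big( L(\beta) - \mcal{D}(n)^2 \big) \Big) \norm{U_i(n) - U_j(n)}_\F^2, \]
where $L(\beta) := 4 - e^{2\beta} - \frac{e^{2\beta}-1}{2\beta}$ is exactly the square of the stated threshold.

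The key computation is the summation of the tail. Writing $x = 2\beta$ and using $\sum_{m=0}^\infty \frac{(m+1)x^m}{m!} = (x+1)e^x$, I would evaluate
\[ \sum_{m=2}^\infty \frac{\beta^m}{m!} (m+1) 2^{m-1} = \frac{1}{2}\Big[ (2\beta+1)e^{2\beta} - 1 - 4\beta \Big], \]
divide by $\beta$, and check that $2 - \frac{1}{\beta}\sum_{m\geq 2}\frac{\beta^m}{m!}(m+1)2^{m-1}$ collapses precisely to $L(\beta)$. This algebraic identity is what makes the constant $\beta_0$, the root of $e^{2\beta}+\frac{e^{2\beta}-1}{2\beta}=4$, the natural cutoff: $L(\beta_0)=0$ and $L(\beta)>0$ for $0<\beta<\beta_0$, so the admissibility condition $\mcal{D}(0)^2 < L(\beta)$ is nonvacuous precisely in the stated parameter range.

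Next I would take the maximum over all pairs $(i,j)$. In the parameter window one checks $\beta L(\beta) < 1$, so the coefficient $C(n) := 1 - \beta(L(\beta) - \mcal{D}(n)^2)$ is nonnegative; combined with $\norm{U_i(n) - U_j(n)}_\F^2 \leq \mcal{D}(n)^2$ this upgrades the pairwise estimate to $\mcal{D}(n+1)^2 \leq C(n)\mcal{D}(n)^2$. Under the hypothesis $\mcal{D}(0)^2 < L(\beta)$ we have $C(0)<1$, hence $\mcal{D}(1)<\mcal{D}(0)$, and an induction shows that the ball $\{\mcal{D}^2 < L(\beta)\}$ is positively invariant and that $\mcal{D}(n)$ decreases monotonically.

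Finally, monotonicity freezes the decay rate: since $\mcal{D}(n)^2 \leq \mcal{D}(0)^2$ for every $n$, we obtain $C(n) \leq 1 - \beta(L(\beta) - \mcal{D}(0)^2) =: C_* < 1$, a constant independent of $n$, so that $\mcal{D}(n)^2 \leq C_*^n \mcal{D}(0)^2 \to 0$, giving complete state aggregation at an exponential rate. I expect the main obstacle to be not any individual estimate but the bootstrapping structure: the one-step contraction factor itself depends on $\mcal{D}(n)$, so one must first establish positive invariance of the admissible ball before a uniform decay constant can be extracted. The other delicate point is purely algebraic, namely arranging the tail sum so that it lands exactly on the threshold $L(\beta)$ rather than on a cruder majorant.
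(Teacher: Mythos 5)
Your proposal is correct and follows essentially the same route as the paper's proof: the same expansion \eqref{D-9}, the lower bound on $\mcal{A}_1$ from Lemma \ref{L5.5}(ii), the tail bound from Lemma \ref{L5.6}, the same closed-form summation of the tail collapsing onto $L(\beta)=4-e^{2\beta}-\frac{e^{2\beta}-1}{2\beta}$, and the same inductive bootstrap keeping $\mcal{D}(n)\leq \mcal{D}(0)$ so that the uniform contraction factor $1-\beta\bigl(L(\beta)-\mcal{D}(0)^2\bigr)<1$ yields exponential decay. Your explicit check that $\beta L(\beta)<1$, which guarantees the one-step coefficient is nonnegative and lets the pairwise estimate upgrade to a diameter estimate, is a small point the paper leaves implicit, but otherwise the two arguments coincide.
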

\begin{proof} We use the mathematical induction. Suppose that $\norm{U_i(n) - U_j(n)}_\F$ is monotonically decreasing up to the $n$-th time step. Then, one has 
\[ \mcal{D}(n) \leq \mcal{D}(0). \]
Now we claim:
\[  \mcal{D}(n+1) \leq (1-C) \mcal{D}(n), \quad n\geq 0 \]
for a positive constant $C$ independent of $n$. \newline

By \eqref{D-9} and the estimates of $\mcal{A}_m$'s (Lemma \ref{L5.5} and Lemma \ref{L5.6}), we get 
\begin{align} 
\begin{aligned} \label{D-12}
& \lVert U_i(n+ 1) - U_j(n+1) \rVert_\F^2 - \lVert U_i(n) - U_j(n) \rVert_\F^2 = -\beta \mcal{A}_1 - \sum_{m=2}^\infty \frac{\beta^m}{m!} \mcal{A}_m \\
& \hspace{0.5cm} \leq - 2\beta \norm{U_i(n) - U_j(n)}_\F^2 + \beta\mcal{D}(n)^2 \norm{U_i(n) - U_j(n)}_\F^2 \\
& \hspace{0.9cm}  + \sum_{m=2}^\infty \frac{\beta^m}{m!} \cdot (m+1)2^{m-1} \norm{U_i(n) - U_j(n)}_\F^2\\
& \hspace{0.5cm}= - \beta \bigg( 2 - \mcal{D}(n)^2 - \sum_{m=2}^\infty \frac{(2\beta)^{m-1}}{m!}(m+1) \bigg) \norm{U_i(n)- U_j(n)}_\F^2 \\
& \hspace{0.5cm} \leq - \beta \bigg( 4 - e^{2\beta} - \frac{e^{2\beta}-1}{2\beta} - (\mcal{D}(0))^2 \bigg) \norm{U_i(n)- U_j(n)}_\F^2 \leq 0.
\end{aligned}
\end{align}
The above inequatliy implies that there exists a positive constant $C$ independent of $i,j,n$ such that 
\[\norm{U_i(n+1) - U_j(n+1)}_\F^2 \leq (1-C)^2 \norm{U_i(n) - U_j(n)}_\F^2 \]
for all $i,j \in [N]$ and $n\geq 0$.
\end{proof}
\begin{remark}
By the result of this theorem, all discrete Lohe matrix models exhibit exponential aggregation for a homogeneous ensemble. 
\end{remark}

\section{A hetrogeneous matrix ensemble} \label{sec:6}
\setcounter{equation}{0}
In this section, we study emergent behaviors of two discrete Lohe matrix models \eqref{D-2-5} and \eqref{D-2-6} with heterogeneous ensemble:
\[ {\mathcal D}({\mathcal H}) > 0. \]
Unfortunately, we do not have emergent dynamics estimate for the discrete Lohe matrix model A yet. Hence in the following two subsections, we consider the discrete Lohe matrix models B and C, separately. For these two discrete models, we study the following three estimates: Let $\{{\mathcal U}(n) \}$ be a discrete Lohe matrix flow.
\begin{itemize}
\item
Estiamte 1 (existence of positively invariant set):~ there exist positive numbers $n_0$ and $\alpha$ such that 
\[ U_i(n) \in {\mathcal B}(\alpha), \quad \forall~n \geq n_0, \quad i \in [N]. \]

\vspace{0.1cm}

\item
Estimate 2 (orbital stability):~for another discrete Lohe matrix flow $\{ {\tilde {\mathcal U}}(n) \}$, their configuration shapes are asymptotically the same in the sense that 
\[ \lim_{n \to \infty}  \max_{1 \leq i,j \leq N} \snorm{U_i(n) U_j^\dag(n) - \tilde{U}_i(n) \tilde{U}_j^\dag(n)}_\F = 0. \]

\vspace{0.1cm}

\item
Estimate 3 (existence of state-locking state):~for $i, j \in [N]$, the quadratic state $U_i U_k^{\dag}$ converges as $n \to \infty$.
\end{itemize}
In the following two lengthy subsections, we study the above three estimates for each discrete Lohe matrix model. \newline
\subsection{The discrete Lohe matrix model B} \label{sec:6.1}
Consider the Cauchy problem to the discrete Lohe matrix model B:
\begin{equation}
\begin{cases} \label{D-12-0}
\displaystyle U_i (n+1)=  \exp(-\mathrm{i} H_i h)  \exp \left(\frac{\kappa h}{2}(U_c(n) U_i^\dag (n) - U_i(n) U_c^\dag (n))\right) U_i(n), \\
U_i(0) = U^0_i.
\end{cases}
\end{equation}
\subsubsection{Positively invariant set} \label{sec:6.1.1} In this part, we study the existence of positively invariant set.  For this, we set
\begin{equation*}
{\mathcal D}({\mathcal H}) := \max\limits_{1\leq i,j \leq N} \norm{H_i - H_j}_\F\quad \text{and}\quad\Lambda(\beta) :=\begin{cases}
 4-e^{2\beta} - \frac{e^{2\beta}-1}{2\beta}\quad&\text{if }\beta\neq0,\\
 2&\text{if }\beta=0.
 \end{cases}
\end{equation*}
Then $\Lambda(\beta)$ is a decreasing function of $\beta>0$ and $\Lambda(0)=2$. The equation $\Lambda(\beta) = 0$ has an unique positive solution $\beta_0 \approx 0.437864$. \newline

\noindent If $0 < \beta < \beta_0$ and
\[\frac{{\mathcal D}({\mathcal H})}{\kappa} < \left(\frac{\Lambda(\beta)}{3}\right)^{\frac{3}{2}},\]
then the equation
\[\Lambda(\beta)x - x^3 = \frac{2 {\mathcal D}({\mathcal H})}{\kappa}\]
has two positive solutions $\alpha_1,\alpha_2$ with $0<\alpha_1<\sqrt{\frac{\Lambda(\beta)}{3}}<\alpha_2<\sqrt{\Lambda(\beta)}$.

\begin{lemma} \label{L6.1}
Let $A$ and $B$ be $d\times d$ hermitian matrices. Then, one has
\[ \norm{e^{\mathrm{i}A} - e^{\mathrm{i}B}}_\F \leq \norm{A-B}_\F.\]
\end{lemma}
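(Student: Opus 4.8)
The plan is to prove the inequality by a homotopy (Duhamel) argument that reduces the difference of exponentials to an integral of unitarily conjugated copies of $\mathrm{i}(A-B)$, and then to exploit the unitary invariance of the Frobenius norm recorded in Lemma \ref{L5.1}. First I would linearly interpolate between the two Hermitian matrices: set
\[
A(t) := B + t(A-B), \qquad C := A - B, \qquad t \in [0,1],
\]
so that $A(t)$ is Hermitian for every $t$, $A(0) = B$, $A(1) = A$, and $\frac{d}{dt}A(t) = C$. By the fundamental theorem of calculus applied to the matrix-valued map $t \mapsto e^{\mathrm{i}A(t)}$, one has
\[
e^{\mathrm{i}A} - e^{\mathrm{i}B} = \int_0^1 \frac{d}{dt}\, e^{\mathrm{i}A(t)}\,dt .
\]

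The key step is the derivative-of-exponential (Duhamel) formula, which for a differentiable family $M(t)$ reads $\frac{d}{dt}e^{M(t)} = \int_0^1 e^{sM(t)}\,M'(t)\,e^{(1-s)M(t)}\,ds$. Applying it with $M(t) = \mathrm{i}A(t)$, hence $M'(t) = \mathrm{i}C$, gives
\[
\frac{d}{dt}\, e^{\mathrm{i}A(t)} = \int_0^1 e^{\mathrm{i}s A(t)}\,(\mathrm{i}C)\,e^{\mathrm{i}(1-s)A(t)}\,ds .
\]
Since $A(t)$ is Hermitian, the factors $e^{\mathrm{i}s A(t)}$ and $e^{\mathrm{i}(1-s)A(t)}$ are unitary, so by the unitary invariance $\norm{UA}_\F = \norm{AU}_\F = \norm{A}_\F$ from Lemma \ref{L5.1} the integrand has Frobenius norm
\[
\norm{e^{\mathrm{i}s A(t)}\,(\mathrm{i}C)\,e^{\mathrm{i}(1-s)A(t)}}_\F = \norm{\mathrm{i}C}_\F = \norm{A - B}_\F
\]
for every $s,t \in [0,1]$. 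Taking Frobenius norms through the two integrals and using the triangle inequality then yields
\[
\norm{e^{\mathrm{i}A} - e^{\mathrm{i}B}}_\F \leq \int_0^1\!\!\int_0^1 \norm{e^{\mathrm{i}s A(t)}\,(\mathrm{i}C)\,e^{\mathrm{i}(1-s)A(t)}}_\F \,ds\,dt = \norm{A - B}_\F ,
\]
which is the claim.

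The only genuine obstacle is justifying the Duhamel formula for the derivative of the matrix exponential; this is a standard fact (obtained, for instance, by differentiating the power series term by term, or by noting that $\Phi(t) := e^{-M(t)}\frac{d}{dt}e^{M(t)}$ satisfies an elementary first-order identity), and I would either cite it or include the short power-series verification. Everything else is routine once the formula is in hand, because the Hermiticity of $A(t)$ makes the conjugating factors exactly unitary, and Lemma \ref{L5.1} then collapses the integrand to the constant $\norm{A-B}_\F$.
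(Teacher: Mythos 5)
Your proof is correct, and it takes a genuinely different route from the paper's. The paper argues by spectral decomposition: writing $A = U\Lambda U^\dag$ and $B = VMV^\dag$ with eigenvalues $\lambda_k$, $\mu_l$, and setting $W = U^\dag V$, it expands the trace $\tr{(e^{\mathrm{i}A}-e^{\mathrm{i}B})(e^{-\mathrm{i}A}-e^{-\mathrm{i}B})}$ to obtain the exact identity
\[
\snorm{e^{\mathrm{i}A}-e^{\mathrm{i}B}}_\F^2 \;=\; \sum_{k,l=1}^d 4\sin^2\!\left(\frac{\lambda_k-\mu_l}{2}\right)\abs{W_{k,l}}^2,
\]
and concludes from the scalar bound $4\sin^2(x/2)\leq x^2$ together with the matching identity $\norm{A-B}_\F^2=\sum_{k,l}(\lambda_k-\mu_l)^2\abs{W_{k,l}}^2$. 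That computation is entirely elementary and, as a by-product, exhibits the exact deficit in the inequality in terms of the eigenvalue gaps $\lambda_k-\mu_l$ — information your argument does not see. Your homotopy--Duhamel argument instead interpolates $A(t)=B+t(A-B)$ and integrates the derivative-of-exponential formula, using only that the conjugating factors $e^{\mathrm{i}sA(t)}$ and $e^{\mathrm{i}(1-s)A(t)}$ are unitary; what this buys is generality, since the same three lines give the Lipschitz bound in every unitarily invariant norm (operator norm, all Schatten norms), whereas the paper's trace computation is tied to the Frobenius norm. The one nontrivial ingredient is the Duhamel formula itself, which you correctly flag; the term-by-term power-series verification you mention (or a citation) closes that. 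Two small points to make explicit in a final write-up: pulling $\norm{\cdot}_\F$ inside the double integral is the finite-dimensional Minkowski inequality for continuous matrix-valued integrands, and the two-sided invariance $\norm{UXV}_\F=\norm{X}_\F$ follows from applying the one-sided identity of Lemma \ref{L5.1} once on each side. With those remarks, your proof is complete and correct.
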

\begin{proof} Let the eigen-decomposition of $A$ and $B$ be
\[A = U \begin{pmatrix} \lambda_1 & & \\ & \ddots & \\ & & \lambda_d \end{pmatrix} U^\dag,\quad B = V \begin{pmatrix} \mu_1 & & \\ & \ddots & \\ & & \mu_d \end{pmatrix} V^\dag ,\]
where $U$, $V$ are unitary matices and $\lambda_1, \cdots,\lambda_n,\mu_1,\cdots,\mu_n$ are real numbers. \newline

Suppose $\Sigma = \text{diag}(\sigma_1,\cdots,\sigma_d)$ and $\tilde{\Sigma} = \text{diag} (\tilde{\sigma}_1,\cdots,\tilde{\sigma}_d)$ be arbitary diagonal matrices and $X = U\Sigma U^\dag$, $\tilde{X} = V\tilde{\Sigma}V^\dag$. Then we have the following estimate with $W = U^\dag V$:
\[\tr{X\tilde{X}^\dag} = \tr{U \Sigma U^\dag V \tilde{\Sigma}^\dag V^\dag} = \tr{\Sigma W \tilde{\Sigma}^\dag W^\dag} = \sum_{k,l=1}^d \sigma_k \tilde{\sigma}_l^\ast \abs{W_{k,l}}^2 .\]
By the same argument, we have
\begin{align*}
& \tr{e^{\mathrm{i} A} e^{-\mathrm{i} B}} = \sum_{k,l=1}^d e^{\mathrm{i}(\lambda_k - \mu_l)} \abs{W_{k,l}}^2, \quad \tr{e^{\mathrm{i} B} e^{-\mathrm{i} A}} = \sum_{k,l=1}^d e^{\mathrm{i}(\mu_l-\lambda_k)} \abs{W_{k,l}}^2, \\
& \tr{AB^\dag} = \tr{BA^\dag} = \sum_{k,l=1}^d \lambda_k \mu_l \abs{W_{k,l}}^2.
\end{align*}
Therefore, we obtain 
\begin{align*}
\snorm{e^{\mathrm{i}A} - e^{\mathrm{i}B}}_\F^2 &= \tr{(e^{\mathrm{i}A} - e^{\mathrm{i}B})(e^{-\mathrm{i}A} - e^{-\mathrm{i}B})} = \tr{2I - e^{\mathrm{i}A} e^{-\mathrm{i}B} - e^{\mathrm{i} B} e^{-\mathrm{i} A}} \\
&= \sum_{k,l=1}^d (2- e^{\mathrm{i}(\lambda_k-\mu_l)} - e^{-\mathrm{i}(\lambda_k-\mu_l)}) \abs{W_{k,l}}^2 = \sum_{k,l=1}^d 4\sin^2\left(\frac{\lambda_k - \mu_l}{2} \right) \abs{W_{k,l}}^2 \\
&\leq \sum_{k,l=1}^d (\lambda_k - \mu_l)^2 \abs{W_{k,l}}^2 = \sum_{k,l=1}^d (\lambda_k^2 + \mu_l^2 - 2\lambda_k \mu_l) \abs{W_{k,l}}^2 \\
&= \tr{AA^\dag - BA^\dag - AB^\dag + BB^\dag} = \norm{A-B}_\F^2. \qedhere
\end{align*}
\end{proof}

\vspace{0.2cm}

\begin{proposition} \label{P6.1}
Suppose system parameters and initial data satisfy
\[0<\beta =\kappa h < \beta_0 \approx 0.437864,\quad \frac{{\mathcal D}({\mathcal H})}{\kappa} < \left(\frac{\Lambda(\beta)}{3}\right)^{\frac{3}{2}},\quad \mathcal{U}^0\in\mcal{B}(\alpha_2), \]
and let $\{ {\mathcal U}(n) \}$ be a solution to \eqref{D-12-0} with the initial data ${\mathcal U}^0$. Then the following assertions hold:
\begin{enumerate}[label=(\roman*)]
\item ${\mathcal U}(n) \in \mcal{B}(\alpha_2)$ for all $n\geq 0$.
\vspace{0.1cm}
\item For any $\alpha_1<\alpha<\alpha_2$, there exists $n_\alpha$ such that ${\mathcal U}(n)\in\mcal{B}(\alpha)$ for all $n\geq n_\alpha$.
\vspace{0.1cm}
\item If $\,{\mathcal U}^0\in \overline{\mcal{B}(\alpha_1)}$, then ${\mathcal U}(n) \in\overline{\mcal{B}(\alpha_1)}$ for all $n\geq 0$.
\end{enumerate}
\end{proposition}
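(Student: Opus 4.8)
The plan is to reduce all three assertions to the analysis of a single scalar recursion for the ensemble diameter $\mcal{D}(n)=\max_{i,j}\norm{U_i(n)-U_j(n)}_\F$. Writing $\beta=\kappa h$ and $\Delta_i=\frac12(U_cU_i^\dag-U_iU_c^\dag)$, the update \eqref{D-12-0} reads $U_i(n+1)=e^{-\mathrm{i}H_ih}\,e^{\beta\Delta_i(n)}U_i(n)$. I would first split the one-step increment into a pure Lohe part and a free-flow part. Setting $V_i:=e^{\beta\Delta_i(n)}U_i(n)$, which is unitary since $\Delta_i$ is skew-hermitian, I write
\[ U_i(n+1)-U_j(n+1)=e^{-\mathrm{i}H_ih}(V_i-V_j)+(e^{-\mathrm{i}H_ih}-e^{-\mathrm{i}H_jh})V_j, \]
and estimate each term with Lemma \ref{L5.1}. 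The free-flow term is controlled by Lemma \ref{L6.1}: since $V_j\in\mbf{U}(d)$, $\norm{(e^{-\mathrm{i}H_ih}-e^{-\mathrm{i}H_jh})V_j}_\F\le\norm{e^{-\mathrm{i}H_ih}-e^{-\mathrm{i}H_jh}}_\F\le h\norm{H_i-H_j}_\F\le h\mcal{D}(\mcal{H})$. The pure Lohe term $\norm{V_i-V_j}_\F$ is exactly the quantity analyzed in Section \ref{sec:5}, since $V_i$ is the homogeneous update applied to the same ensemble $\{U_k(n)\}$; hence \eqref{D-9} together with Lemma \ref{L5.5} and Lemma \ref{L5.6} gives $\norm{V_i-V_j}_\F^2\le\big(1-\beta(\Lambda(\beta)-\mcal{D}(n)^2)\big)\norm{U_i(n)-U_j(n)}_\F^2$.

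Combining the two bounds at the pair realizing $\mcal{D}(n+1)$, using $\norm{U_i(n)-U_j(n)}_\F\le\mcal{D}(n)$ together with $\sqrt{1-x}\le1-\frac{x}{2}$ (legitimate here since $\beta\Lambda(\beta)<1$), and recalling $h=\beta/\kappa$, I obtain the closed scalar inequality
\[ \mcal{D}(n+1)\le F(\mcal{D}(n)),\qquad F(x):=x-\tfrac{\beta}{2}g(x),\quad g(x):=\Lambda(\beta)x-x^3-\tfrac{2\mcal{D}(\mcal{H})}{\kappa}, \]
valid whenever $\mcal{D}(n)<\sqrt{\Lambda(\beta)}$. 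By construction the two positive roots of $g$ are precisely $\alpha_1<\alpha_2$, which are therefore the two fixed points of $F$ in $(0,\sqrt{\Lambda(\beta)})$, with $g<0$ on $[0,\alpha_1)$, $g>0$ on $(\alpha_1,\alpha_2)$, and $g<0$ on $(\alpha_2,\sqrt{\Lambda(\beta)}]$. A direct computation gives $F'(x)=1-\frac{\beta}{2}\Lambda(\beta)+\frac{3\beta}{2}x^2\ge1-\frac{\beta}{2}\Lambda(\beta)>0$, so $F$ is strictly increasing.

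With these facts the three assertions become elementary scalar dynamics. For (1), monotonicity of $F$ and $F(\alpha_2)=\alpha_2$ give $F([0,\alpha_2))\subset[0,\alpha_2)$, so $\mcal{D}(0)<\alpha_2$ propagates to $\mcal{D}(n)<\alpha_2$ for all $n$ by induction (which also keeps $\mcal{D}(n)<\sqrt{\Lambda(\beta)}$, so the recursion stays valid). For (3), the same monotonicity with $F(\alpha_1)=\alpha_1$ yields $F([0,\alpha_1])\subset[0,\alpha_1]$, giving invariance of $\overline{\mcal{B}(\alpha_1)}$. For (2), fix $\alpha_1<\alpha<\alpha_2$; since $g>0$ on $(\alpha_1,\alpha_2)$ we have $F(x)<x$ there and $F([0,\alpha])\subset[0,\alpha)$, so it suffices to show $\mcal{D}$ eventually enters $[0,\alpha)$. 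If not, then $\mcal{D}(n)\in[\alpha,\alpha_2)$ for all $n$ by (1), where $\mcal{D}(n+1)\le F(\mcal{D}(n))<\mcal{D}(n)$ forces $\mcal{D}(n)\downarrow L\in[\alpha,\alpha_2)$; passing to the limit in the recursion gives $g(L)\le0$, contradicting $g>0$ on $(\alpha_1,\alpha_2)$.

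The main obstacle is the derivation of the scalar recursion, namely showing that the heterogeneous increment is governed by the cubic $F$ with the correct constants. The delicate points are verifying that the free-flow perturbation contributes exactly the additive term $h\mcal{D}(\mcal{H})=\frac{\beta}{\kappa}\mcal{D}(\mcal{H})$ (so the fixed-point condition matches the stated cubic $\Lambda(\beta)x-x^3=\frac{2\mcal{D}(\mcal{H})}{\kappa}$), that the Section \ref{sec:5} contraction estimate for $\norm{V_i-V_j}_\F$ may be reused verbatim, and that the linearization $\sqrt{1-x}\le1-x/2$ together with $\beta\Lambda(\beta)<1$ keeps $F$ monotone. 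Once the recursion and the sign/monotonicity structure of $F$ are in place, the invariance and absorption statements are routine.
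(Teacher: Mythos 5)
Your proposal is correct and follows essentially the same route as the paper: the same intermediate unitary state (the pure Lohe update $V_i$), Lemma \ref{L5.3} together with Lemma \ref{L6.1} to split off the free-flow perturbation as the additive term $h\mcal{D}(\mcal{H})$, reuse of the homogeneous contraction estimate \eqref{D-12}, and the resulting scalar recursion $\mcal{D}(n+1)\le \mcal{D}(n)-\frac{\beta}{2}\big(\Lambda(\beta)\mcal{D}(n)-\mcal{D}(n)^3-\frac{2\mcal{D}(\mcal{H})}{\kappa}\big)$. The only (harmless) divergence is the scalar endgame: where the paper proves invariance of $\overline{\mcal{B}(\alpha_1)}$ by factoring the cubic and bounding $\frac{\beta}{2}(x-\alpha_3)(\alpha_2-x)\le\frac{\beta}{2}\Lambda(\beta)<1$, you get all three assertions uniformly from strict monotonicity of $F$ (which rests on the same numerical fact $\frac{\beta}{2}\Lambda(\beta)<1$), and you spell out the absorption argument for assertion (2) that the paper leaves implicit with ``directly proves the first two assertions.''
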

\begin{proof} Note that the above statements can be followed from the following assertions:
\begin{itemize}
    \item If $\mcal{D}(n)\in[\alpha_1,\alpha_2)$, then $\mcal{D}(n+1)\leq \mcal{D}(n)$.
    \vspace{0.1cm}
    \item If $\alpha\in(\alpha_1,\alpha_2)$ and $\mcal{D}(n)\in[\alpha,\alpha_2)$, then $\mcal{D}(n+1) \leq \mcal{D}(n) - C(\alpha)$ with a positive constant $C(\alpha)$.
    \vspace{0.1cm}
    \item If $\mcal{D}(n)\in[0,\alpha_1]$, then $\mcal{D}(n+1) \leq \alpha_1$.
\end{itemize}
For all $i\in[N]$, we define an intermediate state $V_i(n+1)$ via the relation:
\begin{equation*}
U_i(n+1) = \exp\left(-{\mathrm i}H_i h \right) V_i(n+1). 
\end{equation*}
Then, one has
\[V_i(n+1) = \exp \left(\frac{\kappa h}{2} (U_c(n) U_i^\dag (n) - U_i(n) U_c^\dag (n)) \right) U_i(n) \in \mbf{U}(d), \quad i\in[N],\; n\geq 0.\]
On the other hand, it follows from \eqref{D-12} that 
\begin{equation} \label{D-12-2}
 \lVert V_i(n+ 1) - V_j(n+1) \rVert_\F^2 \leq  \Big( 1 - \beta ( \Lambda(\beta) - (\mcal{D}(n))^2 ) \Big) \lVert U_i(n) - U_j(n) \rVert_\F^2.
\end{equation}
By \eqref{D-12-2}, Lemma \ref{L5.3} and Lemma \ref{L6.1}, one has 
\begin{align*}
\lVert U_i(n & +1) - U_j(n+1) \rVert_\F = \snorm{e^{-{\mathrm i}H_i h} V_i(n+1) - e^{-{\mathrm i}H_j h} V_j(n+1)}_\F\\
&\leq \norm{V_i(n+1) - V_j(n+1)}_\F + \snorm{e^{-{\mathrm i}H_ih} - e^{-{\mathrm i}H_j h}}_\F \\
&\leq \sqrt{1-\beta(\Lambda(\beta) - \mcal{D}(n)^2)} \norm{U_i(n) - U_j(n)}_\F + h \norm{H_i - H_j}_\F \\
&\leq \left(1 - \frac{\beta}{2} (\Lambda(\beta) - \mcal{D}(n)^2) \right) \mcal{D}(n) + h {\mathcal D}({\mathcal H}).
\end{align*}
Therefore, we get
\[\mcal{D}(n+1)\leq \mcal{D}(n) - \frac{\beta}{2}\left( \Lambda(\beta) \mcal{D}(n) - \mcal{D}(n)^3 - \frac{2{\mathcal D}({\mathcal H})}{\kappa} \right).\]
The above inequality directly proves the first two assertions and the last assertion holds since $\frac{\beta}{2} \Lambda(\beta) \leq 0.13205 < 1$ regardless of $\beta$. Let $\alpha_3 <0<\alpha_1<\alpha_2$ be the three real solutions to the cubic equation $x^3 - \Lambda(\beta)x + \frac{2\mcal{D}(\mcal{H})}{\kappa} = 0$. For $0\leq x\leq \alpha_1$, one has
\[\frac{\beta}{2} (x-\alpha_3)(\alpha_2 -x) \leq \frac{\beta}{2} \left(\frac{\alpha_2-\alpha_3}{2} \right)^2\leq \frac{\beta}{2}\Lambda(\beta) <1.\]
Therefore, for $0\leq x\leq \alpha_1$,
\[x + \frac{\beta}{2} \left( x^3 - \Lambda(\beta)x + \frac{2\mcal{D}(\mcal{H})}{\kappa} \right) = x+ \frac{\beta}{2} (x - \alpha_3)(\alpha_2 - x)(\alpha_1 - x) \leq x + (\alpha_1-x) = \alpha_1. \qedhere\]
\end{proof}

\subsubsection{Orbital stability} \label{sec:6.1.2}
In this part, we study asymptotic state-locking for \eqref{D-2-5}. For the zero free flows with $H_i = O$, by Theorem \ref{T5.1}, the relative states tend to the same state. However, for a heterogeneous ensemble with different hamiltonians,  complete state aggregation will not happen. Moreover, the result of Theorem \ref{T5.1} does not tell us whether $U_i(n)$ converges or not, as $n \to \infty$, i.e., we can exclude the possibility in which the common state is time-dependent a priori. \newline

 Let $\{ {\mathcal U}(n) \}$ and $\{ \tilde{\mathcal U}(n) \}$ be two solutions to \eqref{D-2-5}:
\begin{align}
\begin{aligned} \label{D-12-2-1}
 U_i (n+1) &=  \exp(-\mathrm{i} H_i h)  \exp \big(\beta \Delta_i \big) U_i(n), \\
 {\tilde U}_i (n+1) &=  \exp(-\mathrm{i} H_i h)  \exp \big(\beta{\tilde  \Delta}_i \big) {\tilde U}_i(n).
\end{aligned}
\end{align}
Then, we introduce the functional $d({\mathcal U},\tilde{\mathcal{U}})$ measuring the relative position of two configurations $\mathcal{U},\tilde{\mathcal{U}}$:
\[d(\mathcal{U},\tilde{\mathcal{U}}) = \max_{1 \leq i,j \leq N} \snorm{U_i U_j^\dag - \tilde{U}_i \tilde{U}_j^\dag}_\F. \]
It is easy to see that for $d=1$, via ansatz \eqref{New-1}, 
\begin{equation} \label{New-2}
U_i U_j^\dag - \tilde{U}_i \tilde{U}_j^\dag = e^{-{\mathrm i} (\theta_i - \theta_j)} - e^{-{\mathrm i} ({\tilde \theta}_i - {\tilde \theta}_j)}.
\end{equation} 
Hence, the zero convergence of the L.H.S. of \eqref{New-2} is equivalent to the zero convergence of difference between corresponding relative phases $(\theta_i - \theta_j) - ({\tilde \theta}_i - {\tilde \theta}_j)$. In this sense, it is reasonable to guess that the functional $d(\mathcal{U},\tilde{\mathcal{U}})$ measures relative difference between two configurations. 

Note that
\[\snorm{U_i U_j^\dag - \tilde{U}_i \tilde{U}_j^\dag}_\F \leq \snorm{U_i U_j^\dag - I}_\F + \snorm{\tilde{U}_i \tilde{U}_j^\dag - I}_\F \leq \mcal{D}(\mcal{U}) + \mcal{D}(\tilde{\mcal{U}}),\quad i,j\in[N],\]
implies $d(\mcal{U}, \tilde{\mcal{U}}) \leq \mcal{D}(\mcal{U}) + \mcal{D}(\tilde{\mcal{U}})$.

\vspace{0.2cm}

Recall that our goal in this subsection is to look for sufficient condition in which the distance between relative positions $\snorm{U_i(n) U_j^\dag (n) - \tilde{U}_i(n) \tilde{U}_j^\dag (n)}_\F$ is monotonically decreasing, as $n$ increases to infinity. In what follows, unless stated otherwise, the model states $U_i$, $\tilde{U}_i$, and the functionals $U_c$, $\tilde{U}_c$, $\Delta_i$, $\tilde{\Delta}_i$ denote those values at the $n$-th time step. Using \eqref{D-12-2-1}, one has 
{\allowdisplaybreaks \begin{align*}
& \snorm{U_i(n+1) U_j^\dag(n+1) - \tilde{U}_i(n+1) \tilde{U}_j^\dag(n+1)}_\F^2 - \snorm{U_i(n) U_j^\dag (n) - \tilde{U}_i(n) \tilde{U}_j^\dag (n)}_\F^2  \\
&  \hspace{0.5cm} =  \snorm{V_i(n+1) V_j^\dag(n+1) - \tilde{V}_i(n+1) \tilde{V}_j^\dag(n+1)}_\F^2 - \snorm{U_i(n) U_j^\dag (n) - \tilde{U}_i(n) \tilde{U}_j^\dag (n)}_\F^2  \\
& \hspace{0.5cm} = \text{tr}\bigg( \big( U_i U_j^\dag  \tilde{U}_j \tilde{U}_i^\dag \big)(n) + \big( U_j U_i^\dag \tilde{U}_i \tilde{U}_j ^\dag \big)(n) - \big( U_i U_j^\dag \tilde{U}_j \tilde{U}_i^\dag \big)(n+1) - \big( U_j U_i^\dag \tilde{U}_i \tilde{U}_j ^\dag \big) (n+1) \bigg) \\
&  \hspace{0.5cm} = \text{tr}\left( U_i U_j^\dag \tilde{U}_j \tilde{U}_i^\dag - \exp(\beta \Delta_i) U_i U_j^\dag \exp(-\beta\Delta_j) \exp(\beta\tilde{\Delta}_j) \tilde{U}_j \tilde{U}_i^\dag \exp(-\beta\tilde{\Delta}_i) \right. \\
& \hspace{1cm}  + \left. U_j U_i^\dag \tilde{U}_i \tilde{U}_j^\dag - \exp(\beta \Delta_j) U_j U_i^\dag \exp(-\beta\Delta_i) \exp(\beta\tilde{\Delta}_i) \tilde{U}_i \tilde{U}_j^\dag \exp(-\beta\tilde{\Delta}_j) \right)\\
&  \hspace{0.5cm} = \text{tr}\left( U_i U_j^\dag \tilde{U}_j \tilde{U}_i^\dag -\exp(-\beta\tilde{\Delta}_i) \exp(\beta \Delta_i) U_i U_j^\dag \exp(-\beta\Delta_j) \exp(\beta\tilde{\Delta}_j) \tilde{U}_j \tilde{U}_i^\dag \right. \\
& \hspace{1cm}  + \left. U_j U_i^\dag \tilde{U}_i \tilde{U}_j^\dag -\exp(-\beta\tilde{\Delta}_j) \exp(\beta \Delta_j) U_j U_i^\dag \exp(-\beta\Delta_i) \exp(\beta\tilde{\Delta}_i) \tilde{U}_i \tilde{U}_j^\dag \right)\\
& \hspace{0.5cm} = - \beta\mcal{T}_1  - \frac{\beta^2}{2!} \mcal{T}_2 - \frac{\beta^3}{3!} \mcal{T}_3 -\cdots = - \sum_{m=1}^\infty \frac{\beta^m}{m!} \mcal{T}_m, \stepcounter{equation}\tag{\theequation}\label{D-12-1-3}
\end{align*}}
where $\mcal{T}_m$'s are terms independent of $\beta$. For all $m\geq 1$, we define
\begin{align*}
\mcal{I}_m^+ &= \sum_{k=0}^m \frac{m!}{k!(m-k)!} (-\tilde{\Delta}_i)^k \Delta_i^{m-k},\quad \mcal{I}_m^- = \sum_{k=0}^m \frac{m!}{k!(m-k)!} (-\Delta_i)^k \tilde{\Delta}_i^{m-k}, \\
\mcal{J}_m^+ &= \sum_{k=0}^m \frac{m!}{k!(m-k)!} (-\tilde{\Delta}_j)^k \Delta_j^{m-k},\quad \mcal{J}_m^- = \sum_{k=0}^m \frac{m!}{k!(m-k)!} (-\Delta_j)^k \tilde{\Delta}_j^{m-k}l.
\end{align*}
For convenience, we set 
\[ \mcal{I}_0^+ = \mcal{I}_0^- = \mcal{J}_0^+ = \mcal{J}_0^- = I. \]
As formula \eqref{D-8}, expanding the matrix exponentials results in the following identities:
\begin{align*}
\exp(-\beta\tilde{\Delta}_i)\exp(\beta\Delta_i) &= \sum_{m=0}^\infty \frac{\beta^m}{m!} \mcal{I}_m^+,\quad \exp(-\beta\Delta_i)\exp(\beta\tilde{\Delta}_i) = \sum_{m=0}^\infty \frac{\beta^m}{m!} \mcal{I}_m^-, \\
\exp(-\beta\tilde{\Delta}_j)\exp(\beta\Delta_j) &= \sum_{m=0}^\infty \frac{\beta^m}{m!} \mcal{J}_m^+,\quad \exp(-\beta\Delta_j)\exp(\beta\tilde{\Delta}_j) = \sum_{m=0}^\infty \frac{\beta^m}{m!} \mcal{J}_m^-.
\end{align*}
Thus, we have
\begin{equation} \label{D-14}
\mcal{T}_m = \sum_{k=0}^m \frac{m!}{k!(m-k)!} \bigg\{\underbrace{\tr{\mcal{I}_k^+ U_i U_j^\dag \mcal{J}_{m-k}^- \tilde{U}_j \tilde{U}_i^\dag} + \tr{\mcal{J}_{m-k}^+ U_j U_i^\dag \mcal{I}_{k}^- \tilde{U}_i \tilde{U}_j^\dag}}_{\mcal{M}_{k,m-k}} \bigg\}.
\end{equation}
Note that
{\allowdisplaybreaks \begin{align*}
\mcal{I}_{m+1}^+ &= \sum_{k=0}^{m+1} \frac{(m+1)!}{k!(m+1-k)!} (-\tilde{\Delta}_i)^k \Delta_i^{m+1-k} \\
&= \sum_{k=0}^{m+1} \left( \frac{m!}{k!(m-k)!} + \frac{m!}{(k-1)! (m+1-k)!} \right) (-\tilde{\Delta}_i)^k \Delta_i^{m+1-k} \\
&= \sum_{k=0}^{m} \frac{m!}{k! (m-k)!} (-\tilde{\Delta}_i)^k \Delta_i^{m+1-k} + \sum_{k=0}^{m} \frac{m!}{k! (m-k)!} (-\tilde{\Delta}_i)^{k+1} \Delta_i^{m-k}\\
&= \sum_{k=0}^{m} \frac{m!}{k! (m-k)!} (-\tilde{\Delta}_i)^k (\Delta_i - \tilde{\Delta}_i) \Delta_i^{m-k}.
\end{align*}}
We have similar formulas for $\mcal{I}_m^-$, $\mcal{J}_m^+$, and $\mcal{J}_m^-$. By Lemma \ref{L5.1}, Lemma \ref{L5.2} and Lemma \ref{L5.4}, we have
\[\norm{\mcal{I}_{m+1}^+}_\F \leq \sum_{k=0}^{m} \frac{m!}{k!(m-k)!} \snorm{\Delta_i - \tilde{\Delta}_i}_\F = 2^{m} \snorm{\Delta_i - \tilde{\Delta}_i}_\F \]
and
\[\snorm{\Delta_i - \tilde{\Delta}_i}_\F \leq \frac{1}{2N} \sum_{k=1}^N \big(\snorm{U_k U_i^\dag - \tilde{U}_k \tilde{U}_i^\dag}_\F + \snorm{U_i U_k^\dag - \tilde{U}_i \tilde{U}_k^\dag}_\F \big) \leq d(\mathcal{U},\tilde{\mathcal{U}}).\]
Therefore, we have
\begin{equation} \label{D-13}
\norm{\mcal{I}_{m}^+}_\F, \norm{\mcal{I}_{m}^-}_\F, \norm{\mcal{J}_{m}^+}_\F, \norm{\mcal{J}_{m}^-}_\F \leq 2^{m-1} d(\mathcal{U},\tilde{\mathcal{U}}),\quad m\geq 1.
\end{equation}

\vspace{0.1cm}

\noindent $\bullet$~(Estimate of $\mcal{T}_1$):~The first order term $\mcal{T}_1$ is studied using the argument \cite{H-R} (Appendix 2) for the Lohe matrix model:
{\allowdisplaybreaks \begin{align}
\begin{aligned} \label{D-13-1}
\mcal{T}_1 &= \text{tr}\left((\Delta_i - \tilde{\Delta}_i) U_i U_j^\dag \tilde{U}_j \tilde{U}_i^\dag - U_i U_j^\dag (\Delta_j - \tilde{\Delta}_j) \tilde{U}_j \tilde{U}_i^\dag \right. \\
& \hspace{0.8cm} + \left. (\Delta_j - \tilde{\Delta}_j) U_j U_i^\dag \tilde{U}_i \tilde{U}_j^\dag - U_j U_i^\dag (\Delta_i - \tilde{\Delta}_i) \tilde{U}_i \tilde{U}_j^\dag \right) \\
&\geq 2\snorm{U_i U_j^\dag - \tilde{U}_i \tilde{U}_j^\dag}_\F^2 - (4\mcal{D}(\mathcal{U}) + 2\mcal{D}(\tilde{\mathcal{U}})) d(\mathcal{U},\tilde{\mathcal{U}})^2.
\end{aligned}
\end{align}}
In next lemma, we estimate the terms ${\mathcal T}_m$ for $m\geq 2$.
\begin{lemma} \label{L6.2} 
Let $\{{\mathcal U}(n)\}$ be a solution to \eqref{D-12-0} with the initial data ${\mathcal U}^0$. Then, for $m\geq 1$, the following inequality holds:
\[\abs{\mcal{T}_m} \leq \left(2^{2m-1} + m\cdot 2^m\right) d(\mathcal{U},\tilde{\mathcal{U}})^2\]
\end{lemma}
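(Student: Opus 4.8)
The plan is to bound each $\mcal{T}_m$ directly from its explicit expansion \eqref{D-14}, using the uniform estimates \eqref{D-13} on the partial sums $\mcal{I}_k^\pm,\mcal{J}_k^\pm$. Writing $P:=U_iU_j^\dag$ and $\tilde{P}:=\tilde{U}_i\tilde{U}_j^\dag$, every summand $\mcal{M}_{k,m-k}$ is a trace of a product of $\mcal{I}_k^\pm$, $\mcal{J}_{m-k}^\pm$ and the unitary blocks $P,\tilde{P}$ together with their adjoints. Since these blocks are unitary, $\norm{P}_\op=\snorm{\tilde{P}}_\op=1$, so Lemma \ref{L5.1} and the cyclicity of the trace let me absorb them at no cost. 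I would then split the binomial sum $\sum_{k=0}^m\binom{m}{k}\mcal{M}_{k,m-k}$ into the \emph{interior} indices $1\le k\le m-1$ and the two \emph{boundary} indices $k\in\{0,m\}$, estimating the two groups by different devices. The case $m=1$ is already covered by the computation leading to \eqref{D-13-1}, so I may assume $m\ge 2$.

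For an interior index both factors $\mcal{I}_k^\pm$ and $\mcal{J}_{m-k}^\pm$ genuinely carry a difference of the form $\Delta-\tilde{\Delta}$, hence by \eqref{D-13} satisfy $\norm{\mcal{I}_k^\pm}_\F\le 2^{k-1}d(\mcal{U},\tilde{\mcal{U}})$ and $\norm{\mcal{J}_{m-k}^\pm}_\F\le 2^{m-k-1}d(\mcal{U},\tilde{\mcal{U}})$. After peeling off the unitary blocks, Lemma \ref{L5.2} gives $|\mcal{M}_{k,m-k}|\le 2^{m-1}d(\mcal{U},\tilde{\mcal{U}})^2$, and summing against $\binom{m}{k}$ over $1\le k\le m-1$ produces
\[
\sum_{k=1}^{m-1}\binom{m}{k}\,2^{m-1}\,d(\mcal{U},\tilde{\mcal{U}})^2=\big(2^{2m-1}-2^m\big)\,d(\mcal{U},\tilde{\mcal{U}})^2,
\]
which accounts for the leading term $2^{2m-1}$.

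The boundary indices are the delicate part, since there exactly one of the two expansion factors collapses to $\mcal{I}_0^\pm=\mcal{J}_0^\pm=I$; bounding such a trace naively by $\norm{\mcal{I}_m^\pm}_\F$ times the Frobenius norm of a unitary would cost a factor $\sqrt{d}$ and retain only \emph{one} power of $d(\mcal{U},\tilde{\mcal{U}})$, which is too weak. To recover the missing power, I would split the surviving unitary block, e.g.\ $P\tilde{P}^\dag=I+(P-\tilde{P})\tilde{P}^\dag$ with $\snorm{P-\tilde{P}}_\F\le d(\mcal{U},\tilde{\mcal{U}})$: the term $(P-\tilde{P})\tilde{P}^\dag$ pairs with $\norm{\mcal{I}_m^+}_\F\le 2^{m-1}d(\mcal{U},\tilde{\mcal{U}})$ to give an $O\!\big(2^{m-1}d(\mcal{U},\tilde{\mcal{U}})^2\big)$ contribution, and what remains are the pure traces $\tr{\mcal{I}_m^+}+\tr{\mcal{I}_m^-}$ and $\tr{\mcal{J}_m^+}+\tr{\mcal{J}_m^-}$. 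These I would control by using that $\Delta_i,\tilde{\Delta}_i$ are skew-Hermitian, so $e^{-\beta\tilde{\Delta}_i}e^{\beta\Delta_i}$ is unitary and, by the cyclicity of the trace, $\tr{\mcal{I}_m^+}+\tr{\mcal{I}_m^-}$ equals (up to the factor $-m!$) the $\beta^m$-coefficient of $\snorm{e^{\beta\Delta_i}-e^{\beta\tilde{\Delta}_i}}_\F^2$; Lemma \ref{L6.1} applied with the Hermitian matrices $-\mathrm{i}\beta\Delta_i,-\mathrm{i}\beta\tilde{\Delta}_i$ then forces this to be $O(\beta^2\snorm{\Delta_i-\tilde{\Delta}_i}_\F^2)=O(\beta^2 d(\mcal{U},\tilde{\mcal{U}})^2)$, so the pure traces are of order $d(\mcal{U},\tilde{\mcal{U}})^2$ as required. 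Collecting all boundary contributions then yields the remaining term $m\cdot 2^m$.

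The main obstacle is precisely this boundary analysis. Unlike the interior terms, where two difference factors are present automatically, the collapsed identity factor forces me to manufacture the second power of $d(\mcal{U},\tilde{\mcal{U}})$ by hand and to estimate the residual pure traces with the correct combinatorial weight, so that the interior and boundary estimates assemble into exactly $2^{2m-1}+m\cdot 2^m$ rather than a cruder, larger constant. The bookkeeping that pins down this constant, rather than any single inequality, is where the real work lies.
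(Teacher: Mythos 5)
Your interior-index estimate ($1\le k\le m-1$) coincides with the paper's, but the boundary treatment --- which you correctly identify as the delicate part --- contains a genuine gap. Write $d:=d(\mathcal{U},\tilde{\mathcal{U}})$. The pure-trace step is invalid as stated: Lemma \ref{L6.1} gives $\snorm{e^{\beta\Delta_i}-e^{\beta\tilde{\Delta}_i}}_\F^2\le\beta^2\snorm{\Delta_i-\tilde{\Delta}_i}_\F^2$ for \emph{real} $\beta$, and a bound on an analytic function along the real axis does not control its individual Taylor coefficients; saying the $\beta^m$-coefficient is ``$O(\beta^2 d^2)$'' is not a meaningful inference, since the coefficient is a fixed number and the real-axis bound only records second-order vanishing at $\beta=0$. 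To extract $-\frac{1}{m!}\big(\tr{\mcal{I}_m^+}+\tr{\mcal{I}_m^-}\big)$ rigorously you would need Cauchy estimates on complex disks, where $e^{z\Delta_i}$ is no longer unitary and Lemma \ref{L6.1} is unavailable; the surviving crude bound $\snorm{e^{z\Delta_i}-e^{z\tilde{\Delta}_i}}_\F\le |z|e^{|z|}\snorm{\Delta_i-\tilde{\Delta}_i}_\F$ yields coefficient bounds of size $m!\inf_{r>0}r^{2-m}e^{2r}$, which grows like $2^m m^{5/2}$ --- strictly worse than the $m\cdot 2^m$ you need. Moreover, even granting the sharp pure-trace bound $\abs{\tr{\mcal{I}_m^+}+\tr{\mcal{I}_m^-}}\le(m-1)2^{m-1}d^2$ (obtainable from the telescoping identity below plus Lemma \ref{L5.3}), your identity-splitting gives $\abs{\mcal{M}_{m,0}}\le(m-1)2^{m-1}d^2+2\cdot 2^{m-1}d^2=(m+3)2^{m-1}d^2$, so the total assembles to $\big(2^{2m-1}+(m+2)2^m\big)d^2$, overshooting the stated constant; your claim that the boundary pieces yield ``exactly $m\cdot 2^m$'' is asserted, not derived, and fails for this decomposition.

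The paper avoids both problems by never splitting off the identity. It keeps the two traces of $\mcal{M}_{m,0}$ together, inserts the telescoping identity
\[
\mcal{I}_m^{+}=\sum_{k=0}^{m-1}\binom{m-1}{k}(-\tilde{\Delta}_i)^k(\Delta_i-\tilde{\Delta}_i)\Delta_i^{m-1-k}
\]
(and its analogue for $\mcal{I}_m^-$), and uses cyclicity to pull out $(\Delta_i-\tilde{\Delta}_i)$ as a common factor:
\[
\mcal{M}_{m,0}=\sum_{k=0}^{m-1}(-1)^k\binom{m-1}{k}\tr{(\Delta_i-\tilde{\Delta}_i)\big(\Delta_i^{m-1-k}U_iU_j^\dag\tilde{U}_j\tilde{U}_i^\dag\tilde{\Delta}_i^k-\tilde{\Delta}_i^{m-1-k}\tilde{U}_i\tilde{U}_j^\dag U_jU_i^\dag\Delta_i^k\big)}.
\]
Since every factor has operator norm at most $1$ (Lemma \ref{L5.4}), Lemma \ref{L5.3} bounds the whole difference of products by $(m+1)d$ in a single stroke: the second power of $d$ arises simultaneously from the $\Delta$-discrepancies and from $U_iU_j^\dag$ versus $\tilde{U}_i\tilde{U}_j^\dag$, with no pure traces left over. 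This gives $\abs{\mcal{M}_{m,0}},\abs{\mcal{M}_{0,m}}\le(m+1)2^{m-1}d^2$ and hence exactly $2^{2m-1}+m\cdot 2^m$. One small additional slip: \eqref{D-13-1} is a \emph{lower} bound on $\mcal{T}_1$, not the bound $\abs{\mcal{T}_1}\le 4d^2$ claimed by the lemma at $m=1$; the latter follows from the same combined-trace computation, $\mcal{M}_{1,0}=\tr{(\Delta_i-\tilde{\Delta}_i)\big(U_iU_j^\dag\tilde{U}_j\tilde{U}_i^\dag-\tilde{U}_i\tilde{U}_j^\dag U_jU_i^\dag\big)}$, so you cannot simply cite \eqref{D-13-1} there.
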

\begin{proof} 
We estimate each summand $\mcal{M}_{k,m-k}$ in the formula \eqref{D-14}. For $1\leq k\leq m-1$, we have the following by \eqref{D-13}:
\begin{align*}
\abs{\mcal{M}_{k,m-k}} &= \abs{ \tr{\mcal{I}_k^+ U_i U_j^\dag \mcal{J}_{m-k}^- \tilde{U}_j \tilde{U}_i^\dag} + \tr{\mcal{J}_{m-k}^+ U_i U_j^\dag \mcal{I}_{k}^- \tilde{U}_j \tilde{U}_i^\dag}} \\
&\leq \norm{\mcal{I}_k^+}_\F \norm{\mcal{J}_{m-k}^-}_\F + \norm{\mcal{J}_{m-k}^+}_\F \norm{\mcal{I}_{m-k}^-}_\F \\
&\leq 2^{k-1} 2^{m-k-1} d(\mathcal{U},\tilde{\mathcal{U}})^2 + 2^{m-k-1} 2^{k-1} d(\mathcal{U},\tilde{\mathcal{U}})^2 = 2^{m-1} d(\mathcal{U},\tilde{\mathcal{U}})^2.
\end{align*}
For $\mcal{M}_{m,0}$, we have
{\allowdisplaybreaks \begin{align*}
\mcal{M}_{m,0} &= \tr{\mcal{I}_m^+ U_i U_j^\dag \tilde{U}_j \tilde{U}_i^\dag} + \tr{U_j U_i^\dag \mcal{I}_{m}^- \tilde{U}_i \tilde{U}_j^\dag} \\
&= \tr{\mcal{I}_m^+ U_i U_j^\dag \tilde{U}_j \tilde{U}_i^\dag} + \tr{\mcal{I}_{m}^- \tilde{U}_i \tilde{U}_j^\dag U_j U_i^\dag} \\
&= \tr{\sum_{k=0}^{m-1} \frac{(m-1)!}{k! (m-1-k)!} (-\tilde{\Delta}_i)^k (\Delta_i - \tilde{\Delta}_i) \Delta_i^{m-1-k} U_i U_j^\dag \tilde{U}_j \tilde{U}_i^\dag } \\
& \hspace{0.5cm} - \tr{\sum_{k=0}^{m-1} \frac{(m-1)!}{k! (m-1-k)!} (-\Delta_i)^k (\Delta_i - \tilde{\Delta}_i) \tilde{\Delta}_i^{m-1-k} \tilde{U}_i \tilde{U}_j^\dag U_j U_i^\dag} \\
& = \sum_{k=0}^{m-1} (-1)^k \frac{(m-1)!}{k! (m-1-k)!} \tr{(\Delta_i - \tilde{\Delta}_i) \left( \Delta_i^{m-1-k} U_i U_j^\dag \tilde{U}_j \tilde{U}_i^\dag \tilde{\Delta}_i^k - \tilde{\Delta}_i^{m-1-k} \tilde{U}_i \tilde{U}_j^\dag U_j U_i^\dag \Delta_i^k \right)}.
\end{align*}}
Hence, we have
\begin{align*}
& |\mcal{M}_{m,0}| = \abs{\tr{\mcal{I}_m^+ U_i U_j^\dag \tilde{U}_j \tilde{U}_i^\dag} + \tr{U_j U_i^\dag \mcal{I}_{m}^- \tilde{U}_i \tilde{U}_j^\dag}} \\
& \quad \leq \sum_{k=0}^{m-1} \frac{(m-1)!}{k! (m-1-k)!} \snorm{\Delta_i - \tilde{\Delta}_i}_\F \norm{\Delta_i^{m-1-k} U_i U_j^\dag \tilde{U}_j \tilde{U}_i^\dag \tilde{\Delta}_i^k - \tilde{\Delta}_i^{m-1-k} \tilde{U}_i \tilde{U}_j^\dag U_j U_i^\dag \Delta_i^k}_\F \\
& \quad \leq \sum_{k=0}^{m-1} \frac{(m-1)!}{k! (m-1-k)!} d(\mathcal{U},\tilde{\mathcal{U}}) \cdot (m+1) d(\mathcal{U},\tilde{\mathcal{U}}) = (m+1) 2^{m-1} d(\mathcal{U},\tilde{\mathcal{U}})^2.
\end{align*}
In the third line of the above inequality, we used Lemma \ref{L5.3} and Lemma \ref{L5.4} to get
\begin{align*}
& \norm{\Delta_i^{m-1-k} U_i U_j^\dag \tilde{U}_j \tilde{U}_i^\dag \tilde{\Delta}_i^k - \tilde{\Delta}_i^{m-1-k} \tilde{U}_i \tilde{U}_j^\dag U_j U_i^\dag \Delta_i^k}_\F \\
& \hspace{0.2cm} \leq (m-1-k) \snorm{\Delta_i - \tilde{\Delta}_i}_\F + \snorm{U_i U_j^\dag - \tilde{U}_i \tilde{U}_j^\dag}_\F + \snorm{\tilde{U}_j \tilde{U}_i^\dag - U_j U_i^\dag}_\F + k\snorm{\tilde{\Delta}_i - \Delta_i}_\F \\
& \hspace{0.2cm}  \leq (m+1)d(\mathcal{U},\tilde{\mathcal{U}}).
\end{align*}
By the same argument, we have
\[\abs{\mcal{M}_{0,m}} \leq (m+1)2^{m-1} d(\mathcal{U},\tilde{\mathcal{U}})^2. \]
Summing up, we get \vspace{-0.2cm}
\begin{align*}
\abs{\mcal{T}_m} & \leq \abs{\mcal{M}_{m,0}} + \abs{\mcal{M}_{0,m}} + \sum_{k=1}^{m-1} \frac{m!}{k!(m-k)!} \abs{\mcal{M}_{k,m-k}} \\
&\leq (m+1)2^m d(\mathcal{U},\tilde{\mathcal{U}})^2 + (2^m -2) 2^{m-1} d(\mathcal{U},\tilde{\mathcal{U}})^2 \\
&= (2^{2m-1} + m\cdot 2^m ) d(\mathcal{U},\tilde{\mathcal{U}})^2. \qedhere
\end{align*}
\end{proof}

\vspace{0.2cm}

Let $M(\beta) = \frac{1}{6}\left(6 - 2e^{2\beta} - \frac{e^{4\beta}-1}{2\beta}\right)$. Then $M(\beta)$ is a decreasing function of $\beta>0$ and $M(0)=2$. The equation $M(\beta)=0$ has a unique positive solution $\beta_1\approx 0.196302 < \beta_0\approx 0.437864$. We also note that $M(\beta) < \sqrt{\frac{\Lambda(\beta)}{3}} < \alpha_2$ for $0<\beta < \beta_1$.
\begin{theorem} \label{T6.1}
Suppose system parameters and initial data satisfy 
\[  0<\beta = \kappa h < \beta_1\approx 0.196302, \quad \mathcal{U}^0\in \overline{\mcal{B}(\alpha)},\quad 0 < \alpha < M(\beta),\]
and let $\{\mathcal{U}(n)\}$ and $\{\tilde{\mathcal{U}}(n)\}$ be solutions to \eqref{D-12-0} corresponding to the initial data $\mathcal{U}^0$ and $\tilde{\mathcal{U}}^0$ respectively,
and $\, \mathcal{U}(n), \tilde{\mathcal{U}}(n)\in\overline{\mcal{B}(\alpha)}$ for all $n\geq 0$. Then the following assertions hold:
\begin{enumerate}[label=(\roman*)]
\item The relative positions synchronize exponentially fast, i.e., there exists a constant $0<C<1$ such that 
\[d(\mathcal{U}(n), \tilde{\mathcal{U}}(n)) \leq C^n d(\mathcal{U}^0,\tilde{\mathcal{U}}^0),\quad n\geq 0\]
\item There exists a unitary matrix $L^\infty \in \mathbf{U}(d)$ independent of $i$ such that
\[\lim_{n\to\infty} U_i(n) \tilde{U}_i^\dag(n) = L^\infty.\]
\end{enumerate}
\end{theorem}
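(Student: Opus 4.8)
The plan is to run a one-step monotonicity estimate for the shape functional $d(\mcal{U},\tilde{\mcal{U}})$ to obtain the exponential contraction in (i), and then to upgrade the geometric decay of $d$ into a Cauchy property for the individual relative states to obtain (ii).

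For (i), I would start from the telescoping identity \eqref{D-12-1-3}, which already writes the one-step change of $\snorm{U_iU_j^\dag-\tilde{U}_i\tilde{U}_j^\dag}_\F^2$ as $-\sum_{m\geq1}\frac{\beta^m}{m!}\mcal{T}_m$. I would insert the first-order lower bound \eqref{D-13-1} for $\mcal{T}_1$ and the bound $\abs{\mcal{T}_m}\leq(2^{2m-1}+m\,2^m)\,d(\mcal{U},\tilde{\mcal{U}})^2$ of Lemma \ref{L6.2} for $m\geq2$. Using the standing hypothesis $\mcal{D}(\mcal{U}(n)),\mcal{D}(\tilde{\mcal{U}}(n))\leq\alpha$ together with $\snorm{U_iU_j^\dag-\tilde{U}_i\tilde{U}_j^\dag}_\F^2\leq d(\mcal{U},\tilde{\mcal{U}})^2$, and summing the Taylor tail in closed form via $\sum_{m\geq2}\frac{\beta^m}{m!}2^{2m-1}=\frac{1}{2}\left(e^{4\beta}-1-4\beta\right)$ and $\sum_{m\geq2}\frac{\beta^m}{m!}\,m2^m=2\beta\left(e^{2\beta}-1\right)$, I expect the recursion
\[ d(\mcal{U}(n+1),\tilde{\mcal{U}}(n+1))^2\leq\Big(\tfrac{1}{2}-6\beta+6\alpha\beta+\tfrac{1}{2}e^{4\beta}+2\beta e^{2\beta}\Big)\,d(\mcal{U}(n),\tilde{\mcal{U}}(n))^2. \]
A direct computation shows the bracket is $<1$ precisely when $\alpha<M(\beta)$, which is exactly the hypothesis; this furnishes a contraction constant $0<C<1$, and iterating yields the exponential estimate of (i).

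For (ii), the idea is to convert the geometric decay of $d$ into convergence of the relative state $U_i(n)\tilde{U}_i^\dag(n)$ at each fixed index. The structural point I would exploit is that both flows share the same Hamiltonian $H_i$: forming the relative state so that the two free-flow factors $e^{-\mathrm{i}H_ih}$ become adjacent, they cancel and the one-step increment collapses to $\snorm{e^{\beta\Delta_i}-e^{\beta\tilde{\Delta}_i}}_\F$. By Lemma \ref{L6.1} applied to the skew-Hermitian $\Delta_i,\tilde{\Delta}_i$ and the estimate $\snorm{\Delta_i-\tilde{\Delta}_i}_\F\leq d(\mcal{U},\tilde{\mcal{U}})$ noted just above \eqref{D-13}, this gives an increment bounded by $\beta\,d(\mcal{U}(n),\tilde{\mcal{U}}(n))\leq\beta C^n d(\mcal{U}^0,\tilde{\mcal{U}}^0)$. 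The right-hand side is summable in $n$, so each relative state is Cauchy and converges to some $L_i^\infty\in\mbf{U}(d)$; independence of the limit on $i$ then follows from (i), since $d(\mcal{U}(n),\tilde{\mcal{U}}(n))\to0$ forces the relative states at distinct indices to coincide asymptotically.

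The main obstacle lies entirely in the free-flow cancellation of (ii): a naive estimate of the increment through Lemma \ref{L5.3} retains a term $\snorm{e^{-\mathrm{i}H_ih}-I}_\F=O(h)$ that does \emph{not} decay in $n$, which would only give drift rather than convergence. One must therefore genuinely use that the two flows rotate under the \emph{same} $H_i$ (and that $\Delta_i,\tilde{\Delta}_i$ are skew-Hermitian) so that the free rotations annihilate and leave only the geometrically small coupling discrepancy $\snorm{e^{\beta\Delta_i}-e^{\beta\tilde{\Delta}_i}}_\F\lesssim\beta\,d(\mcal{U}(n),\tilde{\mcal{U}}(n))$. By contrast, part (i) is routine once \eqref{D-13-1} and Lemma \ref{L6.2} are in hand; the only delicate computation there is checking that the closed-form contraction factor crosses below $1$ exactly at the threshold $\alpha=M(\beta)$.
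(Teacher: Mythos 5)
Your part (i) is exactly the paper's argument: combine \eqref{D-12-1-3}, the first-order bound \eqref{D-13-1} (with $4\mcal{D}(\mcal{U})+2\mcal{D}(\tilde{\mcal{U}})\leq 6\alpha$), and Lemma \ref{L6.2}, then sum the tail. Your closed forms $\sum_{m\geq 2}\frac{\beta^m}{m!}2^{2m-1}=\frac{1}{2}\left(e^{4\beta}-1-4\beta\right)$ and $\sum_{m\geq 2}\frac{\beta^m}{m!}m2^m=2\beta\left(e^{2\beta}-1\right)$ reproduce the paper's contraction factor $C^2=1-6\beta\left(M(\beta)-\alpha\right)$ exactly, and your bracket is indeed $<1$ precisely when $\alpha<M(\beta)$; this part is correct and identical to the paper.

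In (ii), the ``free-flow cancellation'' on which you rest the increment bound does not occur as you describe it, and this is a genuine gap --- one that, it should be said, your proposal shares with the paper's own proof, whose first displayed equality in the proof of (ii) asserts the same thing. From \eqref{D-12-2-1},
\[ U_i(n+1)\tilde{U}_i^\dag(n+1)=e^{-\mathrm{i}H_ih}\,e^{\beta\Delta_i(n)}\,\big(U_i(n)\tilde{U}_i^\dag(n)\big)\,e^{-\beta\tilde{\Delta}_i(n)}\,e^{\mathrm{i}H_ih}, \]
so the two free-flow factors are \emph{not} adjacent: they conjugate the relative state $K:=U_i(n)\tilde{U}_i^\dag(n)$. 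The one-step increment is $e^{-\mathrm{i}H_ih}\big(e^{\beta\Delta_i}Ke^{-\beta\tilde{\Delta}_i}-e^{\mathrm{i}H_ih}Ke^{-\mathrm{i}H_ih}\big)e^{\mathrm{i}H_ih}$, whose norm retains the commutator term $\snorm{e^{\mathrm{i}H_ih}Ke^{-\mathrm{i}H_ih}-K}_\F=O(h)$ --- precisely the non-decaying drift you warned against. What does collapse is the \emph{different} matrix $U_i(n+1)U_i^\dag(n)-\tilde{U}_i(n+1)\tilde{U}_i^\dag(n)=e^{-\mathrm{i}H_ih}\big(e^{\beta\Delta_i}-e^{\beta\tilde{\Delta}_i}\big)$, but its Frobenius norm equals that of the increment of $U_i\tilde{U}_i^\dag$ only when $K$ commutes with $e^{-\mathrm{i}H_ih}$; unitary invariance alone does not give this. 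A sanity check shows the step is false in general: for $N=1$ one has $\Delta_1=\tilde{\Delta}_1=0$ and $d\equiv 0$, yet $U_1(n)\tilde{U}_1^\dag(n)=e^{-\mathrm{i}nH_1h}\,U_1^0\tilde{U}_1^{0\dag}\,e^{\mathrm{i}nH_1h}$, which does not converge unless $U_1^0\tilde{U}_1^{0\dag}$ commutes with $H_1$; the same obstruction persists for $N\geq 2$ by right-translating one solution ($\tilde{U}_i^0=U_i^0V$, which satisfies all hypotheses and gives $U_i(n)\tilde{U}_i^\dag(n)=U_i(n)V^\dag U_i^\dag(n)$). So the bound $\snorm{U_i(n+1)\tilde{U}_i^\dag(n+1)-U_i(n)\tilde{U}_i^\dag(n)}_\F\leq\beta\,d(\mcal{U}(n),\tilde{\mcal{U}}(n))$ cannot be obtained from the cancellation as stated; closing (ii) requires an extra ingredient (e.g. $H_i=O$, a commutation hypothesis, or reformulating the conclusion in a conjugation-invariant way). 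Your remaining steps --- summability of $\beta C^n$, the Cauchy property, and identifying a common limit from $d\to 0$ --- are fine, but only conditional on an increment bound that neither you nor the paper actually establishes.
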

\begin{proof} 
(i)~We use \eqref{D-12-1-3} and the estimates of $\mcal{T}_m$'s (Lemma \ref{L6.2}) to get 
\begin{align}
\begin{aligned} \label{D-14-1}
&\snorm{U_i(n+1) U_j^\dag (n+1) - \tilde{U}_i(n+1) \tilde{U}_j^\dag (n+1)}_\F^2  \\
& \hspace{0.5cm} = \snorm{U_i(n) U_j^\dag (n) - \tilde{U}_i(n) \tilde{U}_j^\dag (n)}_\F^2 - \sum_{m=1}^\infty \frac{\beta^m}{m!} \mcal{T}_m \\
& \hspace{0.5cm} \leq (1-2\beta) \snorm{U_i(n) U_j^\dag (n) - \tilde{U}_i(n) \tilde{U}_j^\dag (n)}_\F^2 + \left(6\alpha \beta + \sum_{m=2}^\infty \frac{\beta^m}{m!}(2^{2m-1} + m\cdot 2^m) \right) d(\mathcal{U},\tilde{\mathcal
{U}})^2 \\
& \hspace{0.5cm} \leq \left( 1- \beta \left(6- 2 e^{2\beta} - \frac{e^{4\beta}-1}{2\beta} - 6\alpha \right) \right) d(\mathcal{U}, \tilde{\mathcal{U}})^2.
\end{aligned}
\end{align}
Under the given conditions, the assertion (a) holds with $C = \sqrt{1-6\beta(M(\beta)-\alpha)}$.\\

\noindent (ii) By Lemma \ref{L6.1}, we get
\begin{align*}
\lVert U_i(n+1) & \tilde{U}_i^\dag (n+1) - U_i(n) \tilde{U}_i^\dag (n) \rVert_\F = \snorm{U_i(n+1) U_i^\dag (n) - \tilde{U}_i(n+1) \tilde{U}_i^\dag (n)}_\F \\
&= \snorm{\exp(\beta \Delta_i(n)) - \exp(\beta \tilde{\Delta}_i(n))}_\F \\
&\leq \beta \snorm{\Delta_i(n) - \tilde{\Delta}_i(n)}_\F \leq \beta d(\mathcal{U}(n), \tilde{\mathcal{U}}(n)) \leq \beta C^n d(\mathcal{U}^0, \tilde{\mathcal{U}}^0)
\end{align*}
for all $n\geq 0$. Thus, one has
\[\lVert U_i(n+m) \tilde{U}_i^\dag (n+m)- U_i(n) \tilde{U}_i^\dag (n) \rVert_\F < \frac{\beta C^n}{1-C} d(\mathcal{U}^0, \tilde{\mathcal{U}}^0) \]
for all $n,m\geq 0$. This implies that $\{U_i(n) \tilde{U}_i^\dag (n)\}_{n \geq 0}$ is a Cauchy sequence. It follows that the following limit exists.
\[L_i^\infty := \lim_{n\to\infty} U_i(n) \tilde{U}_i^\dag (n).\]
According to (a), we have $d(\mcal{U}(n), \tilde{\mcal{U}}(n)) \to 0$ as $n\to \infty$. This implies 
\[\lim_{n\to\infty} \snorm{U_i(n) \tilde{U}_i^\dag(n) - U_j(n) \tilde{U}_j^\dag (n)}_\F = \lim_{n\to\infty} \snorm{U_i(n) U_j^\dag (n) - \tilde{U}_i(n) \tilde{U}_j^\dag (n)}_\F = 0\]
for all $i,j\in[N]$. Therefore, $L^\infty_i = L^\infty_j$ for all $i,j\in[N]$ and thus
the assertion (ii) holds.
\end{proof}

\begin{remark}
If $\beta = 0.1$, the condition above for the coupling strength $\kappa$ is equivalent to
\[\kappa > 6.67134 \cdot {\mathcal D}({\mathcal H}).\]

\end{remark}

\subsubsection{Asymptotic state-locking} \label{sec:6.1.3}
In this part, we study a sufficient framework leading to state-locking as follows.
\begin{theorem} \label{T6.2}
Suppose system parameters and initial data satisfy
\[0<\beta=\kappa h < \beta_1\approx 0.196302,\quad \frac{\mcal{D}(\mcal{H})}{\kappa}< \frac{1}{2}\left(\Lambda(\beta) M(\beta) - M(\beta)^3 \right),\quad \mathcal{U}^0\in \mcal{B}(\alpha_2), \]
and let $\{\mcal{U}(n)\}$ be a solution to \eqref{D-12-0} with initial data $\mathcal{U}^0$. Then, the following assertions hold. 
\begin{enumerate}[label=(\roman*)]
\item
The discrete-time Lohe flow $\{\mathcal{U}(n)\}_{n\geq 0}$ achieves asymptotic state-locking:
\[ \lim_{n\to\infty} U_i(n) U_j^\dag (n)\]
converges exponentially fast for all $i, j \in [N]$.

\vspace{0.2cm}

\item
Let $\{\mathcal{U}^h(n)\}_{n\geq 0}$ be a solution to \eqref{D-12-0} with time step $h$ and initial data $\mathcal{U}^0$. The uniform-in-time convergence of the discrete solution to the continuous solution occurs:
\begin{equation} \label{D-15}
\limsup_{h\to 0} \sup_{0\leq n<\infty} d(\mathcal{U}^h(n),\mathcal{U}(nh))= 0.
\end{equation}
\end{enumerate}
\end{theorem}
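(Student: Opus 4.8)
The plan is to derive both assertions from the one-step orbital-stability contraction already contained in \eqref{D-14-1}, combined with the positively invariant set of Proposition \ref{P6.1}, exploiting that \eqref{D-12-0} is autonomous. For (i), I would first note that the hypothesis $\frac{\mcal{D}(\mcal{H})}{\kappa} < \frac{1}{2}(\Lambda(\beta) M(\beta) - M(\beta)^3)$ forces $\alpha_1 < M(\beta)$: setting $g(x) := \Lambda(\beta)x - x^3$, the function $g$ is increasing on $[0,\sqrt{\Lambda(\beta)/3}]$, $M(\beta) < \sqrt{\Lambda(\beta)/3}$, and $g(\alpha_1) = \frac{2\mcal{D}(\mcal{H})}{\kappa} < g(M(\beta))$ gives $\alpha_1 < M(\beta)$. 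Hence one may fix $\alpha \in (\alpha_1, M(\beta))$; since $\mcal{U}^0 \in \mcal{B}(\alpha_2)$, Proposition \ref{P6.1}(2) yields an index $n_\alpha$ with $\mcal{U}(n) \in \overline{\mcal{B}(\alpha)}$ for all $n \geq n_\alpha$. I would then apply \eqref{D-14-1} to the pair $\mcal{U}(n)$ and its time-shifted copy $\tilde{\mcal{U}}(n) := \mcal{U}(n+1)$, which is again a solution of \eqref{D-12-0} and lies in $\overline{\mcal{B}(\alpha)}$ for $n \geq n_\alpha$. This produces $d(\mcal{U}(n+1), \mcal{U}(n+2)) \leq C\, d(\mcal{U}(n), \mcal{U}(n+1))$ with $C = \sqrt{1-6\beta(M(\beta)-\alpha)} < 1$, so the consecutive differences $\snorm{U_i(n+1)U_j^\dag(n+1) - U_i(n)U_j^\dag(n)}_\F$ decay geometrically and are summable. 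Therefore $\{U_i(n)U_j^\dag(n)\}_n$ is Cauchy and converges exponentially fast, which is exactly asymptotic state-locking.

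For (ii), I would set up a one-step error recursion for $e(n) := d(\mcal{U}^h(n), \mcal{U}(nh))$. Writing $\mcal{V}(n+1)$ for one discrete step of \eqref{D-12-0} applied to the continuous data $\mcal{U}(nh)$, the triangle inequality splits $e(n+1)$ into a stability term $d(\mcal{U}^h(n+1), \mcal{V}(n+1))$ and a consistency term $d(\mcal{V}(n+1), \mcal{U}((n+1)h))$. The stability term is bounded by $C\, e(n)$ through \eqref{D-14-1}, valid once both arguments lie in $\overline{\mcal{B}(\alpha)}$, whereas the consistency term is the local truncation error of the Lie--Trotter scheme, which is $\mcal{O}(h^2)$: it combines the splitting error $e^{h(A+B)} = e^{hA}e^{hB}+\mcal{O}(h^2)$ with the $\mcal{O}(h^2)$ error of freezing the state-dependent coefficient over one step. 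Iterating the recursion gives $\sup_n e(n) \leq C(T_0)h + \frac{\mcal{O}(h^2)}{1-C}$, where the first term comes from a standard finite-time Gronwall estimate up to a uniform-in-$h$ time $T_0$ after which both flows have entered $\overline{\mcal{B}(\alpha)}$, and the second from the geometric tail. Since $1-C \sim 3\beta(M(\beta)-\alpha) = 3\kappa(M(\beta)-\alpha)h$, the tail contributes $\frac{\mcal{O}(h^2)}{\mcal{O}(h)} = \mcal{O}(h)$, so $\sup_n e(n) = \mcal{O}(h) \to 0$, establishing \eqref{D-15}.

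The main obstacle is the interplay between the transient and the contraction in (ii): the contraction constant $C$ degenerates to $1$ as $h \to 0$, since $1-C = \mcal{O}(h)$, so a naive summation of the $\mcal{O}(h^2)$ truncation errors over the infinite horizon must be controlled precisely to land on $\mcal{O}(h)$ rather than a divergent bound. This forces me to (a) secure a time $T_0$ independent of small $h$ after which both the discrete and the continuous flows lie in the contracting ball $\overline{\mcal{B}(\alpha)}$, which requires the continuous analogue of Proposition \ref{P6.1} together with the boundedness of the discrete entering index, $n_\alpha h \to T_0$ as $h\to 0$; and to arrange that $M(\beta)-\alpha$ stays bounded away from $0$ uniformly in $h$. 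One also must (b) verify that the local truncation error is genuinely $\mcal{O}(h^2)$ in the relative-state metric $d$, including the contribution of the non-commuting free flows $e^{-\mathrm{i}H_ih}$. Establishing the finite-time estimate up to $T_0$ with an $h$-independent constant $C(T_0)$ is routine but must be in place before the tail argument can be closed.
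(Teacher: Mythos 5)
Your part (i) is essentially the paper's own argument: the authors likewise pick $\alpha\in(\alpha_1,M(\beta))$, invoke Proposition \ref{P6.1} to conclude $\mcal{U}(n)\in\mcal{B}(\alpha)$ for $n\geq n_\alpha$, and then apply the contraction of Theorem \ref{T6.1} to the flow paired with its time-shift $\tilde{\mcal{U}}(n)=\mcal{U}(n+1)$ to get geometric decay of $\snorm{U_i(n+1)U_j^\dag(n+1)-U_i(n)U_j^\dag(n)}_\F$ and hence a Cauchy sequence; your monotonicity argument with $g(x)=\Lambda(\beta)x-x^3$ merely makes explicit why the interval $(\alpha_1,M(\beta))$ is nonempty, a point the paper leaves implicit.

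For part (ii) you take a genuinely different route, and it is the more careful one. The paper's proof is a two-line telescoping bound, $d(\mathcal{U}^h(n),\mathcal{U}(nh))\leq\sum_{m=1}^n d(\mathcal{U}^h(m-1),\mathcal{U}^h(m))+\sum_{m=1}^n d(\mathcal{U}((m-1)h),\mathcal{U}(mh))$, where each sum is summed geometrically using the exponential orbital stability of each flow against its own time-shift (Theorem \ref{T6.1} for the discrete flow, \cite{H-R} for the continuous one), and the result is asserted to be $\leq C'h$. Your proposal instead runs the classical stability-plus-consistency recursion $e(n+1)\leq C\,e(n)+\mcal{O}(h^2)$. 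The difference matters precisely at the point you flag: since $1-C\sim 3\kappa(M(\beta)-\alpha)\,h$, the geometric-sum factor $\frac{1}{1-C}$ in the paper's chain is of order $h^{-1}$, while the first one-step differences $d(\mathcal{U}^h(0),\mathcal{U}^h(1))$ and $d(\mathcal{U}(0),\mathcal{U}(h))$ are only $\mcal{O}(h)$; so the displayed telescoping, taken literally with $h$-uniform constants, yields an $\mcal{O}(1)$ bound rather than the claimed $\mcal{O}(h)$, and in particular does not by itself give \eqref{D-15}. Your recursion gains the missing factor of $h$ from consistency rather than from the initial step: the local truncation error of the Lie--Trotter step, including the error from freezing $U_c$ over one step, is $\mcal{O}(h^2)$ in the metric $d$, and $\frac{\mcal{O}(h^2)}{1-C}=\mcal{O}(h)$. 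The two prerequisites you list are genuinely needed and are not discussed in the paper: a uniform-in-$h$ entering time $T_0$ into the contracting ball for both flows (for the discrete flow this follows from the proof of Proposition \ref{P6.1}, whose per-step diameter decrease is proportional to $\beta=\kappa h$, so $n_\alpha h$ stays bounded; for the continuous flow one uses the invariant-set results of \cite{H-R}), and a finite-time Gronwall estimate on $[0,T_0]$ with $h$-independent constant, which is routine since the one-step map expands $d$ by at most $1+\mcal{O}(h)$ even outside the ball. In short, your proposal is correct in outline and, in effect, supplies the quantitative bookkeeping that the paper's very brief proof of assertion (ii) suppresses.
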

\begin{proof}
\noindent (i)~There exists $\alpha \in (\alpha_1,M(\beta))$. According to Proposition \ref{P6.1}, there exists $n_\alpha\geq 0$ such that 
\[ \mcal{U}(n)\in\mcal{B}(\alpha) \quad \mbox{for all $n\geq n_\alpha$}. \]
So we may assume that $\mathcal{U}^0 \in\mcal{B}(\alpha)$. This allows us to apply Theorem \ref{T6.1}. The time-shifted sequence $\{\mathcal{U}(n+1)\}_{n\geq 0}$ is a solution to \eqref{D-12-0} with the time-shifted initial data $\mathcal{U}(1)$. Hence the exponential stability estimate in Theorem \ref{T6.1} implies 
\[\snorm{U_i (n+1) U_j^\dag (n+1) - U_i(n) U_j^\dag (n)}_\F \leq C^n d(\mathcal{U}(1),\mathcal{U}(0)),\quad n\geq 0,\]
for some constant $0<C<1$. The above estimate shows that the sequence $\{U_i(n) U_j^\dag (n)\}_{n\geq 0}$ is a Cauchy sequence.
\[\snorm{U_i (n+m) U_j^\dag (n+m) - U_i(n) U_j^\dag (n)}_\F < \frac{C^n}{1-C} d(\mathcal{U}(1),\mathcal{U}(0)),\quad n,m\geq 0.\]
Hence the sequence $\{U_i(n) U_j^\dag (n)\}_{n\geq 0}$ converges for all $i,j\in[N]$ and taking $m\to\infty$ to the above inequality shows that the convergence is exponential. \newline

\noindent (ii)~By the exponential stability estimate of the discrete Lohe matrix model (Theorem \ref{T6.1}) and the Lohe matrix model \cite{H-R}, we get
\begin{align*}
d(\mathcal{U}^h(n), \mathcal{U}(nh)) &\leq \sum_{m=1}^n d(\mathcal{U}^h(m-1), \mathcal{U}^h(m)) + \sum_{m=1}^n d(\mathcal{U}((m-1)h),\mathcal{U}(mh)) \\
&< C\left(d(\mathcal{U}^h(0), \mathcal{U}^h(1)) + d(\mathcal{U}(0), \mathcal{U}(h)) \right) \leq C'h,
\end{align*}
where $C,C'$ are positive constants. This proves uniform-in-time convergence \eqref{D-15}.
\end{proof}
\begin{remark}
Note that the discrete Lohe model exhibits asymptotic state-locking under sufficient regularity conditions. Moreover, the discrete Lohe matrix model  B \eqref{D-12-0} converges to the continuous model in the relative position sense.
\end{remark}

\subsection{The discrete Lohe matrix model C}  \label{sec:6.2}
In this subsection, we study asymptotic dynamics of the Cauchy problem to the discrete Lohe matrix model C:
\begin{equation}
\begin{cases} \label{D-16}
\displaystyle U_i (n+1) \\
\displaystyle =  \exp\left(\frac{-\mathrm{i} H_i h}{2} \right)  \exp \left(\frac{\kappa h}{2}(U_c(n) U_i^\dag (n) - U_i(n) U_c^\dag (n))\right) \exp\left(\frac{-\mathrm{i} H_i h}{2} \right) U_i(n),~~ n \geq 0, \\
U_i(0) = U^0_i, \quad i \in [N].
\end{cases}
\end{equation}

\noindent Throughout this subsection, we assume
\[\sum_{k=1}^N H_k = O,\]
so that 
\[ \norm{H_i}_\F \leq \mcal{D}(\mcal{H}) \quad \mbox{for all $i\in[N]$.} \]
\vspace{0.2cm}

\noindent We define an intermediate state $\mcal{V}(n) := (V_1(n),\cdots,V_N(n))$ via the following relations: 
\[V_i(n) := \exp\left(\frac{-\mathrm{i} H_i h}{2} \right) U_i(n), \quad n\geq 0, \quad i\in[N], \]
and some functionals:
\[V_c := \frac{1}{N} \sum_{k=1}^N V_k,\quad \vDelta_i := \frac{1}{2}(V_c V_i^\dag - V_i V_c^\dag), \quad i\in[N].\]
Then, $\mcal{V}(n)$ satisfies
\begin{equation} \label{D-16-0}
V_i(n+1) = \exp\left(-\mathrm{i} H_i h \right) \exp(\beta \Delta_i) V_i(n), \quad n\geq 0, \quad i\in[N].
\end{equation}
By Lemma \ref{L5.1} and Lemma \ref{L6.1}, we have
\begin{equation} \label{D-16-1-1}
\norm{V_i - U_i}_\F = \norm{\exp\left( \frac{-\mathrm{i} H_i h}{2} \right) - I}_\F \leq \frac{h}{2} \norm{H_i}_\F \leq \frac{h}{2} \mcal{D}(\mcal{H}), \quad i\in[N],
\end{equation}
and
\[\norm{V_i - V_j}_\F \leq \norm{U_i - U_j}_\F + \norm{\exp\left(\frac{-\mathrm{i} H_i h}{2} \right) - \exp\left(\frac{-\mathrm{i} H_j h}{2} \right)}_\F \leq \mcal{D}(\mcal{U}) + \frac{h}{2} \mcal{D}(\mcal{H}),\quad i,j\in[N].\]
Thus, one has
\[  \mcal{D}(\mcal{V})\leq \mcal{D}(\mcal{U}) + \frac{h}{2} \mcal{D}(\mcal{H}). \]
Similarly, we get 
\[ \mcal{D}(\mcal{U})\leq \mcal{D}(\mcal{V}) + \frac{h}{2} \mcal{D}(\mcal{H}) \]
and thus
\begin{equation} \label{D-16-1}
\abs{\mcal{D}(\mcal{U}) - \mcal{D}(\mcal{V})} \leq \frac{h}{2} \mcal{D}(\mcal{H}).
\end{equation}
Again by Lemma \ref{L5.1}, we have
\[\snorm{V_i V_j^\dag - \tilde{V}_i \tilde{V}_j^\dag}_\F = \snorm{U_i U_j^\dag - \tilde{U}_i \tilde{U}_j^\dag}_\F ,\quad i,j\in[N]. \]
Thus, one has 
\begin{equation} \label{D-16-2}
d(\mcal{V}, \tilde{\mcal{V}}) = d(\mcal{U}, \tilde{\mcal{U}}).
\end{equation}

\subsubsection{Orbital stability} In this part, we study time-evolution of state diameter.  Let $\{\mcal{U}(n)\}$ and $\{\tilde{\mcal{U}}(n)\}$ be solutions to the discrete Lohe matrix model C \eqref{D-16}. Since \eqref{D-16-0} is similar to the discrete Lohe matrix model B, we expect some estimates of
\[ \norm{U_i - U_j}_\F^2 (n+1) - \norm{U_i - U_j}_\F^2 (n), \quad \snorm{U_i U_j^\dag - \tilde{U}_i \tilde{U}_j^\dag}_\F^2 (n+1) - \snorm{U_i U_j^\dag - \tilde{U}_i \tilde{U}_j^\dag}_\F^2 (n),
\]
which are similar to those of the discrete Lohe matrix model B which was extensively studied in the previous subsection. Again, we define an intermediate state $W_i(n+1)$ via the relation:
\[V_i(n+1) = \exp\left(-\mathrm{i} H_i h \right) W_i(n+1), \quad i\in[N],\]
so that it satisfies
\[W_i(n+1) = \exp(\beta \Delta_i) V_i(n) \in \mathbf{U}(d),\quad i\in[N].\]
Then, we get the following analogue of \eqref{D-9}:
\begin{equation} \label{D-16-3}
\norm{W_i(n+1) - W_j(n+1)}_\F^2 - \norm{V_i(n) - V_j(n)}_\F^2 = -\sum_{m=1}^\infty \frac{\beta^m}{m!} \widehat{\mcal{A}}_m
\end{equation}
where, for $m\geq 1$,
\begin{align*}
\widehat{\mcal{A}}_m &= \text{tr}\Bigg\{ \bigg( \sum_{k=0}^m \frac{m!}{k!(m-k)!} (-1)^k \Delta_j^k \Delta_i^{m-k} \bigg) V_i V_j^\dag \Bigg\} + \text{tr} \Bigg\{ \bigg( \sum_{k=0}^m \frac{m!}{k!(m-k)!} (-1)^k \Delta_i^k \Delta_j^{m-k} \bigg) V_j V_i^\dag\Bigg\}.
\end{align*}
By Lemma \ref{L5.3} and Lemma \ref{L6.1}, one has
\begin{align}
\begin{aligned} \label{D-16-3-1}
& \snorm{V_i(n+1) - V_j(n+1)}_\F \\
& \hspace{1cm} \leq \snorm{W_i(n+1) - W_j(n+1)}_\F + \snorm{\exp(-\mathrm{i} H_i h) - \exp(-\mathrm{i} H_j h)}_\F \\
& \hspace{1cm} \leq \snorm{W_i(n+1) - W_j(n+1)}_\F + h \mcal{D}(\mcal{H}),
\end{aligned}
\end{align}
for $i,j\in[N]$. First, we use the same arguments as in the proof of Lemma \ref{L5.6} to get
\begin{equation} \label{D-16-4}
\big| \widehat{\mcal{A}}_{m+1} \big| \leq 2^m \cdot \mcal{D}(\mcal{U}) (m\mcal{D}(\mcal{U}) + 2\mcal{D}(\mcal{V})),\quad m\geq 1.
\end{equation}

\noindent Next, we consider
\[\mcal{A}_1^{\mcal{V}} := \tr{(\vDelta_i - \vDelta_j) (V_i V_j^\dag - V_j V_i^\dag)}.\]
By Lemma \ref{L5.3} and \eqref{D-16-1-1}, one has
\begin{align*}
\begin{aligned}
\snorm{\vDelta_i - \Delta_i}_\F & \leq \frac{1}{2N} \sum_{k=1}^N \Big( \snorm{V_k V_i^\dag - U_k U_i^\dag}_\F + \snorm{V_i V_k^\dag - U_i U_k^\dag}_\F \Big) \\
& \leq \frac{1}{2N} \sum_{k=1}^N \Big( \snorm{V_k - U_k}_\F + \snorm{V_i^\dag - U_i^\dag}_\F + \snorm{V_i - U_i}_\F + \snorm{V_k^\dag - U_k^\dag}_\F \Big) \\
& \leq \frac{1}{2N} \cdot N \cdot \frac{h}{2} \mcal{D}(\mcal{H}) \cdot 4 = h\mcal{D}(\mcal{H}), \quad i \in [N].
\end{aligned}
\end{align*}
Again by Lemma \ref{L5.3}, one has
\[\snorm{V_i V_j^\dag - V_j V_i^\dag }_\F \leq \snorm{V_i - V_j}_\F + \snorm{V_j^\dag - V_i^\dag}_\F \leq 2\mcal{D}(\mcal{V}),\quad i,j\in[N].\]
Combining above inequalities by Lemma 4.1, we get
\begin{align} \label{D-16-5}
\begin{aligned}
\big| \mcal{A}_1^{\mcal{V}} - \widehat{\mcal{A}}_1 \big| &\leq \snorm{(\vDelta_i - \Delta_i) - (\vDelta_j - \Delta_j)}_\F \snorm{V_i V_j^\dag - V_j V_i^\dag}_\F \\
&\leq 2\cdot h \mcal{D}(\mcal{H}) \cdot 2 \mcal{D}(\mcal{V}) = 4h \mcal{D}(\mcal{H}) \mcal{D}(\mcal{V}).
\end{aligned}
\end{align}
Therefore, by Lemma \ref{L5.5} and \eqref{D-16-5}, we have
\begin{equation} \label{D-16-6}
\widehat{\mcal{A}}_1 \geq \mcal{A}_1^{\mcal{V}} - \big| \mcal{A}_1^{\mcal{V}} - \widehat{\mcal{A}}_1 \big| \geq 2\snorm{V_i - V_j}_\F^2 - \mcal{D}(\mcal{V})^2 \snorm{V_i - V_j}_\F^2 - 4h \mcal{D}(\mcal{H})\mcal{D}(\mcal{V}) .
\end{equation}
Finally, we use \eqref{D-16-3} and the estimates of $\widehat{\mcal{A}}_m$'s \eqref{D-16-4}, \eqref{D-16-6} to get:
\begin{align} 
\begin{aligned} \label{D-16-7}
& \lVert W_i(n+ 1) - W_j(n+1) \rVert_\F^2 - \lVert V_i(n) - V_j(n) \rVert_\F^2 = -\beta \widehat{\mcal{A}}_1 - \sum_{m=2}^\infty \frac{\beta^m}{m!} \widehat{\mcal{A}}_m \\
& \begin{array}{l} \displaystyle \hspace{0.5cm} \leq - 2\beta \norm{V_i - V_j}_\F^2 + \beta\mcal{D}(\mcal{V})^4 + 4\beta h \mcal{D}(\mcal{H}) \mcal{D}(\mcal{V}) \\
\displaystyle \hspace{0.7cm}  + \sum_{m=2}^\infty \frac{\beta^m}{m!} \cdot 2^{m-1} \big((m-1) \mcal{D}(\mcal{U})^2 + 2 \mcal{D}(\mcal{U}) \mcal{D}(\mcal{V}) \big) \end{array} \qquad \text{ at }n \text{-th time step} \\
& \hspace{0.5cm} \leq -2\beta \norm{V_i - V_j}_\F^2 + \beta \mcal{D}(\mcal{V})^4 + 4 \beta^2 \cdot \frac{\mcal{D}(\mcal{H})}{\kappa} \cdot \mcal{D}(\mcal{V}) + \mcal{O}(\beta^2) \big( \mcal{D}(\mcal{U})^2 + \mcal{D}(\mcal{U}) \mcal{D}(\mcal{V}) \big),
\end{aligned}
\end{align}
where $\mcal{O}(g(\beta))$ denotes some positive smooth function $f(\beta)$ of $\beta$ with $f(\beta)/g(\beta)$ being bounded near $\beta = 0$. \newline

For $0<\beta<\frac{1}{2}$, the term $\norm{V_i(n)-V_j(n)}_\F$ in \eqref{D-16-7} can be replaced to $\mcal{D}(\mcal{V})$.

\begin{proposition} \label{P6.2}
Suppose system parameters satisfy
\[\quad 0 < \beta = \kappa h < \beta^\ast , \quad \frac{\mcal{D}(\mcal{H})}{\kappa} \leq C \beta^{1+\epsilon}, \quad \sum_{k=1}^N H_k = O, \]
where $C$ and $\epsilon$ are positive constants. Then, there exist $\beta^\ast = \beta^\ast(C,\epsilon) > 0$ and $\alpha = \alpha(\beta, \epsilon) > 0$ such that $\,\mcal{V}(n) \in \mcal{B}(\alpha)$ whenever $\{\mcal{U}(n)\}$ is a solution to \eqref{D-16} with an inital data $\mcal{U}^0$ with $\mcal{V}^0 \in \mcal{B}(\alpha)$.
\end{proposition}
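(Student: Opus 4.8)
The plan is to collapse the matrix recursion into a scalar recursion for the diameter $\mcal{D}(\mcal{V})(n)$ of the intermediate configuration and then exhibit an interval $[0,\alpha]$ that the associated one–dimensional map sends into itself, exactly as in the proof of Proposition \ref{P6.1}. All three ingredients are already available: the per-pair decay estimate \eqref{D-16-7} for $\lVert W_i(n+1)-W_j(n+1)\rVert_\F^2$, the passage from $W$ back to $V$ furnished by \eqref{D-16-3-1}, and the comparison \eqref{D-16-1} between $\mcal{D}(\mcal{U})$ and $\mcal{D}(\mcal{V})$. First I would rewrite $h\,\mcal{D}(\mcal{H}) = \beta\,\mcal{D}(\mcal{H})/\kappa \le C\beta^{2+\epsilon}$ using $h=\beta/\kappa$ and the smallness hypothesis, and likewise bound $\mcal{D}(\mcal{U}) \le \mcal{D}(\mcal{V}) + \tfrac12 C\beta^{2+\epsilon}$ via \eqref{D-16-1}. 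This turns every occurrence of $\mcal{D}(\mcal{H})$ and $\mcal{D}(\mcal{U})$ in \eqref{D-16-7} into an explicit function of $\mcal{D}(\mcal{V})$ and $\beta$.

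Next, since $0<\beta<\tfrac12$ forces $1-2\beta>0$, the map $x\mapsto(1-2\beta)x^2$ is increasing on $x\ge 0$, so the factor $\lVert V_i(n)-V_j(n)\rVert_\F^2$ on the right of \eqref{D-16-7} may be replaced by $\mcal{D}(\mcal{V})(n)^2$ uniformly in $i,j$. Writing $D:=\mcal{D}(\mcal{V})(n)$, taking the maximum over $i,j$ and feeding the result through \eqref{D-16-3-1} yields a scalar recursion
\[
\mcal{D}(\mcal{V})(n+1) \;\le\; \sqrt{(1-2\beta)D^2 + E(D,\beta)}\;+\;C\beta^{2+\epsilon}\;=:\;F(D),
\]
where $E(D,\beta)=\beta D^4 + \mcal{O}(\beta^{3+\epsilon})D + \mcal{O}(\beta^2)D^2$ collects the higher-order terms of \eqref{D-16-7} after the substitutions above.

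It then remains to choose $\alpha=\alpha(\beta,\epsilon)$ and $\beta^\ast=\beta^\ast(C,\epsilon)$ so that $F$ maps $[0,\alpha]$ into itself, which gives the invariance of $\mcal{B}(\alpha)$ by induction on $n$. Expanding the square root for small $\beta$ produces $F(D)\approx D-\beta D+\tfrac{\beta}{2}D^3+C\beta^{2+\epsilon}$, so that $F(D)-D\lesssim \beta\big(-D+\tfrac12 D^3+C\beta^{1+\epsilon}\big)$; the self-map inequality $F(D)\le\alpha$ thus reduces to arranging $D-\tfrac12 D^3\ge C\beta^{1+\epsilon}$ on the relevant range. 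Since $g(D)=D-\tfrac12 D^3$ is increasing and comparable to $D$ near the origin, taking $\alpha$ proportional to $\beta^{1+\epsilon}$ (e.g.\ $\alpha=2C\beta^{1+\epsilon}$) and $\beta^\ast$ small enough yields the trapping interval, and the cubic self-interaction, being of order $\beta^{4+3\epsilon}$ at this scale, is negligible.

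The main obstacle is the bookkeeping after the square root: one must check that the single favorable term, the linear contraction $-\beta D$, simultaneously dominates the additive forcing $C\beta^{2+\epsilon}$, the cubic self-interaction $\tfrac{\beta}{2}D^3$, and all the $\mcal{O}(\beta^2)$ and mixed $\mcal{D}(\mcal{U})\mcal{D}(\mcal{V})$ remainders, and that it does so \emph{uniformly for $D$ over the whole interval $[0,\alpha]$}, not merely at the endpoint. The coupling of the scaling $\alpha\sim\beta^{1+\epsilon}$ to the exponent $\epsilon$ in the hypothesis is precisely what renders the forcing subdominant to the contraction, and making the constant $\beta^\ast(C,\epsilon)$ consistent across every error term is the delicate part of the argument.
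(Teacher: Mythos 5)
Your proposal is correct and follows the same skeleton as the paper's proof: both collapse the estimates \eqref{D-16-7}, \eqref{D-16-3-1} and \eqref{D-16-1} into a scalar recursion for $\mcal{D}(\mcal{V})(n)$ and run an invariant-interval induction in $n$. The substantive difference is the choice of radius. The paper sets $\delta=\mcal{D}(\mcal{H})/(2\kappa)$, uses the lossy forcing bound $h\,\mcal{D}(\mcal{H})\le 2\delta\le C\beta^{1+\epsilon}$, and takes the comparatively large radius $\alpha=\beta^{\epsilon/2}$; after the concavity step the bad terms then appear as $\mcal{O}(\beta^{\epsilon/2})$, $\mcal{O}(\beta^{\epsilon})$, etc., relative to the contraction $-\beta\alpha$, which produces an invariant ball whose radius is independent of $C$ (as the statement's $\alpha=\alpha(\beta,\epsilon)$ requires) and which is far larger than yours, hence admits many more initial data. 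You instead exploit the sharper forcing $h\,\mcal{D}(\mcal{H})=\beta\,\mcal{D}(\mcal{H})/\kappa\le C\beta^{2+\epsilon}$ and pin $\alpha=2C\beta^{1+\epsilon}$ to the forcing scale; at the endpoint $D=\alpha$ the contraction margin is then $C\beta^{2+\epsilon}$ against remainders of size $\mcal{O}(\beta^{3+\epsilon})$ (the terms $\beta D^4$, $\beta^2\delta D$, $\beta^2(D+\delta)^2$ from \eqref{D-16-7} all contribute at or below this order after the square root), so your trapping argument does close for $\beta$ small. Two repairs are needed. First, your reduction of the self-map property to $D-\tfrac12 D^3\ge C\beta^{1+\epsilon}$ ``on the relevant range'' fails near $D=0$, where $F(D)>D$; this is harmless because your scalar bound $F$ is monotone increasing on $[0,\alpha]$ for $\beta<\tfrac12$, so $F(D)\le F(\alpha)\le\alpha$ and only the endpoint needs checking --- contrary to your closing remark, no uniform interior estimate of $F(D)-D$ is required, and indeed none is available. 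Second, your $\alpha=2C\beta^{1+\epsilon}$ depends on $C$, whereas the proposition asserts $\alpha=\alpha(\beta,\epsilon)$; this is cosmetic --- take instead $\alpha=\beta^{1+\epsilon/2}$ and absorb $C\beta^{1+\epsilon}\le\tfrac12\beta^{1+\epsilon/2}$ into the smallness of $\beta^\ast(C,\epsilon)$, exactly as the paper's choice $\alpha=\beta^{\epsilon/2}$ does at its own scale.
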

\begin{proof} We set
\[\delta:= \frac{\mcal{D}(\mcal{H})}{2\kappa}, \quad \alpha = \beta^{\epsilon/2}. \]
We claim that
\[\mcal{D}(\mcal{V}(n))<\alpha \implies \mcal{D}(\mcal{V}(n+1)) < \alpha,\quad n\geq 0,\]
if $\beta$ is sufficiently small. \newline

We rewrite \eqref{D-16-1} and \eqref{D-16-3-1} as follows:
\[\norm{V_i(n+1) - V_j(n+1)}_\F \leq \norm{W_i(n+1) - W_j(n+1)}_\F + 2 \delta, \quad \abs{\mcal{D}(\mcal{U}) - \mcal{D}(\mcal{V})} \leq \delta .\]
Then $\mcal{D}(\mcal{V}(n))<\alpha$ and \eqref{D-16-7} yield
\begin{align}
\begin{aligned} \label{D-16-8}
& \norm{W_i (n+1) - W_j (n+1) }_\F^2 \\
& \hspace{1cm} \leq (1-2\beta) \mcal{D}(\mcal{V})^2 + \beta \mcal{D}(\mcal{V})^4 + \mcal{O}(\beta^2) \mcal{D}(\mcal{V})^2 + \mcal{O}(\beta^2) \delta \mcal{D}(\mcal{V}) + \mcal{O}(\beta^2) \delta^2 \\
& \hspace{1cm} < (1-2\beta) \alpha^2 + \beta \alpha^4 + \mcal{O}(\beta^2) \alpha^2  + \mcal{O}(\beta^2) \delta \alpha + \mcal{O}(\beta) \delta^2 .
\end{aligned}
\end{align}
We get the following from \eqref{D-16-8} and the concavity of the square root: For all $i,j\in[N]$,
\begin{align*}
\begin{aligned}
& \norm{V_i(n+1) - V_j(n+1)}_\F \leq \norm{W_i(n+1) - W_j(n+1)}_\F + 2\delta \\
& \hspace{0.5cm} < \sqrt{\alpha^2 + \beta\left( -2\alpha^2 + \alpha^4 + \mcal{O}(\beta) \alpha^2 + \mcal{O}(\beta) \delta \alpha + \mcal{O}(\beta) \delta^2 \right)} + 2\delta \\
& \hspace{0.5cm} \leq \alpha + \alpha \beta\Big\{-1+\mcal{O}(\beta) + \frac{1}{2} \alpha^2 + \mcal{O}(\beta) \delta \alpha^{-1} + \mcal{O}(\beta)\delta^2 \alpha^{-2} + \frac{2\delta}{\beta}\alpha^{-1} \Big\} \\
& \hspace{0.5cm} \leq \alpha + \alpha\beta \Big \{-1 + \mcal{O}(\beta) + \mcal{O}(\beta^\epsilon) + \mcal{O}(\beta^{1+\frac{\epsilon}{2}}) + \mcal{O}(\beta^{3+\epsilon}) + \mcal{O}(\beta^{\frac{\epsilon}{2}}) \Big \}.
\end{aligned}
\end{align*}
This yields the desired result that $\mcal{D}(\mcal{V}(n+1)) < \alpha$ for sufficiently small $\beta$'s.
\end{proof}

\vspace{0.2cm}

Next, we study time-evolution of the following quantity:
\[  \snorm{V_i V_j^\dag - \tilde{V}_i\tilde{V}_j^\dag }_\F^2. \]
By \eqref{D-16-0}, we get the following analogue of \eqref{D-12-1-3}:
\begin{equation} \label{D-17-1}
\snorm{V_i V_j^\dag - \tilde{V}_i\tilde{V}_j^\dag }_\F^2 (n+1) - \snorm{V_i V_j^\dag - \tilde{V}_i \tilde{V}_j^\dag}_\F^2 (n) = - \sum_{m=1}^\infty \frac{\beta^m}{m!} \widehat{\mcal{T}}_m,
\end{equation}
where, for $m\geq 1$,
\[\widehat{\mcal{T}}_m = \sum_{k=0}^m \frac{m!}{k!(m-k)!} \bigg\{\tr{ \mcal{I}_k^+ V_i V_j^\dag \mcal{J}_{m-k}^- \tilde{V}_j \tilde{V}_i^\dag} + \tr{\mcal{J}_{m-k}^+ V_j V_i^\dag \mcal{I}_{k}^- \tilde{V}_i \tilde{V}_j^\dag} \bigg\}.\]

\noindent First, regarding \eqref{D-16-2} and the estimate \eqref{D-13} for ${\mcal{I}}^\pm_k$ and ${\mcal{J}}^\pm_k$, we get the following analogue of Lemma \ref{L6.2} for $\widehat{\mcal{T}}_m$:
\begin{equation} \label{D-17-2}
\big|\widehat{\mcal{T}}_m \big| \leq (2^{2m-1} + m\cdot 2^m) d(\mcal{V}, \tilde{\mcal{V}})^2,\quad m\geq 2.
\end{equation}

\noindent Next, we consider
\begin{align*}
\begin{aligned}
& \mcal{T}_1^{\mcal{V}} := \mathrm{tr}\left((\vDelta_i - \tvDelta_i) V_i V_j^\dag \tilde{V}_j \tilde{V}_i^\dag - V_i V_j^\dag (\vDelta_j - \tvDelta_j) \tilde{V}_j \tilde{V}_i^\dag \right.\\
& \hspace{1.2cm} \left. + (\vDelta_j - \tvDelta_j) V_j V_i^\dag \tilde{V}_i \tilde{V}_j^\dag - V_j V_i^\dag (\vDelta_i - \tvDelta_i) \tilde{V}_i \tilde{V}_j^\dag \right) \\
& \hspace{0.2cm} = \mathrm{tr} \left((\vDelta_i - \tvDelta_i)(V_i V_j^\dag \tilde{V}_j \tilde{V}_i^\dag - \tilde{V}_i \tilde{V}_j^\dag V_j V_i^\dag) + (\vDelta_j - \tvDelta_j) (V_j V_i^\dag \tilde{V}_i \tilde{V}_j^\dag -  \tilde{V}_j \tilde{V}_i^\dag V_i V_j^\dag) \right).
\end{aligned}
\end{align*}
By Lemma \ref{L5.1}, Lemma \ref{L5.2} and Lemma \ref{L6.1}, one has
\begin{align*}
\begin{aligned}
& \Big\|(V_i V_j^\dag - \tilde{V}_i \tilde{V}_j^\dag) - (U_i U_j^\dag - \tilde{U}_i \tilde{U}_j^\dag)\Big\|_\F \\
& \hspace{0.5cm} = \norm{(V_i V_j^\dag - \tilde{V}_i \tilde{V}_j^\dag) -  \exp\left(\frac{\mathrm{i}H_i h}{2} \right) (V_i V_j^\dag - \tilde{V}_i \tilde{V}_j^\dag) \exp\left(\frac{-\mathrm{i}H_j h}{2} \right)}_\F \\
& \hspace{0.5cm} \leq \left( \norm{I - \exp\left(\frac{\mathrm{i}H_i h}{2} \right)}_\F + \norm{I- \exp\left(\frac{-\mathrm{i}H_j h}{2} \right)}_\F \right) \norm{V_i V_j^\dag - \tilde{V}_i \tilde{V}_j^\dag}_\F \\
& \hspace{0.5cm} \leq h\cdot \mcal{D}(\mcal{H}) \cdot d(\mcal{V},\tilde{\mcal{V}}),
\end{aligned}
\end{align*}
for all $i,j\in[N]$. Thus, we get
\begin{align*}
\begin{aligned}
& \snorm{(\vDelta_i - \tvDelta_i) - (\Delta_i - \tilde{\Delta}_i)}_\F \\
& \hspace{0.5cm} = \frac{1}{2N} \sum_{k=1}^N \left\{ \norm{ (V_k V_i^\dag - \tilde{V}_k \tilde{V}_i^\dag) - (U_k U_i^\dag - \tilde{U}_k \tilde{U}_i^\dag)}_\F \right. \\
& \hspace{2cm} \left. + \norm{ (V_i V_k^\dag - \tilde{V}_i \tilde{V}_k^\dag) - (U_i U_k^\dag - \tilde{U}_i \tilde{U}_k^\dag)}_\F \right\} \\
& \hspace{0.5cm} \leq \frac{1}{2N} \cdot N \cdot d(\mcal{V}, \tilde{\mcal{V}}) \cdot h\cdot \mcal{D}(\mcal{H}) \cdot 2 = h \mcal{D}(\mcal{H}) d(\mcal{V}, \tilde{\mcal{V}}).
\end{aligned}
\end{align*}
By Lemma \ref{L5.3}, one has
\[\snorm{V_i V_j^\dag \tilde{V}_j \tilde{V}_i^\dag - \tilde{V}_i \tilde{V}_j^\dag V_j V_i^\dag}_\F \leq \snorm{V_i V_j^\dag - \tilde{V}_i \tilde{V}_j^\dag}_\F + \snorm{\tilde{V}_j \tilde{V}_i^\dag - V_j V_i^\dag}_\F \leq 2 d(\mcal{V},\tilde{\mcal{V}}),\]
for $i,j\in[N]$. We combine the above inequalities by Lemma \ref{L5.1} and Lemma \ref{L5.2} to get
\begin{align}
\begin{aligned} \label{D-17-3}
\big| \mcal{T}_1^{\mcal{V}} - \widehat{\mcal{T}}_1 \big| & \leq \snorm{(\vDelta_i - \tvDelta_i) - (\Delta_i - \tilde{\Delta}_i)}_\F \snorm{V_i V_j^\dag \tilde{V}_j \tilde{V}_i^\dag - \tilde{V}_i \tilde{V}_j^\dag V_j V_i^\dag}_\F \\
& \quad + \snorm{(\vDelta_j - \tvDelta_j) - (\Delta_j - \tilde{\Delta}_j)}_\F \snorm{V_j V_i^\dag \tilde{V}_i \tilde{V}_j^\dag -  \tilde{V}_j \tilde{V}_i^\dag V_i V_j^\dag}_\F \\
& \leq 4h \mcal{D}(\mcal{H}) d(\mcal{V},\tilde{\mcal{V}})^2.
\end{aligned}
\end{align}
Therefore, by \eqref{D-13-1} and \eqref{D-17-3}, we have
\begin{equation} \label{D-17-4}
\widehat{\mcal{T}}_1 \geq \mcal{T}_1^{\mcal{V}} - \big| \mcal{T}_1^{\mcal{V}} - \widehat{\mcal{T}}_1 \big| \geq 2\snorm{V_i V_j^\dag - \tilde{V}_i \tilde{V}_j^\dag}_\F^2 - (4\mcal{D}(\mathcal{V}) + 2\mcal{D}(\tilde{\mathcal{V}}) + 4h \mcal{D}(\mcal{H})) d(\mathcal{V},\tilde{\mathcal{V}})^2.
\end{equation}
Suppose we have a prior estimate that $\mcal{D}(\mcal{V}), \mcal{D}(\tilde{\mcal{V}}) < \alpha$ as long as solution exists. Then, we use \eqref{D-17-1} and the estimates of $\widehat{\mcal{T}}_m$'s \eqref{D-17-2}, \eqref{D-17-4} to get the following analogue of \eqref{D-14-1}:
\begin{align}
\begin{aligned} \label{D-17-5}
\snorm{V_i V_j^\dag - \tilde{V}_i\tilde{V}_j^\dag }_\F^2 (n+1) &= \snorm{V_i V_j^\dag - \tilde{V}_i \tilde{V}_j^\dag}_\F^2 (n) - \sum_{m=1}^\infty \frac{\beta^m}{m!} \widehat{\mcal{T}}_m \\
&\leq (1-2\beta) \snorm{V_i V_j^\dag - \tilde{V}_i \tilde{V}_j^\dag}_\F^2 (n) \\
& \quad + \left(6\alpha \beta + 4 h \mcal{D}(\mcal{H}) + \sum_{m=2}^\infty \frac{\beta^m}{m!}(2^{2m-1} + m\cdot 2^m) \right) d(\mathcal{V},\tilde{\mathcal
{V}})^2 \\
&\leq \left( 1- \beta \left(6- 2 e^{2\beta} - \frac{e^{4\beta}-1}{2\beta} - 6\alpha - \frac{4\mcal{D}(\mcal{H})}{\kappa} \right) \right) d(\mathcal{V}, \tilde{\mathcal{V}})^2 \\
&= \left( 1- 6\beta \left(M(\beta) - \alpha - \frac{3}{2}\cdot \frac{\mcal{D}(\mcal{H})}{\kappa} \right) \right) d(\mathcal{V}, \tilde{\mathcal{V}})^2 .
\end{aligned}
\end{align}
\subsubsection{Asymptotic state-locking} Previous estimates \eqref{D-17-5} for the variation of \\ $d(\mcal{U}(n),\tilde{\mcal{U}}(n))$ is similar to \eqref{D-14-1}. Thus, a solution $\{\mcal{U}(n)\}$ of the discrete Lohe matrix model C \eqref{D-16} is expected to enrich the asymptotic aggregation of the relative positions under sufficient conditions on the initial data $\mcal{U}^0$ and the model parameters $\kappa$, $\beta$, and $\mcal{H}$.

\begin{theorem} \label{T6.3}
Suppose system parameters and initial data satisfy 
\begin{align*}
& 0 < \beta \ll 1, \quad 0< \frac{\mcal{D}(\mcal{H})}{\kappa} \lesssim \beta^{1+\epsilon} , \quad \sum_{k=1}^N H_k = O , \mathcal{U}^0\in \overline{\mcal{B}(\alpha)},\quad 0 < \alpha < M(\beta) - \frac{2 \mcal{D}(\mcal{H})}{\kappa},
\end{align*}
and let $\{\mathcal{U}(n)\}$ and $\{\tilde{\mathcal{U}}(n)\}$ be solutions to \eqref{D-16} corresponding to the initial data $\mathcal{U}^0$ and $\tilde{\mathcal{U}}^0$ respectively,
and $\,\mathcal{U}(n), \tilde{\mathcal{U}}(n)\in\overline{\mcal{B}(\alpha)}$ for all $n\geq 0$. Then the following assertions hold:
\begin{enumerate}[label=(\roman*)]
\item The relative positions synchronize exponentially fast:~there exists a constant $0<C<1$ such that 
\[d(\mathcal{U}(n), \tilde{\mathcal{U}}(n)) \leq C^n d(\mathcal{U}^0,\tilde{\mathcal{U}}^0),\quad n\geq 0.\]
\item There exists a unitary matrix $L^\infty \in \mathbf{U}(d)$ independent of $i$ such that
\[\lim_{n\to\infty} U_i(n) \tilde{U}_i^\dag(n) = L^\infty\]
\item The discrete-time Lohe flow $\{\mathcal{U}(n)\}_{n\geq 0}$ achieves asymptotic state-locking, i.e., \[\lim_{n\to\infty} U_i(n) U_j^\dag (n)\]
converges exponentially fast for all $i, j \in [N]$.
\end{enumerate}
\end{theorem}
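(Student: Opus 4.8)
The plan is to reproduce the blueprint of Theorems \ref{T6.1} and \ref{T6.2}, now powered by the one-step contraction estimate \eqref{D-17-5} for the intermediate configuration $\mcal{V}$ together with the isometry $d(\mcal{V},\tilde{\mcal{V}}) = d(\mcal{U},\tilde{\mcal{U}})$ recorded in \eqref{D-16-2}. First I would use Proposition \ref{P6.2} (whose invariant ball underlies the standing hypothesis $\mcal{U}(n),\tilde{\mcal{U}}(n)\in\overline{\mcal{B}(\alpha)}$) to guarantee that $\mcal{D}(\mcal{V}(n))$ and $\mcal{D}(\tilde{\mcal{V}}(n))$ stay below $\alpha$ for every $n$, so that the quadratic-in-diameter bound \eqref{D-17-5} is legitimate along the whole orbit. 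Taking the maximum over $i,j\in[N]$ in \eqref{D-17-5} and substituting \eqref{D-16-2} gives
\[ d(\mcal{U}(n+1),\tilde{\mcal{U}}(n+1))^2 \leq \Big(1 - 6\beta\big(M(\beta)-\alpha-\tfrac{3}{2}\tfrac{\mcal{D}(\mcal{H})}{\kappa}\big)\Big)\, d(\mcal{U}(n),\tilde{\mcal{U}}(n))^2 . \]
Since the hypothesis $\alpha < M(\beta) - \frac{2\mcal{D}(\mcal{H})}{\kappa}$ forces the bracket to exceed $\frac{1}{2}\frac{\mcal{D}(\mcal{H})}{\kappa}>0$, the factor $C := \sqrt{1 - 6\beta(M(\beta)-\alpha-\frac{3}{2}\mcal{D}(\mcal{H})/\kappa)}$ lies in $(0,1)$ for $0<\beta\ll1$, and iterating delivers assertion (i).

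For assertion (ii) I would run the Cauchy argument behind Theorem \ref{T6.1}(ii). The object to control is the increment $\snorm{U_i(n+1)\tilde{U}_i^\dag(n+1) - U_i(n)\tilde{U}_i^\dag(n)}_\F$. Passing to the intermediate states $V_i = \exp(-\tfrac{\mathrm{i}H_ih}{2})U_i$ and $W_i = \exp(\beta\Delta_i)V_i$ and cancelling the common free-flow factors $\exp(-\tfrac{\mathrm{i}H_ih}{2})$ by the unitary invariance of the Frobenius norm (Lemma \ref{L5.1}), the increment collapses to $\snorm{\exp(\beta\Delta_i) - \exp(\beta\tilde{\Delta}_i)}_\F$ modulo corrections of size $\mcal{O}(h\mcal{D}(\mcal{H}))$ produced by the splitting. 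Lemma \ref{L6.1} bounds the principal term by $\beta\snorm{\Delta_i-\tilde{\Delta}_i}_\F \leq \beta\, d(\mcal{U}(n),\tilde{\mcal{U}}(n)) \leq \beta C^n d(\mcal{U}^0,\tilde{\mcal{U}}^0)$, which is summable in $n$; hence $\{U_i(n)\tilde{U}_i^\dag(n)\}_n$ is Cauchy and converges to some $L_i^\infty\in\mbf{U}(d)$. Because $d(\mcal{U}(n),\tilde{\mcal{U}}(n))\to0$ by (i), all the limits coincide, giving a common $L^\infty$ and finishing (ii).

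Assertion (iii) then follows from the time-shift device of Theorem \ref{T6.2}(i): the shifted sequence $\{\mcal{U}(n+1)\}_{n\geq0}$ solves \eqref{D-16} with initial datum $\mcal{U}(1)$ in the same invariant ball, so applying (i) to the pair $\{\mcal{U}(n)\}$ and $\{\mcal{U}(n+1)\}$ yields $\snorm{U_i(n+1)U_j^\dag(n+1)-U_i(n)U_j^\dag(n)}_\F \leq C^n d(\mcal{U}(1),\mcal{U}(0))$, whose geometric summability makes $\{U_i(n)U_j^\dag(n)\}_n$ Cauchy and exponentially convergent for all $i,j$. I expect the main obstacle to be the error bookkeeping peculiar to the Strang splitting: one must check that the extra $\mcal{O}(h\mcal{D}(\mcal{H}))$ and $\mcal{O}(\beta^2)$ terms already surfacing in \eqref{D-16-7} and \eqref{D-17-5} are genuinely subdominant under the scaling $\mcal{D}(\mcal{H})/\kappa\lesssim\beta^{1+\epsilon}$, so that they spoil neither the contraction in (i) nor the summability of the increments in (ii)--(iii); the positive invariance furnished by Proposition \ref{P6.2} is precisely what licenses absorbing these remainders.
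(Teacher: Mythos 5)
Your overall architecture coincides with the paper's: its proof of Theorem \ref{T6.3} is exactly the reduction you describe, namely pass to the intermediate configuration $\mcal{V}$, invoke the isometry \eqref{D-16-2} together with the contraction estimate \eqref{D-17-5}, and then repeat the arguments of Theorems \ref{T6.1} and \ref{T6.2}. However, two of your steps need repair. First, the diameter transfer: the paper does not (and cannot) use Proposition \ref{P6.2} here — its invariance statement concerns the special radius $\alpha=\beta^{\epsilon/2}$ and an assumption on $\mcal{V}^0$, not the theorem's $\alpha$. The correct device is \eqref{D-16-1}, which converts the standing hypothesis $\mcal{U}(n),\tilde{\mcal{U}}(n)\in\overline{\mcal{B}(\alpha)}$ into $\mcal{V}(n),\tilde{\mcal{V}}(n)\in\overline{\mcal{B}\big(\alpha+\frac{\mcal{D}(\mcal{H})}{2\kappa}\big)}$, \emph{not} into $\mcal{D}(\mcal{V})<\alpha$ as you claim. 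Feeding the enlarged radius into \eqref{D-17-5} yields the factor
\[
1-6\beta\Big(M(\beta)-\alpha-\tfrac{1}{2}\tfrac{\mcal{D}(\mcal{H})}{\kappa}-\tfrac{3}{2}\tfrac{\mcal{D}(\mcal{H})}{\kappa}\Big)=1-6\beta\Big(M(\beta)-\alpha-\tfrac{2\mcal{D}(\mcal{H})}{\kappa}\Big),
\]
which is precisely why the hypothesis carries the constant $2$ rather than the $\frac{3}{2}$ your bookkeeping would have needed; your claimed margin $\frac{1}{2}\frac{\mcal{D}(\mcal{H})}{\kappa}$ is consumed by this transfer, though positivity of the bracket, hence the contraction in (i), survives.

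Second, and more seriously, your treatment of (ii) as written would fail: you bound each increment by $\beta C^n d(\mcal{U}^0,\tilde{\mcal{U}}^0)$ ``modulo corrections of size $\mcal{O}(h\mcal{D}(\mcal{H}))$'' and defer checking that these are subdominant. An additive, $n$-independent error of size $h\mcal{D}(\mcal{H})$ per step is not subdominant for a Cauchy argument: it makes the telescoping sum divergent no matter how small $\mcal{D}(\mcal{H})/\kappa\lesssim\beta^{1+\epsilon}$ is. The resolution is that the corrections vanish identically. Since both solutions carry the same hamiltonians, one has the exact conjugation identity $U_i(n)\tilde{U}_i^\dag(n)=e^{\mathrm{i}H_ih/2}\,V_i(n)\tilde{V}_i^\dag(n)\,e^{-\mathrm{i}H_ih/2}$ with time-independent unitary conjugators, so by Lemma \ref{L5.1} the increments of $U_i\tilde{U}_i^\dag$ and of $V_i\tilde{V}_i^\dag$ have equal Frobenius norms; and $\mcal{V}$ obeys the model-B recursion \eqref{D-16-0}, so the computation of Theorem \ref{T6.1}(ii) applies verbatim, giving via Lemma \ref{L6.1} the bound $\beta\snorm{\Delta_i-\tilde{\Delta}_i}_\F\le\beta\,d(\mcal{U}(n),\tilde{\mcal{U}}(n))\le\beta C^n d(\mcal{U}^0,\tilde{\mcal{U}}^0)$ with no splitting remainder at all. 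Once this exact cancellation is in place, your Cauchy argument for (ii) and the time-shift device for (iii) go through exactly as in Theorems \ref{T6.1} and \ref{T6.2}, which is also the paper's route.
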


\begin{proof} According to \eqref{D-16-1}, the given condition 
\[ \mcal{U}(n), \tilde{\mcal{U}}(n)\in \overline{\mcal{B}(\alpha)} \quad \mbox{for all $n\geq 0$} \]
implies 
\[ \mcal{V}(n), \tilde{\mcal{V}}(n)\in \overline{\mcal{B}\left(\alpha + \frac{\mcal{D}(\mcal{H})}{2 \kappa} \right)} \quad \mbox{for all $n\geq 0$}. \]
Regarding \eqref{D-16-2} and estimate \eqref{D-17-5}, the proof can be done by the way we did in Theorem \ref{T6.1} and Theorem \ref{T6.2}.
\end{proof}

\section{Conclusion} \label{sec:7}
\setcounter{equation}{0}
In this paper, we have provided several discrete models corresponding to the discrete counterparts of the Lohe sphere model and Lohe matrix model on the unitary group. Aforementioned models are the high-dimensional generalizations of the Kuramoto model for synchronization. The Lohe sphere model describes the aggregate modeling on the unit sphere, whereas the Lohe matrix model describes the aggregate modeling on a unitary group. Recently, emergent dynamics and phase transition phenomena for  these continuous models have been extensively studied in literature from control theory and statistical physics communities. When one tries to simulate these models, we are forced to discretize these continuous models using suitable discretization algorithms.

\vspace{0.2cm}

For the discretization of the Lohe sphere model, we construct a scheme consisting of two steps. In the first step, we use the first-order forward Euler scheme to get the intermediate state which is not on the unit sphere. In the second step, we project the intermediate state to the unit sphere for the state in next time step. Since projection operator is contractive, all the emergent properties of intermediate states carry over to the projected states.

\vspace{0.2cm}

On the other hand, for the Lohe matrix model, we proposed three discrete models derieved from the Lie group exponential map. The first one is based on the exponential map on the underlying Lie group. This corresponds to the manifold version of the first-order Euler scheme free of projections. The other two discrete models are motivated by operator splitting theory. For a homogeneous flow with the same free flow, we show that complete state aggregation occurs exponentially fast under a suitable framework in terms of system parameters and initial data. On the other hand, for a heterogenous flow with different free flows, we can not prove emergent dynamics yet. The second and third discrete models are motivated by the Lie-Trotter splitting scheme and Strang splitting scheme.  For zero free flow, all three discrete models coincide and exhibit exponential aggregation, whereas for the last two discrete models based on operator splitting exhibit the orbital stability and the state-locking states.


\begin{thebibliography}{99}
\bibitem{A-B} Acebron, J. A., Bonilla, L. L., P\'{e}rez Vicente, C. J. P., Ritort, F. and Spigler, R.: \textit{The Kuramoto model: A simple paradigm for synchronization phenomena.} Rev. Mod. Phys. \textbf{77} (2005), 137-185.

\bibitem{A-B-F} Albi, G., Bellomo, N., Fermo, L., Ha, S.-Y., Kim, J., Pareschi, L., Poyato, D. and Soler, J.: \textit{Vehicular traffic, crowds and swarms: From kinetic theory and multiscale methods to applications and research perspectives.} Math. Models Methods Appl. Sci. {\bf 29} (2019), 1901-2005.

\bibitem{B-C-M} Benedetto, D., Caglioti, E. and Montemagno, U.: \textit{On the complete phase
    synchronization for the Kuramoto model in the mean-field limit.}    Commun. Math. Sci. {\bf 13} (2015), 1775-1786.
    
\bibitem{B-T1} Bernoff, A. J., Topaz, C. M.: \textit{Nonlocal aggregation models: a primer of swarm equilibria.} SIAM Rev. {\bf 55} (2013), 709-747.

\bibitem{Bie} Bielecki, A.: \textit{Estimation of the Euler method error on a Riemannian manifold.} Comm. Numer. Methods Engrg. {\bf 18} (2002), 757-763.

\bibitem{B-C-S} Bronski, J., Carty, T. and Simpson, S.: \textit{A matrix valued Kuramoto model.} Archived as arXiv:1903.09223.

\bibitem{B-B} Buck, J. and  Buck, E.: \textit{Biology of synchronous flashing of fireflies.} Nature {\bf 211} (1966), 562.

\bibitem{C-M-O} Celledoni, E., Marthinsen, H. and Owren, B.: \textit{An introduction to lie group integrators–basics, new developments and applications.} Journal of Computational Physics, {\bf 257} (2014), 1040--1061.

\bibitem{C-C-H} Chi, D., Choi, S.-H. and Ha, S.-Y.: \textit{Emergent behaviors of a holonomic particle system on a sphere.} J. Math. Phys. {\bf 55} (2014), 052703.

\bibitem{C-H5} Choi, S.-H. and Ha, S.-Y.: \textit{Complete entrainment of Lohe oscillators under attractive and repulsive couplings.} SIAM. J. App. Dyn. {\bf 13} (2013), 1417-1441.

\bibitem{Ch-H} Choi, Y.-P. and Ha, S.-Y.:  \textit{A simple proof of the complete consensus of discrete-time dynamical networks
with time-varying couplings.} Int. J. Numer. Anal. Model. Ser. B {\bf 1} (2010), 58--69.

\bibitem {C-H-J-K} Choi, Y., Ha, S.-Y., Jung, S. and Kim, Y.: \textit{Asymptotic formation and orbital stability of phase-locked states for the Kuramoto model.} Physica D {\bf 241} (2012), 735-754.

\bibitem{C-S} Chopra, N. and Spong, M. W.: \textit{On exponential synchronization of Kuramoto oscillators.} IEEE Trans. Automatic Control {\bf 54} (2009), 353-357.

\bibitem{D-F-M-T}Degond, P., Frouvelle, A., Merino-Aceituno, S. and Trescases, A.: \textit{Quaternions in collective dynamics.} Multiscale Model. Simul. {\bf 16} (2018), 28--77.

\bibitem{D-F-M} Degond, P., Frouvelle, A. and Merino-Aceituno, S.: \textit{A new flocking model through body attitude coordination.} Math. Models Methods Appl. Sci. {\bf 27} (2017), 1005--1049. 

\bibitem{De} DeVille, L.: \textit{Aggregation and stability for quantum Kuramoto.}  J. Stat. Phys. {\bf174} (2019), 160--187. 

\bibitem{Do} do Carmo, M. P.: \textit{Riemannian geometry.} Mathematics: Theory and Applications, Birkh\"{a}user.  Boston, Boston, MA, 1992.	

\bibitem{D-X} Dong, J.-G. and Xue, X.: \textit{Synchronization analysis of Kuramoto oscillators.}  Commun. Math. Sci. {\bf 11} (2013), 465-480.

\bibitem{D-B1}  D\"{o}rfler, F. and Bullo, F.: \textit{Synchronization in complex networks of phase oscillators: A survey.} Automatica \textbf{50} (2014), 1539-1564.

\bibitem {D-B} D\"{o}rfler, F. and Bullo, F.: \textit{On the critical coupling for Kuramoto oscillators.} SIAM. J. Appl. Dyn. Syst. \textbf{10} (2011), 1070-1099.

\bibitem{H-K-K-Z} Ha, S.-Y., Kim, D., Kim, J. and Zhang, X.: \textit{Uniform-in-time transition from discrete to continuous dynamics in the Kuramoto synchronization.} J. of Math. Phys. {\bf 60} (2019), 051508.

\bibitem{H-K-P-R} Ha, S.-Y., Kim, D., Park, H. and Ryoo, S. W.: \textit{Constants of motions for the finite-dimensional Lohe type models with frustration and applications to emergent dynamics.} Phys. D {\bf 416} (2021), Paper No. 132781, 26 pp.

\bibitem{H-K-R} Ha, S.-Y., Kim, H. W. and Ryoo, S. W.: \textit{Emergence of phase-locked states for the Kuramoto model in a large coupling regime.} Commun. Math. Sci. {\bf 14} (2016), 1073-1091.

\bibitem{H-K-P-Z} Ha, S.-Y., Ko, D., Park, J. and Zhang, X.: \textit{Collective synchronization of classical and quantum oscillators.} EMS Surveys in Mathematical Sciences {\bf 3} (2016), 209-267.

\bibitem{H-K-R} Ha, S.-Y., Ko, D. and Ryoo, S. W.: \textit{On the relaxation dynamics of Lohe oscillators on some Riemannian manifolds.} J. Stat. Phys. {\bf 172} (2018), 1427-1478.

\bibitem{H-K-R1} Ha, S.-Y., Ko, D. and Ryoo, S. W.: \textit{Emergent dynamics of a generalized Lohe model on some class of Lie groups.} J. Stat. Phys. {\bf 168} (2017), 171-207. 

\bibitem{H-L-X}  Ha, S.-Y., Li, Z. and Xue, X.: \textit{Formation of phase-locked states in a population of locally interacting Kuramoto oscillators.} J. Differential Equations {\bf 255} (2013), 3053-3070.

\bibitem{H-P1} Ha, S.-Y. and Park, H.: \textit{From the Lohe tensor model to the complex Lohe sphere model and emergent dynamics.}  SIAM J. Appl. Dyn. Syst. {\bf 19} (2020), 1312-1342.

\bibitem{H-P2} Ha, S.-Y. and Park, H.: \textit{Emergent behaviors of Lohe tensor flocks}. J. Stat. Phys. {\bf 178} (2020), 1268-1292.

\bibitem{H-R} Ha, S.-Y. and Ryoo, S.W.: \textit{On the emergence and orbital Stability of phase-locked states for the Lohe model.} J. Stat. Phys {\bf 163} (2016), 411-439.

\bibitem{Ib} Ibragimov, N. K.: \textit{Elementary Lie group analysis and ordinary differential equations.} Wiley New York, {\bf 197}, 
1999.

\bibitem{I-M-N} Iserles, A., Munthe-Kaas, H. N., Nørsett, S. P. and Zanna, A.: \textit{Lie-group methods.} Acta numerica, {\bf 9} (2000), 215--365.

\bibitem{J-C} Ja\'{c}imovi\'{c}, V. and Crnki\'{c}, A: \textit{Low-dimensional dynamics in non-Abelian Kuramoto model on the 3-sphere.} Chaos {\bf 28} (2018), 083105.

\bibitem{J-L} Jahnke, T. and Lubich, C.: \textit{Error bounds for exponential operator splittings.} BIT Numerical Mathematics, {\bf 40} (2000), 735--744.

\bibitem{Ku1} Kuramoto, Y.: \textit{Chemical oscillations, waves and turbulence.} Springer-Verlag, Berlin, 1984.

\bibitem{Ku2} Kuramoto, Y.: \textit{International symposium on mathematical problems in mathematical physics.} Lecture Notes Theor. Phys.  \textbf{30} (1975), 420.

\bibitem{Lo-0} Lohe, M. A.: \textit{Systems of matrix Riccati equations, linear fractional transformations, partial integrability and synchronization.} J. Math. Phys. {\bf 60} (2019), 072701.

\bibitem{Lo-1} Lohe, M. A.: \textit{Quantum synchronization over quantum networks.} J. Phys. A: Math. Theor. {\bf 43} (2010), 465301.

\bibitem{Lo-2} Lohe, M. A.: \textit{Non-abelian Kuramoto model and synchronization.} J. Phys. A: Math. Theor. {\bf 42} (2009), 395101.

\bibitem{Ol} Olfati-Saber, R.: \textit{Swarms on Sphere: A Programmable Swarm with Synchronous Behaviors like Oscillator Networks.}, IEEE 45th Conference on Decision and Control (CDC) (2006), 5060-5066.

\bibitem{M-T-G} Markdahl, J., Thunberg, J. and Goncalves, J.: \textit{Almost global consensus on the n-sphere.} IEEE Trans. Automat. Control {\bf 63} (2018), 1664-1675. 

\bibitem{Mu} Munthe-Kaas, H.:  \textit{Runge-kutta methods on lie groups.} BIT Numerical Mathematics, {\bf 38} (1998), 92–111.

\bibitem{Pe} Peskin, C. S.: \textit{Mathematical aspects of heart physiology.} Courant Institute of Mathematical Sciences, New York, 1975.

\bibitem{P-R} Pikovsky, A., Rosenblum, M. and Kurths, J.: \textit{Synchronization: A universal concept in
nonlinear sciences.} Cambridge University Press, Cambridge, 2001.

\bibitem{Sh} Shim, W.: \textit{On the generic complete synchronization of the discrete Kuramoto model.} Kinetic and Related
Models, {\bf 13} (2020),  979.

\bibitem{Str} Strang, G.: \textit{On the construction and comparison of difference schemes.} SIAM journal on numerical analysis, {\bf 5} (1968), 506-517.

\bibitem{St} Strogatz, S. H.: \textit{From Kuramoto to Crawford: exploring the onset of synchronization in populations of coupled oscillators.} Physica D \textbf{143} (2000), 1-20.

\bibitem{T-M} Thunberg, J., Markdahl, J., Bernard, F. and Goncalves, J.: \textit{A lifting method for analyzing distributed synchronization on the unit sphere.} Automatica J. IFAC {\bf 96} (2018), 253-258. 

\bibitem{T-B-L} Topaz, C. M., Bertozzi, A. L. and Lewis, M. A.: \textit{A nonlocal continuum model for biological aggregation.} Bull. Math. Biol. {\bf 68} (2006), 1601-1623.

\bibitem{T-B} Topaz, C. M. and Bertozzi, A. L.: \textit{Swarming patterns in a two-dimensional kinematic model for biological groups.} SIAM J. Appl. Math. {\bf 65} (2004), 152-174.

\bibitem{Tro} Trotter, H. F.: \textit{On the product of semi-groups of operators.} Proceedings of the American Mathematical Society, {\bf 10} (1959), 545--551.

\bibitem{VZ} Vicsek, T. and Zefeiris, A.: \textit{Collective motion.} Phys. Rep. {\bf 517} (2012), 71-140.

\bibitem{Wi2} Winfree, A. T.: \textit{Biological rhythms and the behavior of populations of coupled oscillators.} J. Theor. Biol. \textbf{16} (1967), 15-42.

\bibitem{Wi1} Winfree, A. T.: \textit{The geometry of biological time.} Springer, New York, 1980.

\bibitem{Z-Z} Zhang, X. and Zhu, T.: \textit{Emergent behaviors of the discrete-time Kuramoto model for generic initial configuration.} Commun. Math. Sci. {\bf 18} (2020), 535--570. 

\bibitem{Zhu} Zhu, J.: \textit{Synchronization of Kuramoto model in a high-dimensional linear space.} Physics Letters A {\bf 377} (2013), 2939-2943.

\end{thebibliography}
\end{document}